\title{Entropy compression method applied to graph
  colorings\thanks{This research is partially supported by the ANR
    EGOS, under contract ANR-12-JS02-002-01.}}
\author[a]{Daniel Gonçalves}
\author[b]{Mickaël Montassier}
\author[c]{Alexandre Pinlou}
\affil[a]{{\small CNRS, LIRMM}}
\affil[b]{{\small University of Montpellier, LIRMM}}
\affil[c]{{\small University Paul-Valéry Montpellier, LIRMM\medskip}}
\affil[ ]{{\small 161 rue Ada, 34095 Montpellier Cedex 5, France}}
\affil[ ]{\rm\footnotesize\{\url{daniel.goncalves, mickael.montassier, alexandre.pinlou}\}\url{@lirmm.fr}}
\date{January 22, 2014}
\newenvironment{proof}{\par \noindent \textbf{Proof.} }{\hfill$\Box$\medskip}
\newenvironment{proofof}[1]{\par \noindent \textbf{Proof of #1.}}{\hfill$\Box$\medskip} 
\newtheorem{theorem}{Theorem}
\newtheorem{corollary}[theorem]{Corollary}
\newtheorem{definition}[theorem]{Definition}
\newtheorem{lemma}[theorem]{Lemma}
\newtheorem{remark}[theorem]{Remark}
\newtheorem{claim}[theorem]{Claim}
\newcommand{\smallO}[1]{\ensuremath{\mathop{}\mathopen{}o\mathopen{}\left(#1\right)}}
\def\paren#1{\left( #1 \right)}
\def\floor#1{\left\lfloor #1 \right\rfloor}
\def\ceil#1{\left\lceil #1 \right\rceil}
\newcommand{\set}[1]{\left\{#1\right\}}
\newcommand{\ac}{\textsc{AcyclicColoring\_G}\xspace}
\newcommand{\acb}{\textsc{AcyclicColoring-V2\_G}\xspace}
\newcommand{\acm}{\textsc{AcyclicColoringGamma\_G}\xspace}
\newcommand{\gc}{\textsc{Coloring\_G}\xspace}
\newcommand{\algo}[1]{"\texttt{#1}"}
\newcommand{\nbmaxcolmadgamma}[1]{ 1+ \Delta \left(1+\sqrt{2#1+4}  \right)}
\newcommand{\nbmaxcolgamma}{\nbmaxcolmadgamma{\gamma}}
\newcommand{\nbmaxcoldelta}{\frac32\Delta^\frac43 + 5 \Delta -14}
\newcommand{\nbmaxcoldeltabis}{\frac32\Delta^\frac43 + \Delta +\frac{8\Delta^\frac43}{\Delta^\frac23-4}+1}
\newcommand{\ovphi}{\overline{\varphi}}
\newcommand{\jset}{{j\cdot{\rm Set}}}
\DeclareMathOperator{\type}{\mathscr{T}}
\DeclareMathOperator{\class}{\mathscr{C}}
\DeclareMathOperator{\NUV}{\tt NextUncoloredElement}
\DeclareMathOperator{\ONBE}{\tt BadEventClass}
\DeclareMathOperator{\USBE}{\tt UncolorSetBadEvent}
\DeclareMathOperator{\RBE}{\tt RecoverBadEvent}
\DeclareMathOperator{\BET}{\tt BadEventType}
\begin{document}

\maketitle

\begin{abstract}
  Based on the algorithmic proof of Lov\'asz local lemma due to Moser
  and Tardos, the works of Grytczuk {\em et al.} on words, and
  Dujmovi\'c {\em et al.} on colorings, Esperet and
  Parreau developed a framework to prove upper bounds for several
  chromatic numbers (in particular acyclic chromatic index, star
  chromatic number and Thue chromatic number) using the so-called
  \emph{entropy compression method}.
  
  Inspired by this work, we propose a more general framework and a
  better analysis. This leads to improved upper bounds on chromatic
  numbers and indices. In particular, every graph with maximum degree
  $\Delta$ has an acyclic chromatic number at most
  $\frac{3}{2}\Delta^{\frac43} + O(\Delta)$.  Also every planar graph
  with maximum degree $\Delta$ has a facial Thue choice number at most
  $\Delta + O(\Delta^\frac 12)$ and facial Thue choice index at most
  $10$.
\end{abstract}

\section{Introduction}

In the 70's, Lov\'asz introduced the celebrated \emph{Lov\'asz Local
  Lemma} (LLL for short) to prove results on 3-chromatic
hypergraphs~\cite{EL75}.  It is a powerful probabilistic method to
prove the existence of combinatorial objects satisfying a set of
constraints.  Since then, this lemma has been used in many occasions.
In particular, it is a very efficient tool in graph coloring to
provide upper bounds on several chromatic
numbers~\cite{AGHR02,ASZ01,FRR04,GP05,Hat05,HV+08,MR97,MR98}. Recently
Moser and Tardos~\cite{MT10} designed an algorithmic version of LLL by
means of the so-called {\em Entropy Compression Method}. This method
seems to be applicable whenever LLL is, with the benefits of providing
tighter bounds. Using ideas of Moser and Tardos~\cite{MT10}, Grytczuk
{\em et al.}~\cite{GKM13} proposed new approaches in the old field of
nonrepetitive sequences. Inspired by these works, Dujmovik {\em et
  al}~\cite{DJ+12} gave a first application of the entropy compression
method in the area of graph colorings (on Thue vertex coloring and
some of its game variants). As the approach seems to be extendable to
several graph coloring problems, Esperet and Parreau~\cite{EP12}
developed a general framework and applied it to acyclic edge-coloring,
star vertex-coloring, Thue vertex-coloring, each time improving the
best known upper bound or giving very short proofs of known bounds.
In the continuity of these works, we provide a more general method and
give new tools to improve the analysis. As application of that method,
we obtain some new upper bounds on some invariants of graphs, such as
acyclic choice number, facial Thue chromatic number/index, ...

The paper is organized as follows. In Section~\ref{sec:acyclic}, we
present the method and apply it to acyclic vertex coloring. It will be
the occasion of providing improved bounds (in terms of the maximum
degree). Then, in Sections~\ref{sec:general_la_vraie} and
\ref{sec:general}, we describe the general method and provide its
analysis. Finally, Section~\ref{sec:application} is dedicated to
the applications of that method.


\section{Acyclic coloring of graphs}\label{sec:acyclic}

A {\em proper coloring} of a graph is an assignment of colors to the
vertices of the graph such that two adjacent vertices do not use the
same color. A {\em $k$-coloring} of a graph $G$ is a proper coloring
of $G$ using $k$ colors ; a graph admitting a $k$-coloring is said to
be {\em $k$-colorable}. An {\em acyclic coloring} of a graph $G$ is a
proper coloring of $G$ such that $G$ contains no bicolored cycles ; in
other words, the graph induced by every two color classes is a forest.
Let $\chi_a(G)$, called the \emph{acyclic chromatic number}, be the
smallest integer $k$ such that the graph $G$ admits an acyclic
$k$-coloring.

Acyclic coloring was introduced by Grünbaum \cite{Gru73}. In
particular, he proved that if the maximum degree $\Delta$ of $G$ is at
most $3$, then $\chi_a(G) \le 4$. Acyclic coloring of graphs with
small maximum degree has been extensively
studied~\cite{Bur79,DHN10,FR08,Fie13, H11,KS11, YS09,YS09+,YS11} and the
current knowledge is that graphs with maximum degree $\Delta \le 4,5$,
and $6$, respectively verify $\chi_a(G) \le 5,7$, and
$11$~\cite{Bur79,KS11,H11}. For higher values of the maximum degree, Kostochka
and Stocker~\cite{KS11} showed that $\chi_a(G) \le 1 + \left
  \lfloor\frac{(\Delta+1)^2}{4}\right \rfloor$.  Finally, for large
values of the maximum degree, Alon, McDiarmid, and Reed \cite{AMR91}
used LLL to prove that every graph with maximum degree $\Delta$
satisfies $\chi_a(G)\le \left\lceil 50 \Delta^{4/3} \right\rceil$.
Moreover they proved that there exist graphs with maximum degree
$\Delta$ for which 
$\chi_a=\Omega\left(\frac{\Delta^{4/3}}{(\log \Delta)^{4/3}}\right)$.  Recently, the upper bound
was improved to $\left \lceil6.59\Delta^{\frac43} + 3.3\Delta
\right\rceil$ by Ndreca et al.~\cite{NPS12} and then to $2.835
\Delta^{\frac43} + \Delta$ by Sereni and Volec~\cite{SV13}.
  
We improve this upper bound (for large $\Delta$) by a constant factor.
\begin{theorem}\label{thm:acy-delta}
Every graph $G$ with maximum degree $\Delta\ge 24$ is such that 
$$\chi_a(G) <\min\set{\nbmaxcoldelta,\quad\nbmaxcoldeltabis}.$$
\end{theorem}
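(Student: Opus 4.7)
The plan is to apply the entropy compression method in the spirit of Moser--Tardos and Esperet--Parreau, but with a counting refinement tailored to acyclic coloring. Set $k$ to be the value $\min\set{\nbmaxcoldelta,\,\nbmaxcoldeltabis}$ (rounded up). We design a randomized algorithm that attempts to construct an acyclic $k$-coloring of $G$ and prove, via an injectivity-plus-counting argument, that it terminates with positive probability; this suffices to establish the existence of such a coloring.

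The algorithm processes the vertices in a fixed order, choosing at each step the first uncolored vertex $v$ and assigning it a color drawn uniformly at random from $\{1,\ldots,k\}$. After this assignment, we check whether $v$ is involved in a \emph{bad event}: either $v$ has a previously colored neighbor of the same color (a proper-coloring violation), or $v$ lies on a bichromatic cycle of some even length $2\ell\ge 4$. If so, we uncolor the non-$v$ vertices of the offending structure and append to a \emph{record} the type of the bad event together with structural data identifying the cycle (e.g.\ the sequence of neighbors walked along it).

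The central combinatorial observation is that the record, together with the current partial coloring, determines all random choices made so far; hence the map from random sequences of length $t$ to pairs (partial coloring, record) is injective. In particular, the total number of valid records of length $t$ is at most $k^t$. To exploit this, one bounds the number of possible record entries corresponding to a bad event of length $2\ell$ by the number of such cycles through $v$, roughly $\Delta\,(\Delta-1)^{2\ell-3}$ (a walk of $2\ell-2$ neighbor-choices), while the uncoloring releases $2\ell-2$ random draws from the record, since their values are reconstructible from the final coloring. This yields a generating-function inequality of the form
$$\sum_{\ell\ge 2}\frac{\Delta\,(\Delta-1)^{2\ell-3}}{k^{2\ell-2}}\;\le\; 1,$$
together with a similar term for proper-coloring conflicts, whose critical threshold determines the smallest admissible~$k$.

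The main obstacle is the optimization that yields the constant $\tfrac32$ in front of $\Delta^{4/3}$. It requires carefully separating the dominating short-cycle contribution from the long-cycle tail, choosing the right threshold around $\ell\sim\Delta^{1/3}$, and bounding what is left. The hypothesis $\Delta\ge 24$ ensures $\Delta^{2/3}-4$ is positive and bounded away from zero, so that the tail series converges with room to spare. The two alternative upper bounds then correspond to two complementary ways of closing the sum: $\nbmaxcoldelta$ follows from a looser but cleaner tail estimate, whereas $\nbmaxcoldeltabis$ keeps the exact expression $\tfrac{8\Delta^{4/3}}{\Delta^{2/3}-4}$ and is sharper for very large $\Delta$; taking the minimum gives the stated bound.
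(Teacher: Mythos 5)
Your outline misses the idea that actually produces the $\Delta^{4/3}$ bound. With only two kinds of bad events (monochromatic edges and bicolored $2\ell$-cycles), the accounting cannot beat $\Theta(\Delta^{3/2})$: a bicolored $4$-cycle event frees only $s_2=2$ color choices, while the number of $4$-cycles through the current vertex $v$ can be as large as $\binom{\Delta}{2}(\Delta-1)=\Theta(\Delta^3)$ (e.g.\ in $K_{\Delta,\Delta}$), so the $\ell=2$ term of your inequality already forces $k\gtrsim\Delta^{3/2}$. Your stated count $\Delta(\Delta-1)^{2\ell-3}$ for $2\ell$-cycles through $v$ is in fact off by a factor of $\Delta$ (a $2\ell$-cycle through $v$ has $2\ell-1$ unknown vertices, giving $\tfrac12\Delta(\Delta-1)^{2\ell-2}$ walks); with your count the inequality would already be satisfiable with $k=O(\Delta)$, contradicting the known lower bound $\chi_a=\Omega\bigl(\Delta^{4/3}/(\log\Delta)^{4/3}\bigr)$ of Alon, McDiarmid and Reed. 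No choice of threshold ``$\ell\sim\Delta^{1/3}$'' for splitting the tail repairs this, because the obstruction sits entirely in the shortest cycles.

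The paper's proof circumvents this with \emph{special pairs}: for each vertex $u$, the $\alpha\Delta^{4/3}$ distance-two partners $v$ of largest codegree $\deg(u,v)$ form $S(u)$, and an extra bad event (uncoloring one vertex, with at most $C_S=\alpha\Delta^{4/3}$ classes) forbids a monochromatic special couple $(v,u)$. This allows one to count only $4$-cycles through $v$ avoiding $S(v)$, of which there are at most $\Delta^{8/3}/(8\alpha)$, balancing everything at $\kappa\approx\Delta^{4/3}$. For the bound $\nbmaxcoldelta$ the paper additionally replaces all long bicolored cycles by a single event type, bicolored paths on six vertices (uncoloring four vertices, $C_P=\tfrac12\Delta(\Delta-1)^4$), which is what yields the constant $\tfrac32$ and the $5\Delta$ term with $\alpha=\tfrac12$, $X=2/\Delta^{4/3}$ and $\Delta\ge24$; for $\nbmaxcoldeltabis$ it keeps all $2k$-cycle events but again uses specialness (of the pair $\{u_1,u_{2k-1}\}$) to reduce their class count to $\Delta^{2k-4/3}$, producing the tail term $\tfrac{8\Delta^{4/3}}{\Delta^{2/3}-4}$. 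Your injectivity-plus-counting framework and the final generating-function step are consistent with the paper's machinery (its Theorem~\ref{thm:nb_of_rec}), but without the special-pair events and either the $6$-path device or the specialness saving on long cycles, the claimed bounds are not reachable by the argument you describe.
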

At the end of
Section \ref{subsubsec:algo} (see Remark \ref{re:eminem}), we give a
method to refine these upper bounds, improving on Kostochka and Stocker's
bound as soon as $\Delta\ge 27$.

Alon, McDiarmid, and Reed~\cite{AMR91} also considered the acyclic
chromatic number of graphs having no copy of $K_{2,\gamma +1}$ (the
complete bipartite graph with partite sets of size 2 and $\gamma+1$)
in which the two vertices in the first class are non-adjacent. Let
${\cal K}_\gamma$ be the familly of such graphs. Such structure 
contains many cycles of length~$4$ and they are an obstruction to get
an upper bound on the acyclic chromatic number linear in $\Delta$. Again using
LLL, they proved that every graph $G\in {\cal K}_\gamma$ with maximum
degree $\Delta$ satisfies $\chi_a(G) \le \lceil 32 \sqrt{\gamma}
\Delta\rceil$. 

Using similar techniques as
for Theorem~\ref{thm:acy-delta}, we obtain:
\begin{theorem}\label{thm:gamma}
  Let $\gamma\ge 1$ be an integer and $G\in {\cal K}_\gamma$ with
  maximum degree $\Delta$. We have
  $\chi_a(G) \le  \nbmaxcolgamma$.
\end{theorem}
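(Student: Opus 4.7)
The plan is to run the entropy compression argument used for the mad-parameterised version of this bound (the result $\chi_a(G) \le \nbmaxcolmadgamma{M}$ for graphs with maximum average degree at most $M$) and to observe that, on $\mathcal{K}_\gamma$, the parameter $M$ may be replaced consistently by $\gamma$. Concretely, I would invoke the algorithm \acm with palette size $k = \lceil \nbmaxcolgamma \rceil$: vertices are coloured one at a time with an independent uniform colour from $[k]$, and whenever a monochromatic edge or a bicoloured cycle appears, the vertices at fault are uncoloured. Comparing the number of possible execution records of length $T$ with the $k^T$ sequences of random choices then forces, for $k$ above the stated threshold, termination with positive probability in an acyclic $k$-colouring.

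The per-step count of bad-event types splits into two contributions. Monochromatic edges give at most $\Delta$ types and are insensitive to the structural hypothesis on $G$. Bicoloured cycles of length $\ell \ge 4$ require a combinatorial bound, and the dominant contribution---the source of the $\sqrt{2\gamma+4}$ term---comes from $4$-cycles through the freshly coloured vertex. This is where the $\mathcal{K}_\gamma$ hypothesis enters: if $u_1,u_2 \in N(v)$ are non-adjacent, then $(u_1,u_2)$ is itself a non-adjacent pair, and so they share at most $\gamma$ common neighbours (otherwise $\{u_1,u_2\}$ together with $\gamma+1$ common neighbours would realise the forbidden $K_{2,\gamma+1}$ with non-adjacent first part). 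Summing over pairs of neighbours of $v$ yields a $4$-cycle count of the very shape appearing in the mad analysis, with $\gamma$ in the role of $M$; longer bicoloured cycles admit the same substitution, since the $\mathcal{K}_\gamma$ hypothesis likewise controls the $K_{2,\gamma+1}$-free substructures that arise when one counts paths of length $3$ and beyond between two vertices.

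The main obstacle is verifying that this substitution is uniform throughout the analysis, namely that every place where $M$ entered the mad bookkeeping---the $4$-cycle count, the terms coming from cycles of length $\ell \ge 5$, and any auxiliary edge-density estimate---admits the analogue using $\gamma$ under the $K_{2,\gamma+1}$-free hypothesis. Once this is checked, the entropy compression inequality is identical to the one in the mad case with $M$ replaced by $\gamma$, and choosing $k = \lceil \nbmaxcolgamma \rceil$ yields $\chi_a(G) \le \nbmaxcolgamma$, as claimed.
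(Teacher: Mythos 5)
Your proposal is correct and takes essentially the same route as the paper: the same algorithm (Algorithm~\ref{algo:acm}) with the same two families of bad events (monochromatic edges and bicoloured $2k$-cycles), the same record-versus-input entropy-compression comparison, and the same use of the $K_{2,\gamma+1}$-freeness of non-adjacent pairs to bound the number of relevant cycles through the freshly coloured vertex, which is exactly how the paper gets $C_1=\Delta$ and $C_k=\tfrac12\gamma\Delta^{2k-2}$ (citing Lemma~3.2 of Alon--McDiarmid--Reed) before optimising at $x=\tfrac1\Delta\sqrt{2/(\gamma+2)}$. The only differences are cosmetic: the paper has no mad-parameterised theorem to substitute into---it carries out the $\gamma$-analysis directly---and it concludes deterministically via injectivity of the input-to-output map rather than via positive probability, but your counts coincide with its.
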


As it is simpler, let us start with the proof of
Theorem~\ref{thm:gamma} that will serve as an educational example of
the entropy compression method.

\subsection{Graphs with restrictions on $K_{2,\gamma+1}$'s}\label{ssec:acyclic-gamma}

We prove Theorem~\ref{thm:gamma} by contradiction. Suppose there
exists a graph $G \in {\cal K}_\gamma$ with maximum degree $\Delta$
such that $\chi_a(G) > \nbmaxcolgamma$. We define an algorithm that
``tries'' to acyclically color $G$ with $\kappa = \nbmaxcolgamma$
colors. Define a total order $\prec$ on the vertices of $G$.

\subsubsection{The algorithm}

Let $V\in\set{1,2,\ldots,\kappa}^t$ be a vector of length $t$, for some
arbitrarily large $t \gg n=|V(G)|$. Algorithm~\acm (see
\vpageref[below]{algo:acm}) takes the 
vector $V$ as input and returns a partial acyclic coloring $\varphi :
V(G) \to \set{\bullet,1,2,\ldots,\kappa}$ of $G$ ($\bullet$ means that
the vertex is uncolored) and a text file $R$ that is called a
\emph{record} in the remaining of the paper. The acyclic coloring
$\varphi$ is necessarily partial since we try to color $G$ with a
number of colors less than its acyclic chromatic number.
For a given vertex $v$ of $G$, we denote by $N(v)$ the set of
neighbors of $v$.

\begin{algorithm}
  \DontPrintSemicolon
  \LinesNumbered
  
  \SetKwInOut{Input}{Input}
  \SetKwInOut{Output}{Output}
  \Input{$V$ (vector of length $t$).}
  \Output{($\varphi$, $R$).\vspace{.2cm}}
  
  \For{ all $v$ in $V(G)$}{
    $\varphi(v)\gets \bullet$ 
  }
  $R\gets newfile()$

  \For{$i\gets 1$ \KwTo $t$}{

    Let $v$ be the smallest (w.r.t. $\prec$) uncolored vertex of $G$
    
    $\varphi(v) \gets V[i]$
    
    Write "\texttt{Color $\backslash$n}" in $R$
    
    \If{$\varphi(v)=\varphi(u)$ \rm{for} $u\in N(v)$} 
    {
      \tcp{Proper coloring issue}
      $\varphi(v) \gets \bullet$
      
      Write "\texttt{Uncolor, neighbor $u\ \backslash$n}" in $R$
      
    }
    \ElseIf{$v$ belongs to a bicolored cycle of length $2k$ ($k\ge 2$), say $(v=u_1,\ldots,u_{2k})$} 
    {
      \tcp{Bicolored cycle issue}
      
      \For{$j\gets 1$ \KwTo $2k-2$}{ $\varphi(u_j) \gets \bullet$ }
      Write "\texttt{Uncolor, $2k$-cycle $(v=u_1,\ldots,u_{2k})\ 
        \backslash$n}" in $R$ } } \Return ($\varphi$, $R$)
  \caption{\acm\label{algo:acm}}
\end{algorithm}

Algorithm \acm runs as follows. Let $\varphi_i$ be the partial
coloring of $G$ after $i$ steps (at the end of the $i^{\rm th}$ loop).
At Step $i$, we first consider $\varphi_{i-1}$ and we color the
smallest uncolored vertex $v$ with $V[i]$ (line 6 of
the algorithm).  We then verify whether one of the 
following types bad events happens:

\begin{enumerate}[Event 1:]
\item $G$ contains a monochromatic edge $vu$ for some $u$ (line 8 of the
  algorithm) ;
\item[Event $k$:] $G$ contains a bicolored cycle of length $2k$
  $(v=u_1,u_2,\ldots,u_{2k})$ (line 11 of the algorithm).
\end{enumerate}
If such events happen, then we uncolor some vertices (including $v$) in
order that none of the two previous events remains. 
Clearly, $\varphi_i$ is a partial acyclic coloring of $G$. Indeed,
since Event~1 is avoided, $\varphi_i$ is a proper coloring and since
Event~2 is avoided, $\varphi_i$ is acyclic.

\bigskip
\begin{proofof}{Theorem~\ref{thm:gamma}}
  Let us first note that the function defined by Algorithm \acm is
  injective. This comes from the fact that from each output of the
  algorithm, one can determine the corresponding input by
  Lemma~\ref{lem:vo2}.
  Now we obtain a contradiction by showing that the number of possible
  outputs is strictly smaller than the number of possible inputs when
  $t$ is chosen large enough. The number of possible
  inputs is exactly $\kappa^t$ while the number of possible outputs is
  $\smallO{\kappa^t}$, as it is at most $(1+\kappa)^n\times
  \smallO{\kappa^t}$. Indeed, there are at most $(1+\kappa)^n$ possible
  partial $\kappa$-colorings of $G$ and there are at most
  $\smallO{\kappa^t}$ possible records by Lemma~\ref{lem:r2}. Therefore,
  assuming the existence of a counterexample $G$ leads us to a
  contradiction. That concludes the proof of Theorem~\ref{thm:gamma}.
\end{proofof}


\subsubsection{Algorithm analysis}

Recall that $\varphi_i$ denotes the partial acyclic coloring obtained
after $i$ steps. Let us denote by $\ovphi_i \subset V(G)$ the set of
vertices that are colored in $\varphi_i$. Let also $v_i$, $R_i$ and
$V_i$ respectively denote the current vertex $v$ of the $i^{\rm th}$
step, the record $R$ after $i$ steps, and the input vector $V$
restricted to its $i$ first elements.  Observe that as $\varphi_i$ is
a partial acyclic $\kappa$-coloring of $G$, and as $G$ is not acyclically
$\kappa$-colorable, we have that $\ovphi_i \subsetneq V(G)$, and thus 
$v_{i+1}$ is well defined. This also implies that $R$ has $t$
\algo{Color} lines. Finally observe that $R_i$ corresponds to the lines
of $R$ before the $(i+1)^{\rm th}$ \algo{Color} line.

\begin{lemma}\label{lem:vo2}
  One can recover $V_i$ from $(\varphi_i,R_i)$.
\end{lemma}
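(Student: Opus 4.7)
The plan is to describe an inverse procedure that, given $(\varphi_i, R_i)$, deterministically reconstructs $V_i$; this will show that the algorithm is injective (as already used in the proof of Theorem~\ref{thm:gamma}). I would run the inverse in two stages, a forward structural pass followed by a backward color pass.

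In the first stage I use only $R_i$ (together with $G$ and the order $\prec$) to recover the sequence $v_1,\ldots,v_i$ of vertices on which the algorithm acts. I parse $R_i$ into $i$ blocks, one per step: each block begins with a \algo{Color} line, possibly followed by an \algo{Uncolor} line. Starting from the all-uncolored structure and ignoring actual colors, I apply the structural effect of each block. A solo \algo{Color} block marks the current smallest uncolored vertex as colored, and that vertex is $v_j$. A \algo{Color}/\algo{Uncolor, neighbor $u$} block leaves the structure unchanged, and $v_j$ is still the current smallest uncolored vertex (since the step colors and then uncolors the same vertex). A \algo{Color}/\algo{Uncolor, $2k$-cycle $(v=u_1,\ldots,u_{2k})$} block sets $v_j := u_1$ (read directly from the record) and marks $u_1,\ldots,u_{2k-2}$ as uncolored.

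In the second stage I initialize $\psi := \varphi_i$ and process the blocks in reverse order, maintaining the invariant that $\psi = \varphi_j$ at the start of the reverse of step $j$. For a solo \algo{Color} block I set $V[j] := \psi(v_j)$ and then uncolor $v_j$ in $\psi$. For a \algo{Color}/\algo{Uncolor, neighbor $u$} block I set $V[j] := \psi(u)$ (the color that triggered the monochromatic-edge conflict), leaving $\psi$ unchanged. For a \algo{Color}/\algo{Uncolor, $2k$-cycle} block I set $V[j] := \psi(u_{2k-1})$ (the odd-indexed vertices of the bicolored cycle share $v$'s color), and I restore $\psi$ by recoloring $u_3, u_5,\ldots,u_{2k-3}$ with $\psi(u_{2k-1})$ and $u_2, u_4,\ldots,u_{2k-2}$ with $\psi(u_{2k})$, leaving $u_1$ uncolored as it is in $\varphi_{j-1}$. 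After all $i$ reverse steps, $\psi$ equals the all-uncolored coloring $\varphi_0$, confirming consistency.

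The main obstacle is the solo \algo{Color} case of the first stage: the record does not explicitly name $v_j$, so its correct identification relies on the algorithm's ``smallest uncolored'' rule, which I can only apply in forward order (one needs the state \emph{before} step $j$, not after). This is precisely why I split the inversion into a forward pass for the $v_j$'s and a backward pass for the colors. The other cases follow routinely from the code of Algorithm~\acm, since the record contains either $u$ or the cycle explicitly, and the alternating two-color structure of a bicolored cycle lets me recover the two colors involved from the still-colored vertices $u_{2k-1}$ and $u_{2k}$.
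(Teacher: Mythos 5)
Your proposal is correct and is essentially the paper's argument: the paper proves the lemma by induction on $i$, which amounts to exactly your backward color pass, and its recovery of $\ovphi_{i-1}$ and $v_i$ ``by reading $R_{i-1}$'' is your forward structural pass made explicit. The case analysis and the key formulas coincide ($V[i]=\varphi_i(v_i)$, $\varphi_i(u)$, or $\varphi_i(u_{2k-1})$, and the parity rule recoloring $u_2,\dots,u_{2k-2}$ from the still-colored $u_{2k-1},u_{2k}$), so only the packaging (explicit two-pass loop versus induction) differs.
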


\begin{proof} By induction on $i$. Trivially, $V_0$ (which is empty) 
  can be recovered from $(\varphi_0,R_0)$.  Consider now
  $(\varphi_i,R_i)$ and let us try to recover $V_i$. It is thus
  sufficient to recover $R_{i-1}$, $\varphi_{i-1}$, and $V[i]$.  As
  observed before, to recover $R_{i-1}$ from $R_i$ it is sufficient to
  consider the lines before the last (i.e. the $i^{\rm th}$)
  \algo{Color} line.  Then reading $R_{i-1}$, one can easily recover
  $\ovphi_{i-1}$ and deduce $v_i$.  Note that in the $i^{\rm th}$ step
  we wrote one or two lines in the record: exactly one \algo{Color}
  line followed by either nothing, or one \algo{Uncolor, neighbor}
  line, or one \algo{Uncolor, $2k$-cycle} line. Indeed there cannot be an
  \algo{Uncolor, $2k$-cycle} line following an \algo{Uncolor, neighbor}
  line, as $v$ would be uncolored by the algorithm before considering
  bicolored cycles passing through $v$. Let us
  consider these three cases separately.

  \begin{itemize}  
  \item If Step $i$ was a color step alone, then $V[i]
    = \varphi_i(v_i)$ and $\varphi_{i-1}$ is obtained from $\varphi_i$
    by uncoloring $v_i$.
    
  \item If the last line of $R_i$ is \algo{Uncolor, neighbor $u$},
    then $V[i] = \varphi_i(u)$ and $\varphi_{i-1} = \varphi_i$.

  \item If the last line of $R_i$ is \algo{Uncolor, $2k$-cycle $(u_1,\ldots
      ,u_{2k})$}, then $V[i]=\varphi_i(u_{2k-1})$ and $\varphi_{i-1}$ is
    obtained from $\varphi_i$ by coloring the vertices $u_j$ for $2\le
    j\le 2k-2$ (which were uncolored in $\varphi_i$), in such a way
    that $\varphi_{i-1}(u_j)$ equals $\varphi_i(u_{2k-1})$ if $j \equiv
    1\mod 2$, or equals $\varphi_i(u_{2k})$ otherwise. Note that this is
    possible because in the $i^{\rm th}$ loop, the algorithm uncolored
    neither $u_{2k-1}$ nor $u_{2k}$.

  \end{itemize} 
  This concludes the proof of the lemma.
\end{proof}

Let us now bound the number of possible records.

\begin{lemma}\label{lem:r2} 
  Algorithm~\acm produces at most $\smallO{\kappa^t}$ distinct
  records~$R$.
\end{lemma}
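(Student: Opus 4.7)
The plan is to count records via a decomposition over event profiles, bound each term by a product of per-step encoding counts, and then collapse the resulting sum via a weighted multinomial bound. A record $R$ of length $t$ consists of $t$ step-blocks; each block is a ``Color'' line possibly followed by one of two Uncolor lines: ``Uncolor, neighbor $u$'' (Type~1) or ``Uncolor, $2k$-cycle $(u_1,\ldots,u_{2k})$'' for some $k\ge2$ (Type~$k$); otherwise the block is of Type~0. The number of admissible Uncolor lines per block is at most $1$, $\Delta$, and $A_k$ for Types $0$, $1$, and $k\ge2$ respectively, where $A_k$ counts bicolored $2k$-cycles through $v_i$. I would bound $A_k$ using the $\mathcal{K}_\gamma$ hypothesis: since the cycle is bicolored and $\varphi_i$ is a proper coloring, the vertices $u_{2j-1}$ and $u_{2j+1}$ lie in the same color class, hence are non-adjacent in $G$, so by hypothesis share at most $\gamma$ common neighbors. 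Walking the cycle and substituting this $\gamma$-bound at each same-color pair would yield an explicit estimate $A_k\le f_k(\Delta,\gamma)$.

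Writing $t_j$ for the number of Type~$j$ steps, the fact that $G$ is not $\kappa$-acyclically-colorable forces $|\ovphi_i|\le n-1$ at every step. Because Type $0,1,k$ steps change the colored count by $+1$, $0$, $-(2k-3)$ respectively, the profile satisfies
\[
t_0 \;-\; \sum_{k\ge2}(2k-3)\,t_k \;\le\; n-1,
\]
so the number of records is bounded by
\[
\sum_{\text{valid profiles}}\binom{t}{t_0,\,t_1,\,(t_k)_{k\ge 2}}\,\Delta^{t_1}\prod_{k\ge2}A_k^{t_k}.
\]
For any $\beta>1$, the factor $\beta^{(n-1)-t_0+\sum(2k-3)t_k}$ is $\ge1$ on valid profiles, so multiplying the summand by it and then applying the multinomial theorem without the constraint yields
\[
\#R \;\le\; \beta^{\,n-1}\!\left(\beta^{-1}+\Delta+\sum_{k\ge2}A_k\,\beta^{2k-3}\right)^{\!t}.
\]
The prefactor $\beta^{n-1}$ is constant in $t$, so obtaining $\smallO{\kappa^t}$ reduces to showing that the parenthesized expression is strictly less than $\kappa=\nbmaxcolgamma$ for a suitable $\beta$.

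I would carry out this last step by choosing $\beta$ so that $\beta^{-1}$ balances the dominant ($k=2$) contribution $A_2\,\beta$ via AM-GM; the specific form $1+\Delta(1+\sqrt{2\gamma+4})$ of $\kappa$ arises precisely from this balance. The main obstacle is twofold: deriving a sharp enough estimate for $A_k$ from the $\mathcal{K}_\gamma$ structure (this is the graph-theoretic heart of the argument, where the cycle-traversal encoding must invoke the $\gamma$-bound aggressively enough to prevent blow-up), and verifying that the tail $\sum_{k\ge3}A_k\,\beta^{2k-3}$ contributes only lower-order corrections, so that the clean closed form of $\kappa$ survives.
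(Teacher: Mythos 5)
There is a genuine gap: you exploit the wrong side of the constraint on the profile, and as a result your weighting argument cannot close. Writing $\ell=t_0-\sum_{k\ge2}(2k-3)t_k=|\ovphi_t|$, the two facts available are $\ell\le n-1$ and $\ell\ge 0$ (indeed $|\ovphi_i|\ge 0$ at every step). You use the first with a weight $\beta^{(n-1)-\ell}$ and $\beta>1$; this makes each Color step contribute $\beta^{-1}\le 1$ and each Type-$k$ step contribute $A_k\beta^{2k-3}\ge A_k$, i.e.\ it rewards uncolor steps and penalizes color steps — exactly backwards for entropy compression. Over the admissible range $\beta\ge1$ your base $\beta^{-1}+\Delta+\sum_k A_k\beta^{2k-3}$ is minimized at $\beta=1$, where it equals $1+\Delta+\sum_k A_k\ge \tfrac12\gamma\Delta^2$, which is far larger than $\kappa=\nbmaxcolgamma=\Theta(\Delta\sqrt\gamma)$. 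The AM--GM balance you invoke would require $\beta=A_2^{-1/2}<1$, but then the factor $\beta^{(n-1)-\ell}$ is no longer $\ge1$ on valid profiles and the inequality reverses. The fix is to use the \emph{lower} constraint: for $0<x\le1$ one has $x^{-\ell}\ge1$, and multiplying by it gives
\[
\#R\;\le\;\Bigl(x^{-1}+\Delta+\sum_{k\ge2}A_k\,x^{2k-3}\Bigr)^{t}\;=\;\Bigl(\tfrac{Q(x)}{x}\Bigr)^{t},
\]
which is precisely the quantity the paper minimizes (at $X=\frac1\Delta\sqrt{2/(\gamma+2)}$, giving $Q(X)/X<\kappa$ and hence $\smallO{\kappa^t}$). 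A secondary incompleteness: you only gesture at the bound $A_k\le\frac12\gamma\Delta^{2k-2}$ from the $\mathcal{K}_\gamma$ hypothesis; the paper imports this as Lemma~3.2 of Alon--McDiarmid--Reed, and some such quantitative bound is indispensable, since the whole point of the $\gamma$-hypothesis is to make the $4$-cycle term $A_2x$ comparable to $x^{-1}$.

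For what it is worth, your overall architecture (profile decomposition, multinomial bound, exponential reweighting to remove the constraint) is a legitimately more elementary route than the paper's, which encodes records as annotated partial Dyck paths, sets up the generating-function equations $R(y)=\sum_\ell y^\ell B(y)^{\ell+1}$ and $B(y)=1+\sum_j C_jy^{s_j}B(y)^{s_j}$, and extracts asymptotics via the Meir--Moon smooth implicit-function schema. Once the weighting is corrected, your method yields $O((Q(X)/X)^t)=\smallO{\kappa^t}$ directly because the paper arranges $Q(X)/X<\kappa$ strictly; but as written the argument does not establish the lemma.
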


\begin{proof}
  Since Algorithm \acm fails to color $G$, the record $R$ has exactly
  $t$ \algo{Color} lines (i.e. the algorithm consumes the whole input
  vector). It contains also \algo{Uncolor} lines of different types:
  \algo{neighbor} (type 1), \algo{$4$-cycle} (type 2),
  \algo{$6$-cycle} (type 3), \dots \algo{$n$-cycle} (type $\frac
  n2$).
  Let $\type=\set{1,2,\ldots,\frac n2}$ be the set of bad event types.
  Let denote $s_j$ the number of uncolored vertices when a bad event
  of type $j$ occurs. Observe that:
  \begin{itemize}
  \item For every \algo{Uncolor, neighbor} step, the algorithm uncolors
    1 previously colored vertex. Hence set $s_1=1$.
  \item For every \algo{Uncolor, $2k$-cycle} step, where the cycle has length
    $2k$, the algorithm uncolors $2k-2$ previously colored vertices. Hence 
    set $s_k=2k-2$ for $2\le k \le {\floor {n/2}}$.
  \end{itemize}

  To compute the total number of possible records, let us compute how
  many different entries, denoted $C_j$, an \algo{Uncolor} step of type
  $j$ can produce in the record.  Observe that:
  \begin{itemize}

  \item An \algo{Uncolor, neighbor} line can produce $\Delta$ different
    entries in the record, according to the neighbor of $v$ (the
    vertex just
    colored by the algorithm) that shares the same color. Hence set $C_1=\Delta$.

  \item An \algo{Uncolor, $2k$-cycle} line involving a cycle of length $2k$
    can produce as many different entries in the record as the number of
    $2k$-cycles going through $v$. Thus this number of entries is at
    most $\frac12 \gamma \Delta^{2k-2}$ according to Lemma 3.2 of
    \cite{AMR91}. Hence set $C_k=\frac12 \gamma \Delta^{2k-2}$ for $2\le
    k \le {\floor {n/2}}$.
  \end{itemize}

  We complete the proof by means of
  Theorem~\ref{thm:nb_of_rec} of Section~\ref{sec:general} (see
  \vpageref[below]{thm:nb_of_rec}). Theorem~\ref{thm:nb_of_rec} applies
  on Algorithm~\gc which is a generic version of Algorithm~\acm.
  Consequently, let us consider the following polynomial $Q(x)$:
  \begin{eqnarray*}
    Q(x) &=& 1+ \sum_{i\in\type} C_i x^{s_i}\\
         &=&1+\Delta x + \sum_{2\le i\le\frac n2}\frac12 \gamma \Delta^{2i-2}x^{2i-2}\\
         &<&1+\Delta x + \frac{\gamma\Delta^2x^2}{2-2\Delta^2x^2} \qquad\qquad \;
             \mbox{for} \;
             x<\frac 1\Delta
  \end{eqnarray*}
  Setting $X =
  \frac{1}{\Delta}\sqrt{\frac{2}{\gamma+2}}$, we have:
  $$
  \frac{Q(X)}{X}  < \Delta \sqrt{\frac{\gamma+2}{2}}\left(1+ \sqrt{\frac{2}{\gamma+2}}
    +1\right) = \Delta \left(1+\sqrt{2\gamma+4} \right) \le \kappa
  $$ 

  Since $\gamma\ge 1$, then $\frac2{\gamma+2}<1$ and thus we have $0< X <
  \frac1\Delta \le 1$. Therefore, Algorithm \acm produces at most
  $o(\kappa^t)$ different records by Theorem~\ref{thm:nb_of_rec}.  This completes the proof.
\end{proof}


\subsection{Graphs with maximum degree $\Delta$}
\label{ssec:acyclic}

To prove Theorem~\ref{thm:acy-delta}, we prove that, given a graph $G$
with maximum degree $\Delta$, we have $\chi_a(G)< \nbmaxcoldelta$ for
$\Delta\ge 24$ in Section~\ref{subsubsec:algo} and that $\chi_a(G) <
\nbmaxcoldeltabis$ for $\Delta\ge 9$ in
Section~\ref{subsubsec:algobis}.

\begin{figure}
  \begin{center}
    \includegraphics[scale=0.65]{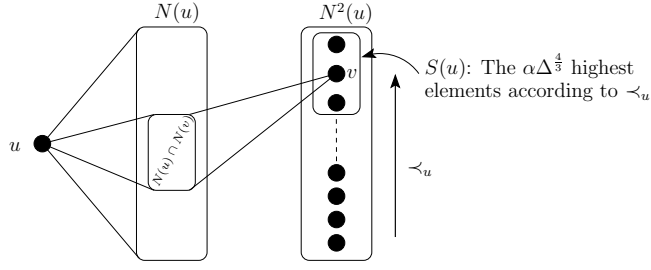}
    \caption{Example of a special couple $(u,v)$.\label{fig:special}}
  \end{center}
\end{figure}

The proof is made by contradiction. Suppose there exists a graph $G$ with maximum degree $\Delta$ which is
a counterexample to Theorem~\ref{thm:acy-delta}.
Define a total order $\prec$
on the vertices of $G$. Let $N(u)$ and $N^2(u)$ be respectively the
set of neighbors and distance-two vertices of $u$.  For each pair of
non-adjacent vertices $u$ and $v$, let $N(u,v) = N(u) \cap N(v)$, and let
$\deg(u,v) = |N(u,v)|$.  For each vertex $u$ of $G$, let the order
$\prec_u$ on $N^2(u)$ be such that $v\prec_u w$ if $\deg(u,v) <
\deg(u,w)$, or if $\deg(u,v) = \deg(u,w)$ but $v\prec w$.  A couple of
vertices $(u,v)$ with $v\in N^2(u)$ is {\em special} if there are less
than $\alpha\Delta^{\frac{4}{3}}$ ($\alpha$ is a constant to be set later)
vertices $w$ such that $v\prec_u w$. 
That is, $(u,v)$ is special if and only if, $v$ is in the
$\alpha\Delta^{4/3}$ highest elements of $\prec_u$ (see Figure~\ref{fig:special}).  Note that the couple
$(u,v)$ may be special while the couple $(v,u)$ may be non-special.
Let us denote $S(u)\subseteq N^2(u)$ the set of vertices $v$ such that
$(u,v)$ is special.  By
definition, $|S(u)| = \min \left\{ \alpha \Delta^{\frac{4}{3}} , |N^2(u)| \right\}$.

\subsubsection{First upper bound}\label{subsubsec:algo}
By contradiction hypothesis, $\chi_a(G)\ge \frac32\Delta^\frac43 + 5
\Delta - 14$.  Let $\kappa$ be the unique integer such that
$\frac32\Delta^\frac43 + 5 \Delta - 15 \le \kappa <
\frac32\Delta^\frac43 + 5 \Delta - 14$ (i.e. $\kappa = \ceil{
  \frac32\Delta^\frac43 + 5 \Delta - 15 }$).

\paragraph{The algorithm}\ \smallskip

\noindent Let $V\in\set{1,2,\ldots,\kappa}^t$ be a vector of length
$t$. Algorithm~\ac (see
\vpageref[below]{algo:ac}) takes the vector $V$ as input and returns a 
partial acyclic coloring $\varphi : V(G) \to
\set{\bullet,1,2,\ldots,\kappa}$ of $G$ (recall that $\bullet$ means
that the vertex is uncolored) and a record $R$.


\begin{algorithm}[t]
  \DontPrintSemicolon
  \LinesNumbered
  
  \SetKwInOut{Input}{Input}
  \SetKwInOut{Output}{Output}
  \Input{$V$ (vector of length $t$).}
  \Output{($\varphi$, $R$).\vspace{.2cm}}
  
  \For{all $v$ in $V(G)$}{
    $\varphi(v)\gets \bullet$ 
  }
  $R\gets newfile()$

  \For{$i\gets 1$ \KwTo $t$}{

    Let $v$ be the smallest (w.r.t. $\prec$) uncolored vertex of $G$
    
    $\varphi(v) \gets V[i]$
    
    Write "\texttt{Color} $\backslash$n" in $R$
    
    \If{$\varphi(v)=\varphi(u)$ \rm{for} $u\in N(v)$} 
    {
      \tcp{Proper coloring issue}
      $\varphi(v) \gets \bullet$
      
      Write "\texttt{Uncolor, neighbor $u$ $\backslash$n}" in $R$
      
    }
    \ElseIf{$\varphi(v)=\varphi(u)$ \rm{for} $u\in S(v)$}  
    {
      \tcp{Special couple issue}
      $\varphi(v) \gets \bullet$
      
      Write "\texttt{Uncolor, special $u$} $\backslash$n" in $R$
    }
    \ElseIf{$v$ belongs to a bicolored cycle of length $4$ $(v=u_1,u_2,u_3,u_4)$}  
    {
      \tcp{Bicolored cycle issue}
      $\varphi(v) \gets \bullet$

      $\varphi(u_2) \gets \bullet$
      
      Write "\texttt{Uncolor, cycle $(u_1,u_2,u_3,u_4)$} $\backslash$n" in $R$
    }
    \ElseIf{$v$ belongs to a bicolored path of length 6
      $(u_1,u_2=v,u_3,u_4,u_5,u_6)$ with $u_1\prec u_3$} 
    { 
      \tcp{Bicolored path issue}
      $\varphi(u_1) \gets \bullet$

      $\varphi(v) \gets \bullet$

      $\varphi(u_3) \gets \bullet$

      $\varphi(u_4) \gets \bullet$

      Write "\texttt{Uncolor, path $(u_1,u_2,u_3,u_4,u_5,u_6)$} $\backslash$n" in $R$
    }
  }
  \Return ($\varphi$, $R$)
  \caption{\ac\label{algo:ac}}
\end{algorithm} 

Algorithm \ac runs as follows. Let $\varphi_i$ be the partial coloring
of $G$ after $i$ steps (at the end of the $i^{\rm th}$ loop).  At Step
$i$, we first consider $\varphi_{i-1}$ and we color the smallest
uncolored vertex $v$ with $V[i]$ (line 6 of the algorithm).
We then verify whether one of the following types of bad events happens:

\begin{enumerate}[Event $N$ (for neighbor):] 
\item[Event $N$ (for neighbor):] $G$ contains a monochromatic edge $vu$ for some $u$ (line 8 of
  the algorithm);
\item[Event $S$ (for special):] $G$ contains a special couple $(v,u)$ with
  $u$ and $v$ having the same color (line 11 of the algorithm);
\item[Event $C$ (for cycle):] $G$ contains a bicolored cycle of length 4 $(v=u_1,u_2,u_3,u_4)$
  (line 14 of the algorithm);
\item[Event $P$ (for path):] $G$ contains a bicolored path of length 6
  $(u_1,u_2=v,u_3,u_4,u_5,u_6)$ with $u_1\prec u_3$ (line 18 of
  the algorithm).
\end{enumerate}
If such events happen, then we modify the coloring (i.e. we
uncolor some vertices as mentioned in Algorithm~\ac) in
order that none of the four previous events remains. Note that at some
Step $i$, for $u$ and $v$ two vertices of $G$ such that $(u,v)$ is a
special couple but $(v,u)$ is not, we may have
$\varphi(u)=\varphi(v)$; this means that $u$ has been colored before
$v$.
Clearly, $\varphi_i$ is a partial acyclic coloring of $G$. Indeed,
since Event~1 is avoided, $\varphi_i$ is a proper coloring ; since
Events~3 and~4 are avoided, $\varphi_i$ is acyclic.

\bigskip
\begin{proofof}{Theorem \ref{thm:acy-delta}}
As in the proof of Theorem \ref{thm:gamma}, we prove that the function
defined by \ac is injective (see Lemma \ref{lem:vo}).  A contradiction
is then obtained by showing that the number of possible outputs is
strictly smaller than the number of possible inputs when $t$ is chosen
large enough compared to $n$. The number of possible inputs is exactly
$\kappa^t$ while the number of possible outputs is $\smallO{\kappa^t}$,
as the number of possible $(1+\kappa)$-colorings of $G$ is
$(1+\kappa)^n$ and the number of possible records is $\smallO{\kappa^t}$
(see Lemma~\ref{lem:r}).
\end{proofof}


\paragraph{Algorithm analysis}\ \smallskip

\noindent Recall that $\varphi_i$, $v_i$, $R_i$, and
$V_i$ respectively denote the partial acyclic coloring obtained
after $i$ steps, the current vertex $v$ of the $i^{\rm th}$
step, the record $R$ after $i$ steps, and the input vector $V$
restricted to its $i$ first elements.

\medskip
We first show that the function defined by \ac is injective.

\begin{lemma}\label{lem:vo}
$V_i$ can be recovered from $(\varphi_i,R_i)$.
\end{lemma}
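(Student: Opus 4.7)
The plan is to mimic the proof of Lemma~\ref{lem:vo2} and proceed by induction on $i$, the only novelty being that Algorithm~\ac fires one of four (instead of two) possible Uncolor events at each step. The base case $i=0$ is trivial, since both $V_0$ and $R_0$ are empty. For the inductive step, I would first recover $R_{i-1}$ from $R_i$ by truncating everything after the $(i-1)^{\rm th}$ \algo{Color} line; this is purely syntactic. Once $R_{i-1}$ is available, the induction hypothesis gives $V_{i-1}$, so the goal reduces to recovering $V[i]$ and $\varphi_{i-1}$ from $(\varphi_i, R_i)$ and the (already known) identity of $v_i$. The vertex $v_i$ can be read off either by following the argument of Lemma~\ref{lem:vo2} (reading $R_{i-1}$ to obtain $\ovphi_{i-1}$ and taking the $\prec$-smallest uncolored vertex), or by inspecting the Uncolor line itself, which always mentions $v$ explicitly.

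Next I would go through a case distinction driven by which (if any) Uncolor line follows the $i^{\rm th}$ \algo{Color} line. Since the algorithm uses an \textbf{\texttt{ElseIf}} cascade, at most one Uncolor event fires per step, so the cases are mutually exclusive. The two cases from Lemma~\ref{lem:vo2} carry over verbatim: a \algo{Color} step alone gives $V[i]=\varphi_i(v_i)$ and $\varphi_{i-1}$ is $\varphi_i$ with $v_i$ uncolored, while an \algo{Uncolor, neighbor $u$} line gives $V[i]=\varphi_i(u)$ and $\varphi_{i-1}=\varphi_i$. The new \algo{Uncolor, special $u$} case is completely analogous to the neighbor case: we again have $V[i]=\varphi_i(u)$ and $\varphi_{i-1}=\varphi_i$, since the algorithm only uncolored $v_i$ and this uncoloring is already reflected in $\varphi_i$.

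The two structural cases are where the induction really uses the labelled data stored in the record. For \algo{Uncolor, cycle $(u_1,u_2,u_3,u_4)$}, the algorithm uncolored $v_i=u_1$ and $u_2$ but left $u_3, u_4$ colored; the bicolored 4-cycle condition forces $\varphi_{i-1}(u_1)=\varphi_{i-1}(u_3)=\varphi_i(u_3)$ and $\varphi_{i-1}(u_2)=\varphi_{i-1}(u_4)=\varphi_i(u_4)$, so $V[i]=\varphi_i(u_3)$ and $\varphi_{i-1}$ is obtained from $\varphi_i$ by painting $u_1,u_2$ with these two values. For \algo{Uncolor, path $(u_1,\ldots,u_6)$}, the algorithm uncolored $u_1,u_2=v_i,u_3,u_4$ and left $u_5,u_6$ colored; since the path is bicolored and the odd (resp. even) indexed vertices share a common color, one reads $\varphi_i(u_5)$ and $\varphi_i(u_6)$ off the current coloring and sets $V[i]=\varphi_i(u_6)$ (the color $v_i=u_2$ received), with $\varphi_{i-1}$ recoloring $u_1,u_3$ by $\varphi_i(u_5)$ and $u_2,u_4$ by $\varphi_i(u_6)$.

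The main obstacle is simply verifying that the cases are genuinely exhaustive and that the labelling stored in each Uncolor line is unambiguous. The key observation, already used in Lemma~\ref{lem:vo2}, is that the \textbf{\texttt{ElseIf}} structure guarantees that an Uncolor step is followed by nothing until the next \algo{Color} line; this rules out hybrid scenarios (e.g.\ a neighbor Uncolor followed by a cycle Uncolor) and means the record is linearly decomposable into step blocks. The path case additionally depends on the convention $u_1\prec u_3$ being recorded implicitly in the ordering $(u_1,\ldots,u_6)$, which fixes the orientation of the path and hence removes any ambiguity in recovering the four erased colors.
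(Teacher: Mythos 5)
Your strategy is exactly the paper's: peel off the last step of the record, recover $\ovphi_{i-1}$ (hence $v_i$) from $R_{i-1}$, and then do a case analysis on the final \algo{Uncolor} line to read off $V[i]$ and rebuild $\varphi_{i-1}$; the values you extract for $V[i]$ (namely $\varphi_i(u)$, $\varphi_i(u)$, $\varphi_i(u_3)$, $\varphi_i(u_6)$) agree with the paper. However, your reconstruction of $\varphi_{i-1}$ in the two structural cases is wrong, and this is where the step fails. By definition $v_i$ is the $\prec$-smallest \emph{uncolored} vertex at the start of step $i$, so $\varphi_{i-1}(v_i)=\bullet$; yet in the cycle case you assert $\varphi_{i-1}(u_1)=\varphi_i(u_3)$ with $u_1=v_i$, and in the path case you recolor $u_2=v_i$ with $\varphi_i(u_6)$. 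The correct reconstructions are: for \algo{Uncolor, cycle $(u_1,u_2,u_3,u_4)$}, color only $u_2$ with $\varphi_i(u_4)$ and leave $u_1=v_i$ uncolored; for \algo{Uncolor, path $(u_1,\ldots,u_6)$}, color $u_1,u_3$ with $\varphi_i(u_5)$ and $u_4$ with $\varphi_i(u_6)$, leaving $u_2=v_i$ uncolored (compare the paper's proof of Lemma~\ref{lem:vo2}, which recolors only $u_j$ for $2\le j\le 2k-2$, deliberately excluding $u_1=v$). This is not cosmetic: the induction hypothesis is applied to the pair $(\varphi_{i-1},R_{i-1})$, so handing it a coloring that disagrees with the true $\varphi_{i-1}$ at $v_i$ means the inductive recovery of $V_{i-1}$ is no longer justified by the statement being proved. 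The fix is immediate (never recolor $v_i$ when reconstructing $\varphi_{i-1}$), but as written the inductive step is incorrect.

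Two smaller slips. First, $R_{i-1}$ consists of the lines of $R_i$ strictly \emph{before the $i^{\rm th}$} \algo{Color} line, not of everything up to the $(i-1)^{\rm th}$ \algo{Color} line: your truncation would drop the \algo{Uncolor} line of step $i-1$ whenever that step was an uncolor step, and then $\ovphi_{i-1}$, hence $v_i$, would be computed incorrectly. Second, the alternative you offer for locating $v_i$ --- that the \algo{Uncolor} line ``always mentions $v$ explicitly'' --- is false for the \algo{neighbor} and \algo{special} lines, which record only the partner $u$ (and a pure color step records nothing at all); the route you give first, via $\ovphi_{i-1}$ and the $\prec$-smallest uncolored vertex, is the correct and sufficient one.
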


\begin{proof}
  First note that, at each step of Algorithm~\ac, a
  \algo{Color} line possibly followed by an \algo{Uncolor} line is
  appended to $R$. We will say that a step which only appends a
  \algo{Color} line is a \emph{color step}, and a step which appends a
  \algo{Color} line followed by an \algo{Uncolor} line is an
  \emph{uncolor step}. Therefore, by looking at the last line of $R$,
  we know whether the last step was a color step or an uncolor step.

  We first prove by induction on $i$ that $R_i$ uniquely determines the
  set of colored vertices at Step~$i$ (i.e. $\ovphi_i$). Observe that $R_1$ necessarily
  contains only one line which is \algo{Color}; then $v_1$ is the unique
  colored vertex. Assume now that $i\ge 2$. By induction hypothesis,
  $R_{i-1}$ (obtained from $R_i$ by removing the last line if Step $i$
  was a color step or by removing the two last lines if Step $i$ was
  an uncolor step) uniquely determines the set of colored vertices at
  Step $i-1$. Then at Step $i$, the smallest uncolored vertex of $G$
  is colored. If one of Events~1 to~4 happens, then the last line
  of $R_i$ is an \algo{Uncolor} line whose indicates which vertices
  are uncolored. Therefore, $R_i$ uniquely determines the set of
  colored vertices at Step $i$.
  
  Let us now prove by induction that the pair $(\varphi_i,R_i)$
  permits to recover~$V_i$.  At Step~1, $(\varphi_1,R_1)$ clearly
  permits to recover $V_1$: indeed, $v_1$ is the unique colored vertex
  and thus $V[1]=\varphi_1(v_1)$. Assume now that $i\ge 2$. The record
  $R_{i-1}$ gives us the set of colored vertices at Step $i-1$, and
  thus we know what is the smallest uncolored vertex $v$ at the
  beginning of Step $i$. Consider the following two cases:

  \begin{itemize}
  \item If Step $i$ was a color step, then $\varphi_{i-1}$ is obtained
    from $\varphi_i$ in such a way that
    $\varphi_{i-1}(u)=\varphi_i(u)$ for all $u\neq v$ and
    $\varphi_{i-1}(v)=\bullet$. By induction hypothesis,
    $(\varphi_{i-1},R_{i-1})$ permits to recover $V_{i-1}$ and
    $V[i]=\varphi_i(v)$.
    
  \item If Step $i$ was an uncolor step, then the last line of $R_i$
    allows us to determine the set of uncolored vertices at Step $i$
    and therefore, we can deduce $\varphi_{i-1}$. Then by induction
    hypothesis, $(\varphi_{i-1},R_{i-1})$ permits to recover
    $V_{i-1}$. We obtain $V[i]$ by considering the following cases:

    \begin{itemize}
    \item If the last line is of the form \algo{Uncolor, neighbor
        $u$}, then $V[i]=\varphi_i(u)$.
    \item If the last line is of the form \algo{Uncolor, special
      $u$}, then $V[i]=\varphi_i(u)$.
    \item If the last line is of the form \algo{Uncolor, cycle 
        $(u_1,u_2,u_3,u_4)$}, then $V[i]=\varphi_i(u_3)$.
    \item If the last line is of the form \algo{Uncolor, path
        $(u_1,u_2,u_3,u_4,u_5,u_6)$}, then $V[i]=\varphi_i(u_6)$.
    \end{itemize}
  \end{itemize}
  This completes the proof.
\end{proof}

\begin{lemma}\label{lem:r}
  Algorithm~\ac produces at most $o(\kappa^t)$ distinct records.
\end{lemma}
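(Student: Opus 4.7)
The plan is to mimic the proof of Lemma~\ref{lem:r2} and apply Theorem~\ref{thm:nb_of_rec} to the generic algorithm \gc instantiated with the data of \ac. The work splits into (i) identifying the set $\type$ of bad event types (the four types $N$, $S$, $C$, $P$ visible in Algorithm~\ref{algo:ac}), (ii) reading off the number $s_j$ of vertices uncolored by an event of type $j$, (iii) bounding the number $C_j$ of distinct record entries such an event can produce, and (iv) exhibiting a suitable $X\in(0,1)$ such that the polynomial $Q(x)=1+\sum_{j\in\type}C_j x^{s_j}$ satisfies $Q(X)/X<\kappa$.

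Directly from the algorithm one reads $s_N=1$, $s_S=1$, $s_C=2$, $s_P=4$, together with the easy bounds $C_N\le\Delta$ (specifying the offending neighbor of $v$) and $C_S\le|S(v)|\le\alpha\Delta^{4/3}$ (by definition of $S(v)$). The crucial input is the bound on $C_C$: since Event $C$ is only examined after Event $S$ has been ruled out, the antipodal vertex $u_3$ of a bicolored $4$-cycle through $v$ cannot lie in $S(v)$. Using that the top $\alpha\Delta^{4/3}$ elements of $\prec_v$ belong to $S(v)$ together with $\sum_{w\in N^2(v)}\deg(v,w)\le\Delta(\Delta-1)$, one deduces $\deg(v,u_3)\le\Delta^{2/3}/\alpha$ for every eligible $u_3$; then summing $\deg(v,u_3)(\deg(v,u_3)-1)$ over $u_3\notin S(v)$, and using in addition the residual bound $\sum_{u\notin S(v)}\deg(v,u)\le\Delta^2-|S(v)|\min_{u\in S(v)}\deg(v,u)$, yields a sharp estimate of the form $C_C=O(\Delta^{8/3}/\alpha)$. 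For $C_P$, a direct enumeration of $6$-paths through $v$ (with the convention $u_1\prec u_3$ killing a factor $2$) gives $C_P=O(\Delta^5)$.

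With these ingredients I would choose $X=c/\Delta^{4/3}$ for a constant $c$ to be optimized, which simultaneously brings the three $\Theta(\Delta^{4/3})$ contributions $1/X$, $C_S$ and $C_CX$ to the same order, while $C_N$ and $C_PX^3$ contribute only $\Theta(\Delta)$. Jointly optimizing $(\alpha,c)$ so as to minimize the leading $\Delta^{4/3}$-coefficient of $Q(X)/X$, and then checking that the residual $O(\Delta)$ terms fit within the $5\Delta-15$ slack afforded by $\kappa\ge\frac{3}{2}\Delta^{4/3}+5\Delta-15$, would conclude the proof via Theorem~\ref{thm:nb_of_rec}, provided one also verifies $0<X<1$.

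The main obstacle is the sharpness of the constants: a crude bound such as $C_C\le(\max_{u\notin S(v)}\deg(v,u))\cdot\Delta^2$ costs enough in the leading term that one cannot recover the coefficient $3/2$, so the two-sided argument on $\sum\deg(v,u)$ over $S(v)$ and its complement is essential. The hypothesis $\Delta\ge24$ appearing in Theorem~\ref{thm:acy-delta} should enter here as exactly the threshold at which, for the optimal $(\alpha,c)$, all the $O(\Delta)$ lower-order terms can be absorbed into the $5\Delta-15$ margin.
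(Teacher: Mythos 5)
Your plan matches the paper's proof essentially step for step: the same four event types with $s_N=s_S=1$, $s_C=2$, $s_P=4$, the same counts $C_N=\Delta$ and $C_S=\alpha\Delta^{4/3}$, the same two-sided degree-sum argument over $S(v)$ and its complement that yields the sharp bound $C_C=\frac{\Delta^{8/3}}{8\alpha}$ (Claim~\ref{lem:nb-C4}) and $C_P=\frac12\Delta(\Delta-1)^4$ (Claim~\ref{lem:nb-P6}), and the same choice $X=\Theta(\Delta^{-4/3})$ plugged into Theorem~\ref{thm:nb_of_rec}. The only thing you leave implicit is the explicit optimization the paper carries out, namely $\alpha=\frac12$, $X=2/\Delta^{4/3}$, giving $Q(X)/X<\frac32\Delta^{4/3}+5\Delta-15\le\kappa$ precisely when $\Delta\ge 24$, together with the check $0<X\le1$.
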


\begin{proof}
  As Algorithm~\ac fails to color $G$, the record $R$ has exactly $t$
  \algo{Color} steps. It contains also \algo{Uncolor} lines of
  different types: \algo{neighbor} (type $N$), \algo{special} (type
  $S$), \algo{cycle} (type $C$), and \algo{path} (type $P$). Let
  $\type=\set{N,S,C,P}$ be the set of bad event types.  Let denote
  $s_j$ the number of uncolored vertices when a bad event of type $j$
  occurs. Note that each
  \algo{Uncolor} step of type \algo{neighbor} (resp. \algo{special},
  \algo{cycle}, and \algo{path}) uncolors 1 (resp. 1, 2, 4) previously
  colored vertex. Hence set $s_N=1$, $s_S=1$, $s_C=2$ and $s_P=4$.
  
  To compute the total number of possible records, let us compute how
  many different entries, denoted $C_j$, an \algo{Uncolor} step of
  type $j$ can produce in the record. By considering vertex $v$ in
  \ac, observe that:
  \begin{itemize}
\item An \algo{Uncolor} step of type \algo{neighbor} can produce
  $\Delta$ different entries in the record, according to the
  neighbor of $v$ that shares the same color; hence let $C_N=\Delta$.
  
\item An \algo{Uncolor} step of type \algo{special} can produce
  $|S(v)| \le \alpha\Delta^\frac43$ different entries in the record,
  according to the vertex $u\in S(v)$ that shares the same color;
  hence let $C_S = \alpha\Delta^\frac43$.

\item An \algo{Uncolor} step of type \algo{cycle} can produce as many
  different entries in the record as the number of $4$-cycles going
  through $v$ and avoiding $S(v)$. We do not consider bicolored
  $4$-cycles going through $v$ and some vertex $u\in S(v)$, since we
  would have an \algo{Uncolor, special $u$} step instead. Hence this
  number of entries is bounded by $\frac{\Delta^\frac83}{8\alpha}$
  according to the next claim, and thus let $C_C = \frac{\Delta^\frac83}{8\alpha}$.

  \begin{claim}\label{lem:nb-C4}
    Given a graph $G$ with maximum degree $\Delta$, for any vertex $v$ of
    $G$, there are at most $\frac{\Delta^\frac83}{8\alpha}$ induced $4$-cycles
    going through $v$ and avoiding $S(v)$.
  \end{claim}

  \begin{proof}
    There are at most $\Delta^2$ edges between $N(v)$ and
    $N^2(v)$. Let $d$ be an integer such that $\deg(v,u)\ge d$
    if and only if $u\in S(v)$. Therefore, there are at least
    $d|S(v)|$ edges between $N(v)$ and $S(v)$.
    Thus there are at most $\Delta^2 - d\alpha\Delta^\frac43$ edges between
    $N(v)$ and $\overline S(v) = N^2(v)\setminus S(v)$, and
    \begin{eqnarray}
      \label{eqn:sum_deg}
      \sum_{u\in \overline S(v)} \deg(v,u) \le \Delta^2 - d\alpha\Delta^\frac43
    \end{eqnarray}
    One can see that the set of induced $4$-cycles passing through $v$
    and through some vertex $u\in N^2(v)$ is in bijection with the
    pairs of edges $\{ux,uy\}$ with $x\neq y$ and
    $\{x,y\}\subseteq N(v,u)$.  Thus there are
    ${\deg(v,u)}\choose{2}$ such cycles. Summing over all vertices in
    $\overline S(v)$, we can thus conclude that
    this is less than the
    following value $K = \frac{1}{2}\sum_{u\in \overline S(v)}
    \deg(v,u)^2$.  As this function is quadratic in $\deg(v,u)$, and as
    here $\deg(v,u) \le d$, Equation~(\ref{eqn:sum_deg}) implies that $K\le
    K(d)$ for $K(d) = \frac{1}{2}(\Delta^2 -
      d\alpha\Delta^\frac43)d$.  By simple calculation one can see that
    the polynomial $K(d)$ is maximal for $d = \frac{\Delta^\frac23}{2\alpha}$
    and we thus have that $K \le K\left(\frac{\Delta^\frac23}{2\alpha}\right) =
    \frac{\Delta^\frac83}{8\alpha}$.  This concludes the proof of the claim.
  \end{proof}
\item An \algo{Uncolor} step  of type \algo{path} can produce as many
  different entries in the record as the number of $6$-paths
  $P=(u_1,u_2,u_3,u_4,u_5,u_6)$ with $u_2=v$ and $u_1\prec u_3$.
  Hence this number of entries is bounded by
  $\frac{1}{2}\Delta(\Delta-1)^4$ according to the next claim, and
  thus let $C_P = \frac{1}{2}\Delta(\Delta-1)^4$.
  \begin{claim}\label{lem:nb-P6}
    Given a graph $G$ with maximum degree $\Delta$, for any vertex $v$ of $G$,
    there are at most $\frac{1}{2}\Delta(\Delta-1)^4$ paths
    $(u_1,u_2,u_3,u_4,u_5,u_6)$ of length
    6 with $u_2=v$ and $u_1\prec u_3$.
  \end{claim}
  \begin{proof}
    Given vertex $v$, there are ${\Delta \choose 2}$ possibilities to choose $u_1$
    and $u_3$, and then $\Delta-1$ candidates for being vertex
    $u_{i+1}$ once $u_i$ is known ($i\ge 3$). This clearly leads to the given
    upper bound.
  \end{proof}
\end{itemize}

We complete the proof by means of Theoremm~\ref{thm:nb_of_rec} of
Section~\ref{sec:general} (see \vpageref[below]{thm:nb_of_rec}).  Let
us consider the following polynomial $Q(x)$:
\begin{eqnarray*}
Q(x) &=& 1+ \sum_{i\in\type} C_i x^{s_i}\\
&=& 1+ C_N x^{s_N} + C_S x^{s_S} + C_C x^{s_C} + C_P x^{s_P}\\
&=& 1+ \Delta x + \alpha\Delta^{\frac43}x + \frac{\Delta^{\frac83}}{8\alpha}x^2+\frac12\Delta(\Delta-1)^4x^4\\
\end{eqnarray*}
Setting $X=\frac{2\sqrt{2\alpha}}{\Delta^\frac43}$, we have:
\begin{small}
\begin{equation}\label{eq:pierpauljack}\frac{Q(X)}{X} = 
\left ( \frac{1}{\sqrt{2\alpha}}+\alpha\right)\Delta^{\frac{4}{3}}+\left( 8
\alpha^{\frac{3}{2}} \sqrt{2} + 1 \right)\Delta - 32
\alpha^{\frac{3}{2}}\sqrt{2} +
\frac{8\alpha^{\frac{3}{2}}\sqrt{2}}{\Delta}\left( 6- \frac{4}{\Delta} + \frac{1}{\Delta^2}\right)
\end{equation}
\end{small}
In order to minimize $\frac{1}{\sqrt{2\alpha}}+\alpha$, we set
$\alpha=\frac12$, giving $X=\frac2{\Delta^{\frac43}}$
and we obtain:
\begin{equation*}
\frac{Q(X)}{X}=\frac{3}{2}\Delta^{\frac{4}{3}}+ 5\Delta - 16 + \frac{24}{\Delta}
- \frac{16}{\Delta^2} + \frac{4}{\Delta^3} <
\frac{3}{2}\Delta^{\frac{4}{3}}+ 5\Delta - 15  \le \kappa \mbox{ as soon as
  $\Delta\ge 24$}
\end{equation*}
Since $0<X\le 1$ for $\Delta\ge 24$, Algorithm~\ac produces at most
$o(\kappa^t)$ different records by Theorem~\ref{thm:nb_of_rec}. This completes the proof.
\end{proof}

\begin{remark}\label{re:eminem}
  For small values of $\Delta$, note that setting $\alpha=\frac12$ is not
  optimal. Indeed the best choice of $\alpha$ is the value minimizing the
  right term of Equation~(\ref{eq:pierpauljack}). For example, for
  $\Delta=27$, setting $\alpha = 0.225$ leads us to $194$ colors instead
  of $242$, already improving on Kostochka and Stocker's bound $ 1 +
  \left\lfloor\frac{(\Delta+1)^2}{4}\right\rfloor = 197$. Actually one
  can observe in Table~\ref{tab:cs} that the optimal value of $\alpha$
  (for a given $\Delta$) converges to $\frac12$ rather slowly.

  \begin{table}[h]
    \begin{center}
      \begin{tabular}{|c|c|c|c|c|c|c|c|c|c|c|}
        \hline
        $\Delta$ & 27 & 28 & 29 & 30 & 100 & 1000 & 10000 & 100000 & 1000000 \\
        \hline
        $\alpha$ & 0.225 & 0.225 & 0.226 & 0.226 & 0.25 & 0.32 &
        0.384 & 0.434 & 0.465  \\
        \hline
      \end{tabular}
      \caption{Optimal values of $\alpha$ for some given $\Delta$.\label{tab:cs}}
    \end{center}
  \end{table}
  
\end{remark} 

\subsubsection{A better upper bound for large value of $\Delta$}\label{subsubsec:algobis}


The choice of the bad event types is important and considering two different sets
of bad event types (insuring the acyclic coloring property) may lead to
different bounds. In the previous subsection, we have considered four bad
event types that insure a coloring to be acyclic. In this subsection,
we consider an other set of bad event types which leads to a better
upper bound for large value of $\Delta$.

Algorithm~\acb~(see \vpageref[above]{algo:ac2}) is a variant of
 Algorithm~\ac~(see \vpageref[below]{algo:ac}) based on the following
 set of three bad events:
\begin{enumerate}[Event 1:]
\item[Event $N$:] $G$ contains a monochromatic edge $vu$ for some $u$ (line 8 of
  the algorithm);
\item[Event $S$:] $G$ contains a special couple $(v,u)$ with
  $u$ and $v$ having the same color (line 11 of the algorithm);
\item[Event $k$:] $G$ contains a bicolored cycle of length $2k$ $(u_1,u_2=v,u_3,\ldots,u_{2k})$
  (line 14 of the algorithm);
\end{enumerate}

\begin{algorithm}[t]
  \DontPrintSemicolon
  \LinesNumbered
  
  \SetKwInOut{Input}{Input}
  \SetKwInOut{Output}{Output}
  \Input{$V$ (vector of length $t$).}
  \Output{($\varphi$, $R$).\vspace{.2cm}}
  
  \For{all $v$ in $V(G)$}{
    $\varphi(v)\gets \bullet$ 
  }
  $R\gets newfile()$

  \For{$i\gets 1$ \KwTo $t$}{

    Let $v$ be the smallest (w.r.t. $\prec$) uncolored vertex of $G$
    
    $\varphi(v) \gets V[i]$
    
    Write "\texttt{Color} $\backslash$n" in $R$
    
    \If{$\varphi(v)=\varphi(u)$ \rm{for} $u\in N(v)$} 
    {
      \tcp{Proper coloring issue}
      $\varphi(v) \gets \bullet$
      
      Write "\texttt{Uncolor, neighbor $u$ $\backslash$n}" in $R$
      
    }
    \ElseIf{$\varphi(v)=\varphi(u)$ \rm{for} $u\in S(v)$}  
    {
      \tcp{Special couple issue}
      $\varphi(v) \gets \bullet$
      
      Write "\texttt{Uncolor, special $u$} $\backslash$n" in $R$
    }
    \ElseIf{$v$ belongs to a bicolored cycle of length $2k$ ($k\ge
      2$), say $(u_1,u_2=v,u_3,\ldots,u_{2k})$ with $u_1\prec u_3$} 
    {
      \tcp{Bicolored cycle issue}

      \For{$j\gets 1$ \KwTo $2k-2$}{
        $\varphi(u_j) \gets \bullet$
      }
      Write "\texttt{Uncolor, cycle $(u_1,\ldots,u_{2k})\ \backslash$n}" in $R$
    }
  }
  \Return ($\varphi$, $R$)
  \caption{\acb\label{algo:ac2}}
\end{algorithm} 

\noindent This leads to the following upper
bound when $\Delta\ge 9$: $$\chi_a(G) < \nbmaxcoldeltabis.$$
Let
$\kappa$ be the unique integer such that $\frac32\Delta^\frac43 +
\Delta +\frac{8\Delta^\frac43}{\Delta^\frac23-4} \le \kappa<
\nbmaxcoldeltabis$ and let $\alpha=\frac12$. We now briefly sketch the
proof.  Let $\type=\set{N,S,2,3,4,\ldots,\frac n2}$ be the set of bad
event types. Note that each \algo{Uncolor} step of type
\algo{neighbor} (resp. \algo{special} and \algo{$2k$-cycle})) uncolors
$1$ (resp. $1$, $2k-2$) previously colored vertex. Hence set $s_N=1$,
$s_S=1$ and $s_k=2k-2$.

By considering $v$ in Algorithm \acb, observe that:

\begin{itemize}
\item An \algo{Uncolor} step of type \algo{neighbor} can produce
  $\Delta$ different entries in the record. Set $C_N=\Delta$.

\item An \algo{Uncolor} step of type \algo{special} can produce
  $|S(v)| \le \frac12\Delta^\frac43$ different entries in the record,
  according to the vertex $u\in S(v)$ that shares the same color. Set
  $C_S=\frac12 \Delta^\frac43$.

\item Now consider cycles of length $2k$, $k\ge 2$. For cycles of
  length 4, there are at most $\frac14\Delta^\frac83$ induced 4-cycles
  going through $v$ and avoiding $S(v)$ (see Claim \ref{lem:nb-C4});
  we set $C_2=\frac14\Delta^\frac83$. 

  Let $k\ge 3$. Let us upper bound the number of $2k$-cycles going
  through $v$ that may be bicolored. To do so, we count the number of
  $2k$-cycles $(u_1,u_2,u_3,\ldots,u_{2k})$ with $u_2=v$, $u_1 \prec
  u_3$ such that $(u_1,u_{2k-1})$ or $(u_{2k-1},u_1)$ is not special
  (if both $(u_1,u_{2k-1})$ and $(u_{2k-1},u_1)$ are special, then
  $u_1$ and $u_{2k-1}$ cannot receive the same color). There are at
  most $\Delta^{2k-\frac43}$ such cycles according to
  Claim~\ref{lem:riberi}. We set $C_{k}=\Delta^{2k-\frac43}$.

\begin{claim}\label{lem:riberi}
For $k\ge 3$, there are at most $\Delta^{2k-\frac43}$ $2k$-cycles
$(u_1,u_2,u_3,\ldots,u_{2k})$ going through $v$ with $v=u_2$ and
$u_1\prec u_3$ such that $(u_1,u_{2k-1})$ or $(u_{2k-1},u_1)$ is not
special.
\end{claim}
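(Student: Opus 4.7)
The plan is to enumerate such $2k$-cycles through $v$ by successively picking $u_1$, $u_3$, the intermediate vertices $u_4,\ldots,u_{2k-1}$, and finally $u_{2k}$; the ``not special'' hypothesis will save a factor of $\Delta^{1/3}$ on the last step. The ordered pair $u_1\prec u_3$ of neighbors of $v$ contributes at most $\binom{\Delta}{2}$ choices, and extending $u_3$ by a walk of length $2k-4$ yields at most $\Delta^{2k-4}$ possibilities for $u_4,\ldots,u_{2k-1}$. Finally, $u_{2k}$ must be a common neighbor of $u_1$ and $u_{2k-1}$, so it contributes at most $\deg(u_1,u_{2k-1})$ choices.

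The crux is to show that the non-special hypothesis forces $\deg(u_1,u_{2k-1})\le 2\Delta^{2/3}$. Since $\deg(u_1,u_{2k-1})=\deg(u_{2k-1},u_1)$, I would pick the endpoint $u^{\ast}\in\{u_1,u_{2k-1}\}$ from whose viewpoint the other one, $u^{\ast\ast}$, lies in $N^2(u^{\ast})\setminus S(u^{\ast})$. Non-emptiness of this difference automatically forces $|S(u^{\ast})|=\alpha\Delta^{4/3}$, for otherwise $S(u^{\ast})=N^2(u^{\ast})$. By the definition of $\prec_{u^{\ast}}$, we then have $\deg(u^{\ast},u^{\ast\ast})\le\min_{w\in S(u^{\ast})}\deg(u^{\ast},w)$. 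The sum $\sum_{w\in S(u^{\ast})}\deg(u^{\ast},w)$ counts length-$2$ walks from $u^{\ast}$ that do not return to $u^{\ast}$, and is thus bounded by $\Delta(\Delta-1)$; combined with $|S(u^{\ast})|=\tfrac12\Delta^{4/3}$, this yields $\deg(u_1,u_{2k-1})\le 2\Delta^{2/3}$.

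Multiplying the three factors together gives $\binom{\Delta}{2}\cdot\Delta^{2k-4}\cdot 2\Delta^{2/3}\le\Delta^{2k-4/3}$, which is exactly the claimed bound. The main delicate point is the uniform treatment of the two possible non-special orientations of $\{u_1,u_{2k-1}\}$: the hypothesis only guarantees that at least one of them is non-special, so the argument must pick the correct endpoint $u^{\ast}$ and verify that its set $S(u^{\ast})$ really attains its maximum size $\alpha\Delta^{4/3}$. Both issues are resolved by the observation above that any element of $N^2(u^{\ast})\setminus S(u^{\ast})$ witnesses $|S(u^{\ast})|=\alpha\Delta^{4/3}$.
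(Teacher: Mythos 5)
Your proof is correct and follows essentially the same route as the paper's: the same enumeration $\binom{\Delta}{2}\cdot\Delta^{2k-4}\cdot 2\Delta^{2/3}\le\Delta^{2k-\frac43}$, with the codegree of the non-special pair bounded by double-counting the edges (walks of length $2$) between $N(u^{\ast})$ and $S(u^{\ast})$. Your explicit remarks that $|S(u^{\ast})|$ must attain $\alpha\Delta^{4/3}$ and that the correct endpoint $u^{\ast}$ must be chosen are just slightly more careful statements of what the paper does implicitly with its quantity $d$.
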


\begin{proof}
As $u_1\prec u_3$, given $v$, there are ${\Delta \choose 2}$ possible
$(u_1,u_3)$. Then knowing $u_i$, there are at most $\Delta$ possible
choices for $u_{i+1}$, $3\le i\le 2k-2$.  Now let $(r,s)$ be a
non-special pair being either $(u_1,u_{2k-1})$ or
$(u_{2k-1},u_1)$. Hence $s\in N^2(r)\backslash S(r)$. Let $d$ be the
highest value of $\deg(r,u)$ for $u\in N^2(r)\backslash
S(r)$. Therefore, there are at least $d|S(r)|$ edges between $N(r)$
and $S(r)$, and so at most $\Delta^2-\frac{d}{2} \Delta^\frac43$ edges
between $N(r)$ and $N^2(r)\setminus S(r)$. It follows that $d$ is at
most $2\Delta^\frac23$.  Hence, there are at most $2\Delta^\frac23$
possible choices for $u_{2k}$. This leads to the given upper bound.
\end{proof}
\end{itemize}
Let us consider the following polynomial $Q(x)$:
\begin{eqnarray*}
Q(x) &=& 1+ \sum_{i\in\type} C_i x^{s_i}\\
&=& 1 + C_N x^{s_N}+ C_S x^{s_S} + C_2x^{s_2}+ \sum_{k\ge
  3}^{\lfloor n/2\rfloor} C_{k}x^{s_k}\\
&=&1 + \Delta x + \frac12 \Delta^\frac43 x + \frac14\Delta^\frac83 x^2
+ \sum_{k\ge 3}^{\lfloor n/2\rfloor} \Delta^{2k-\frac43}x^{2k-2}\\
&<&1 + \Delta x + \frac12 \Delta^\frac43 x + \frac14\Delta^\frac83 x^2
+ \frac{\Delta^\frac{14}{3}x^4}{1-\Delta^2x^2} \qquad\qquad \;
\mbox{for} \; x<\frac1\Delta
\end{eqnarray*}

Setting $X=\frac{2}{\Delta^\frac43}$, we have $X\le \frac1\Delta$ as
soon as $\Delta \ge 9$ and thus: 
\begin{equation*}\frac{Q(X)}{X} < 
\frac32\Delta^\frac43+\Delta+\frac{8\Delta^\frac43}{\Delta^\frac23-4}
\le \kappa
\end{equation*}
Algorithm~\acb produces at most
$o(\kappa^t)$ different records by Theorem~\ref{thm:nb_of_rec}. This completes the sketch of the proof.

\section{General method}\label{sec:general_la_vraie}

In the previous section, we gave upper bounds on the acyclic chromatic
number of some graph classes. To do so, we precisely analyzed the
randomized procedure for a specific graph class and a specific graph
coloring. The aim of this section is to provide a general method that
can be applied to several graph classes and many graph colorings (some
applications of our general method are given in Section~\ref{sec:application}).

\bigskip 

In the remaining of this section, $G$ is an arbitrarily chosen graph.
The aim of the general method is to prove the existence of a
particular coloring of $G$ using $\kappa$ colors, for some $\kappa$.
A {\em partial coloring} of $G$ is a mapping $\varphi : V(G) \to
\set{\bullet,1,2,\ldots,\kappa}$ ($\bullet$ means that the vertex is
uncolored). We assume by contradiction that $G$ does not admit such a
coloring. In that case, we will show that Algorithm~\gc (see
Algorithm~\ref{algo:gen}) defines an injective mapping
(Corollary~\ref{cor:injective}) from $\kappa^t$ different inputs (for
some $t$) to $o(\kappa^t)$ different outputs
(Theorem~\ref{thm:nb_of_rec}), a contradiction. Given a partial
coloring $\varphi$, let $\ovphi$ denotes the set of vertices colored
in $\varphi$.

\subsection{Description of Algorithm \gc}\label{subsec:requ-fun}

Given a vertex $v$ of $G$, let $\mathbb{F}(v)$ denote the set of {\em
  forbidden partial colorings anchored at $v$}. This set is such that
the vertex $v$ is colored for any $\varphi\in \mathbb{F}(v)$. For
example, Algorithm~\acm (see Algorithm~\ref{algo:acm}) is a special
case of Algorithm \gc, where, for any vertex $v$, the set
$\mathbb{F}(v)$ consists of the partial colorings where $v$ and one of
its neighbor have the same color, or $v$ belongs to a properly
bicolored cycle.

A partial coloring $\varphi$ of $G$ is said to be {\em allowed}, if
and only if,
\begin{enumerate}
\item either $\varphi$ is empty (none of the vertices is colored),
\item or there exists a colored vertex $v$ such that $\varphi \notin
  \mathbb{F}(v)$ and uncoloring $v$ yields to an allowed coloring.
\end{enumerate}

\begin{algorithm}[t]
  \DontPrintSemicolon \LinesNumbered
  
  \SetKwInOut{Input}{Input}
  \SetKwInOut{Output}{Output}
  \Input{$V=\set{1,2,\ldots,\kappa}^t$ (vector of length $t$).}
  \Output{($\varphi$, $R$).\vspace{.2cm}}
  
  \For{ all $v$ in $V(G)$}{
    $\varphi(v)\gets \bullet$ 
  }

  $R\gets {\rm newfile}()$

  \For{$i\gets 1$ \KwTo $t$}{

    $v\gets$ $\NUV(\ovphi)$
    
    $\varphi(v) \gets V[i]$
    
    Write "\texttt{Color $\backslash$n}" in $R$
    
    \If{$\varphi\in\mathbb{F}(v)$}
    {
      $j \gets $ $\BET(v,\varphi)$

      $k \gets $ $\ONBE_j(v,\varphi)$

      \For{$\forall u \in \USBE_j(v,\ovphi,k)$}{
        $\varphi(u) \gets \bullet$
      }
      Write "\texttt{Uncolor, Bad Event j, $k\ \backslash$n}" in $R$
    }
  }
  \Return ($\varphi$, $R$)
  \caption{\gc\label{algo:gen}}
\end{algorithm} 

Algorithm~\gc constructs a partial coloring $\varphi$ of $G$.  A
crucial invariant of Algorithm~\gc is that the partial coloring
$\varphi$ considered at the beginning of each iteration  of the main loop is allowed. 

At the beginning of each iteration, Algorithm~\gc starts with an allowed coloring $\varphi$ and chooses 
an uncolored vertex $v$ by $\NUV$.
\begin{itemize}
\item $\NUV(\ovphi)$: This function takes the set of colored vertices
  of $G$ in $\varphi$ as input and outputs an uncolored vertex (unless
  all vertices are colored).
\end{itemize}
Then Algorithm~\gc colors $v$ using the next color from vector $V$. This new coloring $\varphi$ either
verifies $\varphi\notin\mathbb{F}(v)$ and consequently $\varphi$ is
allowed, or $\varphi \in \mathbb{F}(v)$ and in that case $\varphi$ is
an ``almost'' allowed coloring since uncoloring $v$ yields an allowed
coloring.  Hence, let us define these forbidden colorings that can be
produced by Algorithm~\gc.

A partial coloring $\varphi$ of $G$ is said to be a {\em bad event
  anchored at $v$}, if $\varphi\in\mathbb{F}(v)$ and if the partial
coloring $\varphi'$, obtained from $\varphi$ by uncoloring $v$, is
such that
\begin{itemize}
\item $\varphi'$ is an allowed coloring,
\item $v$ is the vertex output by $\NUV(\ovphi')$.
\end{itemize}
We denote $\mathbb{B}(v)$ the set of bad events anchored at $v$.  It
is clear that $\mathbb{B}(v)\subseteq\mathbb{F}(v)$. Hence, the colorings $\varphi$
considered at line~8 of the algorithm are either allowed or belong to
$\mathbb{B}(v)$. Therefore, the test at line~8 is thus equivalent to testing whether
$\varphi\in\mathbb{B}(v)$. 

Before going further into the description of \gc, let us introduce the
following refinements of the sets $\mathbb{B}(v)$.  For some set
$\type$, each set $\mathbb{B}(v)$ is partitioned into $|\type|$ sets
$\mathbb{B}_j(v)$ where $j \in \type$.  We call the bad events of
$\mathbb{B}_j(v)$ the {\em type $j$ bad events}.  We now refine again
each set $\mathbb{B}_j(v)$.  We partition each $\mathbb{B}_j(v)$ into
different classes $\mathbb{B}_j^k(v)$ where $k$ belongs to some set
$\class_j(v)$ of cardinality at most $C_j$, for some value $C_j$
(depending only on type $j$). The partition into classes must be
sufficiently refined in order to allow some properties of the function
$\RBE$ (see below). 

After coloring $v$ in the main loop, if the current coloring
$\varphi$ does not belong to $\mathbb{B}(v)$, then \gc proceeds to the next
iteration. Observe that in that case $\varphi$ remains allowed as
expected.

Suppose now that after coloring $v$, the current coloring $\varphi$
belongs to $\mathbb{B}(v)$.  In that case, \gc determines the values
$j$ and $k$ such that $\varphi\in\mathbb{B}_j^k(v)$. That is done
using the following two functions:

\begin{itemize}
\item $\BET(v,\varphi)$: When
  $\varphi$ is a bad event of $\mathbb{B}(v)$, this function
  outputs the element $j\in \type$ such that  $\varphi$ is
  a bad event belonging to $\mathbb{B}_j(v)$. 
\end{itemize}

\begin{itemize}
\item $\ONBE_j(v,\varphi)$ for some $j \in \type$: When
  $\varphi$ is a bad event of $\mathbb{B}_j(v)$, this function
  outputs the element $k\in \class_j(v)$ such that  $\varphi$ is
  a bad event belonging to $\mathbb{B}_j^k(v)$. 
\end{itemize}

Then \gc uncolors the vertices given by $\USBE$, and
proceeds to the next iteration. A key property of $\USBE$ is to ensure that
the obtained coloring (i.e. obtained after uncoloring the vertices
given by $\USBE$) is allowed as expected. 

\begin{itemize}
\item $\USBE_j(v,\ovphi,k)$ for some $j \in \type$: For any bad event
  $\varphi$ of $\mathbb{B}_j^k(v)$ (with colored vertices $\ovphi$),
  this function outputs a subset $S$ of $\ovphi$ of size $s_j$ (for
  some value $s_j$ depending only on type $j$), such that uncoloring
  the vertices of $S$ in $\varphi$ yields an allowed coloring.
\end{itemize}
Often the property of leading to an allowed coloring is easy to
fulfill (see Lemma~\ref{lem-USBE}). A set $X$ of partial colorings of
$G$ is \emph{closed upward} (resp. \emph{closed downward}) if starting
from any partial coloring of $X$, coloring (resp. uncoloring) any
uncolored (resp. colored) vertex leads to another coloring of $X$.
\begin{lemma}\label{lem-USBE}
  If every set $\mathbb{F}(u)$ is closed upward, then the set of allowed
  colorings is closed downward. Hence in that case, for any $\varphi\in
  \mathbb{B}(v)$, uncoloring a set $S$ of vertices containing $v$, leads
  to an allowed coloring.
\end{lemma}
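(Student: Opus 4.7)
The plan is to prove the first assertion by induction on the number of colored vertices $|\ovphi|$, and then deduce the second assertion as a direct consequence. The key observation is that the definition of an allowed coloring is recursive, so to show that uncoloring a vertex preserves allowedness, I want to reuse the witness vertex $v$ from the recursive definition of the original allowed coloring.

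The base case $|\ovphi|=0$ is vacuous, since there is no colored vertex to uncolor. For the inductive step, I fix an allowed coloring $\varphi$ with $|\ovphi|\ge 1$, and let $v$ be the witness vertex: $v$ is colored, $\varphi \notin \mathbb{F}(v)$, and the coloring $\varphi_v$ obtained by uncoloring $v$ is allowed. Pick any colored vertex $w$; if $w=v$, the conclusion is immediate. Otherwise, since $\varphi_v$ has fewer colored vertices than $\varphi$, the induction hypothesis applies and the coloring $\varphi_{v,w}$ obtained from $\varphi_v$ by uncoloring $w$ is allowed. It remains to verify that $v$ can still serve as a witness for $\varphi_w$ (the coloring obtained from $\varphi$ by uncoloring $w$). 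Vertex $v$ is still colored in $\varphi_w$ since $w\neq v$, and uncoloring $v$ in $\varphi_w$ yields precisely $\varphi_{v,w}$, which is allowed. The one nontrivial point is to check that $\varphi_w \notin \mathbb{F}(v)$, and this is exactly where the upward-closure hypothesis enters: if $\varphi_w$ were in $\mathbb{F}(v)$, then coloring $w$ in $\varphi_w$ (which gives $\varphi$) would keep the coloring in $\mathbb{F}(v)$, contradicting $\varphi\notin\mathbb{F}(v)$. Hence $\varphi_w$ is allowed.

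For the second statement, given $\varphi \in \mathbb{B}(v)\subseteq \mathbb{F}(v)$, the definition of $\mathbb{B}(v)$ ensures that the coloring obtained from $\varphi$ by uncoloring $v$ is allowed. Then for any set $S\ni v$ of colored vertices, I uncolor the remaining vertices of $S\setminus\{v\}$ one by one: by the first part, each intermediate coloring remains allowed, so the final coloring is allowed as well.

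The main (mild) obstacle is simply identifying that the witness $v$ from the definition of allowedness of $\varphi$ still works as a witness for $\varphi_w$; the upward-closure of $\mathbb{F}(v)$ is exactly the hypothesis needed to rule out that uncoloring $w$ could create a new forbidden configuration anchored at $v$. Everything else is bookkeeping on the recursive definition.
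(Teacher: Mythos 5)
Your proof is correct. It differs from the paper's in organization rather than in substance: the paper proves downward closure by contradiction, unfolding the recursive definition of ``allowed'' into a complete ordering $v_1,\ldots,v_p$ of the colored vertices with each prefix restriction $\varphi_i\notin\mathbb{F}(v_i)$, and then removing the whole set $S$ from every prefix at once, using upward closure of each $\mathbb{F}(v_i)$ to see that the surviving subsequence still witnesses allowedness. You instead induct on the number of colored vertices, uncolor a single vertex $w$ at a time, and reuse only the top-level witness $v$: upward closure of $\mathbb{F}(v)$ rules out $\varphi_w\in\mathbb{F}(v)$ (else coloring $w$ back would put $\varphi$ in $\mathbb{F}(v)$), and the induction hypothesis applied to $\varphi_v$ handles the rest. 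The essential ingredient is identical in both arguments --- upward closure guarantees that the witness condition survives uncoloring --- but your single-vertex formulation gives exactly the ``closed downward'' property as the paper defines it (one vertex at a time), and the set version in the second statement then follows by iterating, which is a clean way to handle an arbitrary $S$ containing $v$; the paper's version buys a one-shot treatment of $S$ at the cost of a slightly terser justification that non-allowedness of $\varphi'$ forces some prefix into some $\mathbb{F}(v_j)$. Your handling of the second statement (first uncolor $v$ using the definition of $\mathbb{B}(v)$, then peel off $S\setminus\{v\}$) matches the paper's.
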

\begin{proof}
  Let us first prove the first statement. Assume for contradiction
  that the set of allowed colorings is not closed downward, that is  there exist an allowed coloring $\varphi$ and a non-empty set
  $S\subset \ovphi$, such that uncoloring the vertices in $S$ leads to
  a non-allowed coloring $\varphi'$.  As $\varphi$ is allowed, there
  exists an ordering $v_1,\ldots,v_p$, with $p=|\ovphi|$, of the
  vertices in $\ovphi$ such that the restriction of $\varphi$ to
  vertices $v_1,\ldots,v_i$, denoted $\varphi_i$, does not belong to
  $\mathbb{F}(v_i)$, for any $i\le p$. Let us denote $\varphi'_i$ the
  coloring obtained from $\varphi_i$ by uncoloring the vertices of $S$
  (if colored). As $\varphi'$ is not allowed, there exists a value
  $1\le j\le p$ such that $\varphi'_j\in \mathbb{F}(v_j)$.  But as
  $\mathbb{F}(v_j)$ is closed upwards, this contradicts the fact that
  $\varphi_j\notin \mathbb{F}(v_j)$.

  Consider now the second statement. For any $\varphi\in
  \mathbb{B}(v)$, uncoloring $v$ leads to an allowed coloring (by
  definition of $\mathbb{B}(v)$). Then the proof follows from the fact
  that allowed colorings are closed downward.
\end{proof}

Finally, to prove the injectivity of \gc, we need that the following
function exists.
\begin{itemize}
\item $\RBE_j(v,X,k,\varphi')$ where $X \subseteq V(G)$, $k\in
  \class_j(v)$, and $\varphi'$ is a partial coloring of $G$: The
  function outputs a bad event $\varphi \in \mathbb{B}_j^k(v)$, such
  that (1) $\ovphi = X$ and (2) uncoloring $\USBE_j(v,\ovphi,k)$ from
  $\varphi$ one obtains $\varphi'$, if such partial coloring $\varphi$
  exists.  Moreover, the partition into classes of $\mathbb{B}_j(v)$
  must be sufficiently refined so that at most one bad event $\varphi$
  fulfills these conditions.
\end{itemize}

\paragraph{Example}\ \smallskip

Let us illustrate our general method with the proofs of
Section~\ref{sec:acyclic} on acyclic vertex-coloring.

Observe that Algorithm~\ref{algo:acm} corresponds to
Algorithm~\ref{algo:gen} for the following settings. For any vertex
$v$, the set $\mathbb{F}(v)$ contains every partial coloring of $G$
with a monochromatic edge or with a bicolored cycle involving $v$.
Then one type (type $1$) corresponds to monochromatic edges, and
several types (type $k$, for $k\ge 2$) correspond to bicolored cycles,
one per possible length of the cycles. Then each type is partitionned
into classes, each of them corresponding to one monochromatic edge or
to one bicolored cycle, respectively. For the uncoloring process, one
can notice that the number of uncolored vertices only depends on the
type of bad events, $s_1=1$ and $s_{k}=2k-2$, and that the set of
uncolored vertices only depend on the class (i.e. the monochromatic
edge or the bicolored cycle). Furthermore, as the sets $\mathbb{F}(v)$
are closed upward and as the current vertex is always uncolored, at
the end of each iteration the partial colorings are always allowed (by
Lemma~\ref{lem-USBE}). Finally, as described in
Subsection~\ref{ssec:acyclic-gamma} there exists a function $\RBE_j$
for each type of bad event $j$.

Similarly, Algorithm~\ref{algo:ac} also corresponds to
Algorithm~\ref{algo:gen}. Here, $\mathbb{F}(v)$ contains every
partial coloring of $G$ with a monochromatic edge $vu$, a
monochromatic special pair $(v,u)$, a properly bicolored $4$-cycle
$(v,u_1,u_2,u_3)$ or a properly bicolored $6$-path
$(u_1,v,u_3,u_4,u_5, u_6)$ with $u_1\prec u_3$.

\subsection{Algorithm \gc and its analysis}

From the previous subsection, we have that for $j\in \type$, $C_j$
and $s_j$ respectively denote the number of type $j$ bad event
classes, and the number of vertices to be uncolored when a type
$j$ bad event occurs. We set
$$Q(x) = 1+ \sum_{j\in\type} C_j x^{s_j}$$

In this subsection, we prove the following:
\begin{theorem}\label{th:main-final}
  The graph $G$ admits an allowed $\kappa$-coloring for any integer $\kappa$ such that 
  $$\kappa\ge
  \min_{0<x\le 1} \frac{Q(x)}x.$$ 
\end{theorem}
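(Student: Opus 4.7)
The plan is a proof by contradiction. I assume $G$ has no allowed $\kappa$-coloring and fix an arbitrarily large integer $t$; I then run Algorithm~\gc on every input vector $V\in\set{1,\ldots,\kappa}^t$. The map $V\mapsto(\varphi,R)$ is injective by Corollary~\ref{cor:injective}: the \texttt{Color} lines of $R$ mark iteration boundaries, and the routine $\RBE_j$ lets one undo the last iteration, recovering both the previous partial coloring and the color $V[i]$ just used; induction on $i$ reconstructs the entire input. Since there are $\kappa^t$ distinct inputs, it suffices to contradict injectivity by showing that the number of outputs is $\smallO{\kappa^t}$.

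I bound the outputs as (partial colorings) times (records). The number of partial colorings $\varphi:V(G)\to\set{\bullet,1,\ldots,\kappa}$ is at most $(\kappa+1)^n$, a constant independent of $t$. To bound the records, I observe that each of the $t$ iterations appends either a single \texttt{Color} line (a ``trivial'' iteration) or a \texttt{Color} followed by an \texttt{Uncolor, Bad Event $j$, $k$} line with $j\in\type$ and $k\in\class_j(v)$; since $v$ is determined by the preceding record, each iteration is fully encoded either by a trivial label or by a pair $(j,k)$ with at most $C_j$ choices for $k$. Assigning weight $1$ to a trivial iteration and $x^{s_j}$ to a type-$j$ iteration (so the exponent tracks the number of uncolored vertices), the weighted enumeration of the possible iteration labels is exactly $Q(x)$, and the weighted enumeration of length-$t$ records is $Q(x)^t$. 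Writing $Q(x)^t=\sum_{U\ge 0}a_U x^U$, the number $N_t$ of distinct records is bounded by $\sum_{U\ge 0}a_U$.

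The refined bound exploits the non-negativity constraint: since $t-U$ (the number of colored vertices after $t$ iterations) must be non-negative, every valid $U$ satisfies $U\le t$. For any $x\in(0,1]$, this gives $x^{U-t}\ge 1$ whenever $U\le t$, whence
\[
N_t \;\le\; \sum_{U\le t} a_U \;\le\; \sum_{U\ge 0}a_U\,x^{U-t} \;=\; x^{-t}Q(x)^t \;=\; \left(\frac{Q(x)}{x}\right)^t.
\]
Combining, the total number of outputs is at most $(\kappa+1)^n(Q(x)/x)^t$. Choosing $x\in(0,1]$ witnessing $\kappa\ge Q(x)/x$, and assuming the inequality is strict (which holds generically and in every concrete application of the paper), this is $\smallO{\kappa^t}$ for $t$ large, the required contradiction. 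The main subtlety I foresee is precisely the boundary case $\kappa=\min_{x\in(0,1]}Q(x)/x$: the generating-function bookkeeping is otherwise routine once one is convinced that the refinement of $\mathbb{B}_j(v)$ into classes is fine enough for each iteration to be uniquely described by its $(j,k)$ label (so that sequences of labels are in bijection with records), but the exact equality case may require a mild perturbation of $x$ or the remark that the bound should be read as $\kappa>Q(x)/x$.
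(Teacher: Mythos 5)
Your argument is correct and complete whenever $\kappa>\min_{0<x\le 1}Q(x)/x$, and in that regime it is genuinely more elementary than the paper's: where the paper decomposes records into annotated partial Dyck paths, sets up the generating-function equations $R(y)=\sum_{\ell\le n}y^\ell B(y)^{\ell+1}$ and $B(y)=1+\sum_j C_j y^{s_j}B(y)^{s_j}$, and invokes the Meir--Moon smooth implicit-function schema to extract asymptotics of the coefficients, you simply encode each iteration by its label, weight it by $x^{s_j}$, and apply the one-line Chernoff-type bound $\sum_{U\le t}a_U\le x^{-t}Q(x)^t$. The injectivity half and the $(\kappa+1)^n$ factor match the paper. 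Since every concrete application in the paper verifies a strict inequality $Q(X)/X<\kappa$, your proof would suffice for all of them.

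However, the theorem as stated allows $\kappa=\min_{0<x\le 1}Q(x)/x$, and there your proof has a genuine gap that your proposed fixes do not close. Your bound on the number of records is $\left(Q(x)/x\right)^t$, with no subexponential savings, so at equality you only get (number of outputs) $\le(\kappa+1)^n\kappa^t$, which does not contradict the existence of $\kappa^t$ injectively mapped inputs. Perturbing $x$ cannot help: by definition of the minimum, $Q(x)/x\ge\kappa$ for \emph{every} $x\in(0,1]$, so no choice of $x$ makes your bound $\smallO{\kappa^t}$. What the paper's heavier machinery buys is precisely the missing polynomial factor: Theorem~\ref{thm:meir_moon} gives $b_t=O\paren{t^{-3/2}\,(Q(X)/X)^t}$, i.e.\ a little-$o$ of $(Q(X)/X)^t$, which survives the case $\kappa=Q(X)/X$. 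The combinatorial source of that $t^{-3/2}$ is the constraint you discard when you relax "the level stays nonnegative at every step and ends between $0$ and $n$" to merely "$U\le t$" --- the same ballot-problem phenomenon that makes the number of Dyck paths $\sim 4^t/t^{3/2}$ rather than $4^t$. To repair your proof without analytic combinatorics you would either have to reinstate and exploit that constraint (e.g.\ via a cycle-lemma/ballot argument on the weighted label sequences), or weaken the theorem to the strict inequality $\kappa>\min_{0<x\le1}Q(x)/x$.
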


Before going further to prove Theorem~\ref{th:main-final}, let us state the two following remarks.

\begin{remark}
  One can observe that the bound obtained when
  all $s_j=1$, namely \linebreak $\kappa\ge 1+\sum_{j\in\type}C_j$, is the same as the
  one obtained by a simple greedy coloring. Indeed, while coloring the current vertex
  $v$, the bad events of type $j$ ``forbid'' at most $C_j$ colors for
  $v$, and so $1+\sum_{j\in\type}C_j$ colors suffice to color the
  graph greedily.
\end{remark}

\begin{remark}
  One can observe that the polynomial $Q(x)$ only depends on the
  values $\displaystyle X_k = \sum_{j\ {\rm s.t.}\ s_j=k}C_j$. One could thus merge
  the bad event types having the same value $s_j$.
\end{remark}

From now on, we
assume that $G$ does not admit an allowed $\kappa$-coloring, this will
lead to a contradiction.  Let $V\in\set{1,2,\ldots,\kappa}^t$ be a
vector of length $t$ for some arbitrarily large $t$.  The
algorithm~\gc (see Algorithm~\ref{algo:gen}) takes the vector $V$ as
input and returns an allowed partial coloring $\varphi$ of $G$ and a
text file $R$ (called the \emph{record}).  Let $\varphi_i$, $v_i$,
$R_i$, and $V_i$ respectively denote the partial coloring obtained by
Algorithm~\gc after $i$ steps, the current vertex $v$ of the
$i^{\rm th}$ step, the record $R$ after $i$ steps, and the input
vector $V$ restricted to its $i$ first elements.  Note that the
algorithm and especially the properties of $\USBE_j(v,\ovphi,k)$
ensure that each $\varphi_i$ is allowed.  As $\varphi_i$ is an allowed
partial $\kappa$-coloring of $G$ and since $G$ has no allowed
$\kappa$-coloring by hypothesis, we have that $\ovphi_i \subsetneq
V(G)$ and that vertex $v_{i+1}$ is well defined. This also implies that $R$
has $t$ \algo{Color} lines. Finally note that $R_i$ corresponds to the
lines of $R$ before the $(i+1)^{\rm th}$ \algo{Color} line.

\begin{lemma}\label{lem:grosMinet}
  One can recover $v_i$ and $\ovphi_i$ from $R_i$.
\end{lemma}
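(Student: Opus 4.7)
The plan is to prove the lemma by induction on $i$, reading the record $R_i$ sequentially from left to right and rebuilding the history of the algorithm step by step. The base case $i=0$ is trivial: $R_0$ is empty and $\ovphi_0 = \emptyset$. For the inductive step, I would first observe that since each iteration of the main loop writes exactly one \algo{Color} line (possibly followed by one \algo{Uncolor, Bad Event j, k} line), one can identify the $i$-th \algo{Color} line in $R_i$ and thus isolate the prefix $R_{i-1}$ consisting of all lines strictly before it. By the induction hypothesis, $R_{i-1}$ determines $\ovphi_{i-1}$.

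Given $\ovphi_{i-1}$, the current vertex $v_i$ is recovered as $v_i = \NUV(\ovphi_{i-1})$, which is the very first action performed during the $i$-th iteration. After $v_i$ is colored on line~6 of \gc, the set of colored vertices momentarily becomes $\ovphi_{i-1}\cup\{v_i\}$. To determine $\ovphi_i$, I inspect the portion of $R_i$ strictly after the $i$-th \algo{Color} line. If it is empty (the $i$-th iteration was a color step only), then $\ovphi_i = \ovphi_{i-1}\cup\{v_i\}$. Otherwise, it contains a single \algo{Uncolor, Bad Event $j$, $k$} line, from which we read off the values $j\in\type$ and $k\in\class_j(v_i)$ explicitly.

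The key point — and the only thing that requires a little care — is that the set of vertices uncolored in response to the bad event is precisely $\USBE_j(v_i,\ovphi_{i-1}\cup\{v_i\},k)$, and by its definition this function depends only on $v_i$, on the set of currently colored vertices, and on $k$, and not on the colors themselves. All three inputs are by now available from $R_i$ alone. Therefore $\ovphi_i$ is obtained from $\ovphi_{i-1}\cup\{v_i\}$ by removing $\USBE_j(v_i,\ovphi_{i-1}\cup\{v_i\},k)$, completing the inductive step.

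I do not anticipate any genuine obstacle: the argument is purely structural bookkeeping on the record. The only subtlety to verify in the write-up is that identifying the $i$-th \algo{Color} line in $R_i$ and the possible \algo{Uncolor} line that follows it is unambiguous (which is clear from the fact that \gc writes at most one \algo{Uncolor} line between consecutive \algo{Color} lines) and that $\USBE_j$ takes only combinatorial input $(v,\ovphi,k)$ rather than the coloring itself (which is guaranteed by the definition in Section~\ref{subsec:requ-fun}). Everything else is mechanical.
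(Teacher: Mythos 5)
Your proof is correct and follows essentially the same route as the paper's: induction on $i$, truncating $R_i$ at the last \algo{Color} line to get $R_{i-1}$, recovering $v_i=\NUV(\ovphi_{i-1})$, and then either $\ovphi_i=\ovphi_{i-1}\cup\{v_i\}$ or $\ovphi_i=(\ovphi_{i-1}\cup\{v_i\})\setminus\USBE_j(v_i,\ovphi_{i-1}\cup\{v_i\},k)$ according to the presence of an \algo{Uncolor} line. Your explicit remark that $\USBE_j$ takes only the combinatorial data $(v,\ovphi,k)$ and not the coloring itself is exactly the point the paper's argument relies on implicitly.
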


\begin{proof}
  By induction on $i$.  Trivially, $\ovphi_0=\emptyset$ and $v_0$ does
  not exist. Consider now $R_{i+1}$ and let us show that we can
  recover $v_{i+1}$ and $\ovphi_{i+1}$. To recover $R_{i}$ from
  $R_{i+1}$ it is sufficient to consider the lines before the last
  (i.e. the $(i+1)^{\rm th}$) \algo{Color} line.
  By induction hypothesis, one can recover $\ovphi_i$ from
  $R_i$. Observe that $v_{i+1}=\NUV(\ovphi_i)$. Let $X=\ovphi_i +
  v_{i+1}$. If the last line of $R_{i+1}$ is a \algo{Color} line, then
  $\ovphi_{i+1}= X$. Otherwise, the last line of $R_{i+1}$ is
  an \algo{Uncolor} line of the form \algo{Uncolor, Bad Event $j$,
    $k$}. Then, we have $\ovphi_{i+1} =
  X\setminus \USBE_j(v_{i+1},X,k)$. That completes the proof.
\end{proof}

\begin{lemma}\label{lem:vo2bis}
  One can recover $V_i$ from $(\varphi_i,R_i)$.
\end{lemma}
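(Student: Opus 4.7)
The plan is to mirror the proofs of Lemmas \ref{lem:vo2} and \ref{lem:vo}, but now carried out abstractly using the generic reconstruction function $\RBE_j$. I would proceed by induction on $i$. The base case $i=0$ is trivial since $V_0$ is empty. For the inductive step, assume the statement holds for $i-1$; given $(\varphi_i,R_i)$, I will recover $R_{i-1}$, $\varphi_{i-1}$, and $V[i]$, and then invoke the induction hypothesis on $(\varphi_{i-1},R_{i-1})$.

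Recovering $R_{i-1}$ is immediate: it is the prefix of $R_i$ consisting of the lines strictly before the $i^{\rm th}$ \algo{Color} line. By Lemma \ref{lem:grosMinet}, both $R_{i-1}$ and $R_i$ determine $\ovphi_{i-1}$, $\ovphi_i$, and $v_i = \NUV(\ovphi_{i-1})$. I then split on the form of the last portion of $R_i$ appended at step $i$, which is either just a \algo{Color} line, or a \algo{Color} line immediately followed by an \algo{Uncolor, Bad Event $j$, $k$} line.

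In the pure color case, $\varphi_i$ is obtained from $\varphi_{i-1}$ by setting $\varphi(v_i)\gets V[i]$, so $V[i]=\varphi_i(v_i)$ and $\varphi_{i-1}$ is $\varphi_i$ with $v_i$ uncolored. In the uncolor case, let $\varphi^*$ denote the (bad) coloring present right after $v_i$ received its color and before any uncoloring; then $\varphi^*\in\mathbb{B}_j^k(v_i)$ with $\overline{\varphi^*}=X:=\ovphi_{i-1}\cup\{v_i\}$, and uncoloring the set $\USBE_j(v_i,X,k)$ in $\varphi^*$ yields $\varphi_i$. By the defining property of $\RBE_j$, the value $\RBE_j(v_i,X,k,\varphi_i)$ is the unique such bad event, so it equals $\varphi^*$. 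From $\varphi^*$ I read off $V[i]=\varphi^*(v_i)$, and $\varphi_{i-1}$ is $\varphi^*$ with $v_i$ uncolored.

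The heart of the argument is the uncolor case, and the whole construction hinges on $\RBE_j$ being well defined, which is precisely why the partition of $\mathbb{B}_j(v)$ into classes $\mathbb{B}_j^k(v)$ was required earlier to be refined enough that at most one bad event is compatible with the reconstruction data $(v,X,k,\varphi_i)$. In both cases we have recovered $\varphi_{i-1}$, $R_{i-1}$, and $V[i]$; the induction hypothesis then yields $V_{i-1}$, and concatenating $V[i]$ gives $V_i$, completing the proof.
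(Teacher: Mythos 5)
Your proposal is correct and follows essentially the same inductive argument as the paper's proof: recover $R_{i-1}$ as the prefix before the last \algo{Color} line, use Lemma~\ref{lem:grosMinet} to get $\ovphi_{i-1}$ and $v_i=\NUV(\ovphi_{i-1})$, then split into the color-only and uncolor cases and apply $\RBE_j$ (whose uniqueness requirement is exactly what makes the reconstruction well defined) in the latter. If anything, your explicit second argument $X=\ovphi_{i-1}\cup\{v_i\}$ for $\RBE_j$ is stated slightly more carefully than in the paper, which writes the colored set of the previous step there.
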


\begin{proof} By induction on $i$. Trivially, $V_0$ (which is empty) 
  can be recovered from $(\varphi_0,R_0)$.  Consider now
  $(\varphi_{i+1},R_{i+1})$ and let us try to recover $V_{i+1}$. By
  induction, it is thus sufficient to recover $R_{i}$, $\varphi_{i}$,
  and the value $V[i+1]$.  As previously seen in the proof of
  Lemma~\ref{lem:grosMinet}, we can deduce $R_i$ from $R_{i+1}$.  By
  Lemma \ref{lem:grosMinet}, we know $\ovphi_{i}$ and we have
  $v_{i+1}=\NUV(\ovphi_i)$.  Note that in the $(i+1)^{\rm th}$ step of
  Algorithm~\gc, we wrote one or two lines in the record:
  exactly one \algo{Color} line followed either by nothing, or by one
  \algo{Uncolor, Bad Event $j$, $k$} line. Let us consider these two
  cases separately:

  \begin{itemize}  
  \item If Step $i+1$ was a color step alone, then $V[i+1]
    = \varphi_{i+1}(v_{i+1})$ and $\varphi_{i}$ is obtained from $\varphi_{i+1}$
    by uncoloring $v_{i+1}$.
    
  \item If the last line of $R_{i+1}$ is \algo{Uncolor, Bad Event $j$,
      $k$}, then the function \linebreak
    $\RBE_j(v_{i+1},\ovphi_i,k,\varphi_{i+1})$ outputs the bad event
    $\varphi'_i$ that occured during this step of the algorithm. Then we
    have that $V[i+1]=\varphi'_i(v_{i+1})$ and that $\varphi_i$
    corresponds to the partial coloring obtained from $\varphi'_i$ by
    uncoloring $v_{i+1}$.
  \end{itemize} 
  This concludes the proof of the lemma.
\end{proof}

\begin{corollary}\label{cor:injective}
  The mapping $V \rightarrow (\varphi,R)$ defined by Algorithm \gc   is injective.
\end{corollary}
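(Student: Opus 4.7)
My plan is to derive the corollary directly from Lemma~\ref{lem:vo2bis}. Recall that injectivity of the mapping $V \mapsto (\varphi,R)$ means that the input is uniquely determined by the output. Since Algorithm~\gc runs for exactly $t$ iterations of the main loop (the discussion before Lemma~\ref{lem:grosMinet} observes that each of the $t$ iterations writes a \algo{Color} line, and the algorithm consumes the whole input), the returned pair $(\varphi,R)$ is precisely $(\varphi_t,R_t)$ in the notation of the analysis.

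Applying Lemma~\ref{lem:vo2bis} with $i=t$ yields that $V_t$ can be recovered from $(\varphi_t,R_t)$. But $V_t$ is the input vector $V$ restricted to its first $t$ entries, hence $V_t = V$. Therefore $V$ is uniquely determined by the output $(\varphi,R)$, which is exactly the statement that the mapping $V \mapsto (\varphi,R)$ is injective.

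There is no real obstacle: all of the work has already been carried out in Lemma~\ref{lem:vo2bis}, whose inductive argument handles the two cases (pure color step versus color-then-uncolor step) and invokes $\RBE_j$ to reconstruct the bad event that triggered each uncoloring. The only thing to make explicit, if desired, is the argument that successive applications of Lemma~\ref{lem:vo2bis} for $i=1,2,\ldots,t$ reconstruct the entries of $V$ one by one; this is already implicit in the induction that proves that lemma, so a one-line appeal suffices.
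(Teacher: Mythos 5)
Your proof is correct and matches the paper's intent: the corollary is stated as an immediate consequence of Lemma~\ref{lem:vo2bis}, exactly as you argue by applying it with $i=t$ and noting that $V_t=V$ since the algorithm writes $t$ \algo{Color} lines and consumes the whole input. Nothing further is needed.
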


\begin{proofof}{Theorem~\ref{th:main-final}}
  First observe that Algorithm \gc can produce at most
  $\smallO{\kappa^t}$ distinct outputs $(\varphi,R)$; indeed, there are at
  most $(1+\kappa)^n$ partial colorings $\varphi$ of $G$ and at most
  $\smallO{\kappa^t}$ records $R$ (by Theorem~\ref{thm:nb_of_rec}, see
  Section~\ref{sec:general}).  This is
  less than the $\kappa^t$ possible inputs (for a sufficiently large
  $t$), and thus contradicts the injectivity of Algorithm \gc (Corollary
  \ref{cor:injective}).  This concludes the proof.
\end{proofof}

\subsection{Extension to list-coloring}

Given a graph $G$ and a list assignment $L(v)$ of colors for every
vertex $v$ of $G$, we say that $G$ admits a $L$-coloring if there is a
vertex-coloring such that every vertex $v$ receives its color from its
own list $L(v)$. A graph is \emph{$k$-choosable} if it is
$L$-colorable for any list assignment $L$ such that $|L(v)|\ge k$ for
every $v$. The minimum integer $k$ such that $G$ is $k$-choosable is
called the \emph{choice number} of $G$. The usual coloring is a
particular case of $L$-coloring (all the lists are equal) and thus the
choice number upper bounds the chromatic number. This notion naturally
extends to edge-coloring and chromatic index.

\bigskip

Until now, our methods were developed for usual colorings
(i.e. without lists). Every algorithm takes a vector of colors $V$ as
input and, at each Step $i$, a vertex $v$ is colored with color $V[i]$
(line 6 of Algorithm~\gc).  It is easy to slightly modify our
procedure to extend all our results to list-coloring. To do so, the
input vector $V$ is no longer a vector of colors but a vector of
indices. Then, at each Step $i$, the current vertex $v$ is
colored with the $V[i]^{\rm{th}}$ color of $L(v)$. We then adapt the
proof of Lemma~\ref{lem:vo2bis} so that $V[i+1]$ is no longer
$\varphi_{i+1}(v_{i+1})$ (or $\varphi'_{i}(v_{i+1})$) but instead it
is the position of $\varphi_{i+1}(v_{i+1})$ (or
$\varphi'_{i}(v_{i+1})$) in $L(v_{i+1})$.

\bigskip

Therefore, Theorems~\ref{thm:acy-delta},~\ref{thm:gamma},
and~\ref{th:main-final} extend to list-coloring. 

\section{Bounding the number of records}
\label{sec:general}

The aim of this section is to prove one of our main theorems, namely
Theorem~\ref{thm:nb_of_rec}, that upper bounds the number of possible
records produced by Algorithm~\gc.

\bigskip

Let us define a class of records $\mathcal{R}$ which includes the
records that Algorithm~\gc could produce in a real execution.  In this
section, let $n=|V(G)|$ be the order of the graph $G$, $\type$ be a
set of bad event types, and $s_j$ and $C_j$ be positive integers for
all $j\in \type$, corresponding to the number of uncolored vertices
and the number of classes associated to the bad events of type $j$.

A record $R\in\mathcal{R}$ is a sequence of \algo{Color} and
\algo{Uncolor, Bad Event $j$, $k$} lines, where $j\in \type$ and
$k\in \{1,\ldots,C_j\}$. The \emph{Dyck paths} are defined as
staircase lattice paths on a square grid, from the lower-left corner
to the upper-right corner, which do not go below the diagonal. We say
that a Dyck path is \emph{partial} when it does not end in the
upper-right corner. The \emph{size} of a (partial) Dyck path is its
number of up-steps. Observe that a record $R\in\mathcal{R}$
can be seen as a \emph{partial Dyck path} where
\begin{itemize}
\item each up-step corresponds to a \algo{Color} line,
\item each descent (maximal sequence of consecutive down-steps) of
  length $\ell$ is annotated with a couple $(j,k)$ and corresponds to
  an \algo{Uncolor, Bad Event $j$, $k$} line where $\ell = s_j$.
\end{itemize}

\begin{figure}[t]
  \subfigure[]{\begin{minipage}{3.5cm}\fontsize{7}{7}\selectfont\begin{enumerate}[]
        \setlength{\itemsep}{-0.1cm}
      \item Color
      \item Color
      \item Uncolor, Bad Event $j_1$, $k_1$ 
      \item Color 
      \item Uncolor, Bad Event $j_2$, $k_2$
      \item Color 
      \item Color 
      \item Color 
      \item Color 
      \item Uncolor, Bad Event $j_3$, $k_3$
      \item Color 
      \item Color 
      \item Uncolor, Bad Event $j_4$, $k_4$
      \item Color 
      \end{enumerate}\end{minipage}} \hfill
  \subfigure[]{\begin{minipage}{10cm}\includegraphics[scale=0.45]{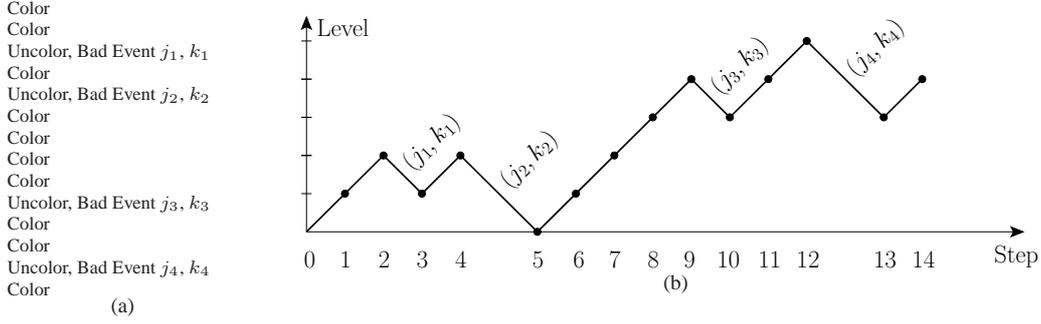}\end{minipage}}
  \caption{(a) A record and (b) its corresponding annotated partial Dyck
    path.}
  \label{fig:Dyck}
\end{figure} 

Observe Figure~\ref{fig:Dyck} which gives an example of such an
annotated partial Dyck path where $s_{j_1} = 1$, $s_{j_2} = 2$,
$s_{j_3} = 1$, and $s_{j_4} = 2$.

From now on, the term \emph{record} refers to both a record produced by
Algorithm~\gc and its corresponding annotated partial Dyck path.

At a given step, it is clear that the level of the record corresponds
to the number of colored vertices in $G$ (for example, at Step $8$ of
Figure~\ref{fig:Dyck}, the graph $G$ has $3$ colored vertices). Thus
the ending level of the record should be between $0$ and $n$. Let us
define the subclass $\mathcal{B} \subseteq \mathcal{R}$ of the records
ending at level 0. In the following, usual Dyck paths will be called
\emph{non-partial} Dyck path to emphasize the difference between Dyck
paths and partial Dyck paths. Hence, $\mathcal{B}$ is the set of
non-partial Dyck paths of $\mathcal{R}$.

It is clear that the size of a record of $\mathcal{R}$ is the
number of \algo{Color} lines. Let $r_t$
(resp. $b_t$) be the number of records of size $t$ in $\mathcal{R}$
(resp. $\mathcal{B}$) for any $t\ge 0$. We thus define the generating
functions of $\mathcal{R}$ and $\mathcal{B}$ as
$$R(y) = \sum_{t\ge 0} r_t y^t \quad \mbox{and} \quad B(y) =
\sum_{t\ge 0} b_t y^t.$$

Let $\mathcal{R}_\ell\subseteq \mathcal{R}$ be the set of records of
$\mathcal{R}$ ending at level $\ell$. Since during the execution
of Algorithm~\gc, every \algo{Uncolor} line follows a \algo{Color}
line, a record $R\in\mathcal{R}_\ell$
can be split into $\ell$ up-steps (which
correspond to the last up-steps between level $i$ and $i+1$, for each $0\le i\le\ell-1$) and
$\ell+1$ records
$\set{B_1,B_2,\ldots,B_{\ell+1}}\subseteq\mathcal{B}$ (See
Figure~\ref{fig:Dyck_R+B}). Hence, the generating function of
$\mathcal{R}_\ell$ is $R_\ell(y)=y^\ell B(y)^{\ell+1}$.
Therefore,

\begin{equation}\label{eqn:R+B}
  R(y) = \sum_{0\le \ell \le n} R_\ell(y) = \sum_{0\le \ell \le n} y^\ell B(y)^{\ell+1}
\end{equation}
\begin{figure}[htb]
  \begin{center}
    \includegraphics[scale=0.6]{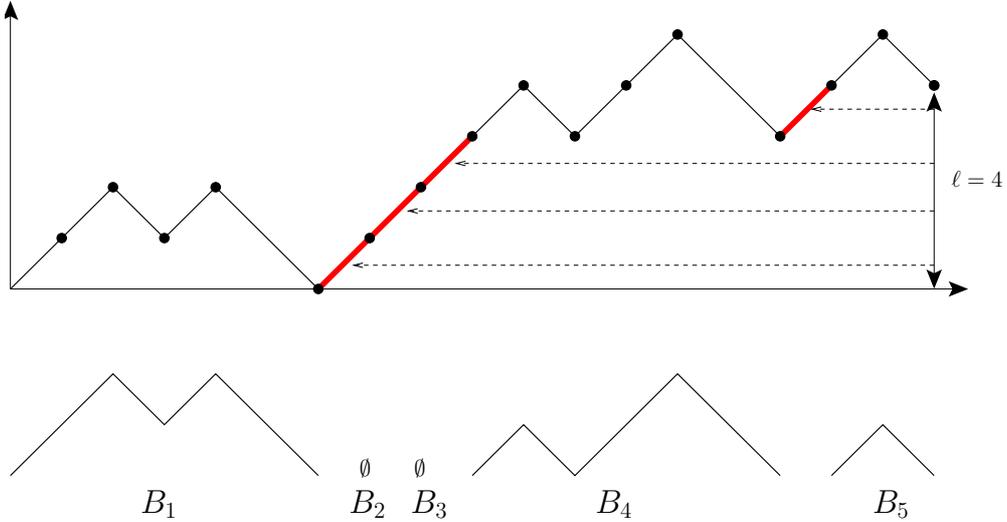}
    \caption{Splitting a partial Dyck path of level $\ell$ into $\ell+1$ non-partial Dyck paths and $\ell$ up-steps.}
    \label{fig:Dyck_R+B}
  \end{center}
\end{figure}

Let $\mathcal{B}_j\subseteq \mathcal{B}$ be the set of records of
$\mathcal{B}$ ending with a descent annotated $(j,k)$ for some $k$
(note that $k$ may take $C_j$ distinct possible values by definition).
Therefore, a record $R\in\mathcal{B}_{j}$ ends with a last up-step and
a last descent of length $s_j$. The subpath $R'$ obtained from $R$ by
removing the last up-step and the last descent belongs to
$\mathcal{R}_{s_j-1}$. Hence, the generating function of
$\mathcal{B}_{j}$ is
$B_{j}(y)=R_{s_j-1}(y) \times y C_j = y^{s_j}C_jB(y)^{s_j}$.
Therefore, since a record $R\in \mathcal{B}$ is either empty (i.e. of
size 0) or ends with a descent annotated $(j,k)$, we have:

\begin{equation}\label{eqn:B+B}
  B(y) = 1+ \sum_{j\in\type} B_j(y) = 1+ \sum_{j\in\type} C_j y^{s_j} B(y)^{s_j}
\end{equation}
We are now ready to prove the following theorem.
\begin{theorem}\label{thm:nb_of_rec}
  Algorithm~\gc produces at most $o\paren{\paren{\frac{Q(x)}x}^t}$
  distinct records with $t$ \algo{Color} lines where
  $Q(x) = 1+ \sum_{j\in\type} C_j x^{s_j}$ and for any
  $x\in]0,1]$.

\end{theorem}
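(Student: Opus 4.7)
The plan is to exploit the two functional equations \eqref{eqn:R+B} and \eqref{eqn:B+B} by recasting them into a form amenable to Lagrange inversion. First, I would observe that equation \eqref{eqn:B+B} can be rewritten as $B(y)=Q(yB(y))$, since $y^{s_j}B(y)^{s_j}=(yB(y))^{s_j}$ and $Q(z)=1+\sum_{j}C_j z^{s_j}$. Setting $F(y)=yB(y)$ yields the functional equation $F(y)=y\cdot Q(F(y))$, which is in the standard Lagrange form.

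Next, I would rewrite equation \eqref{eqn:R+B} in terms of $F$. Since $y^\ell B(y)^{\ell+1}=F(y)^{\ell+1}/y$, we obtain
\[
r_t \;=\; [y^t] R(y) \;=\; \sum_{\ell=0}^{n}[y^{t+1}]F(y)^{\ell+1}.
\]
Applying the Lagrange inversion formula to $F=yQ(F)$ gives, for every $k\ge 1$,
\[
[y^{t+1}]F(y)^k \;=\; \frac{k}{t+1}\,[z^{t+1-k}]\,Q(z)^{t+1},
\]
so that
\[
r_t \;=\; \frac{1}{t+1}\sum_{\ell=0}^{n}(\ell+1)\,[z^{t-\ell}]\,Q(z)^{t+1}.
\]

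The key bounding step is then Cauchy's inequality: since $Q$ has nonnegative coefficients, $[z^{m}]Q(z)^{t+1}\le Q(x)^{t+1}/x^{m}$ for any $x>0$. Plugging in, for $x\in]0,1]$,
\[
r_t \;\le\; \frac{Q(x)^{t+1}}{(t+1)\,x^{t}}\sum_{\ell=0}^{n}(\ell+1)\,x^{\ell} \;\le\; \frac{(n+1)^2\,Q(x)}{t+1}\,\left(\frac{Q(x)}{x}\right)^{t}.
\]
Since $G$ is fixed ($n$ is a constant), the prefactor is $O(1/t)$, and we conclude $r_t=o\bigl((Q(x)/x)^t\bigr)$, as desired.

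The main subtlety is justifying Lagrange inversion: one must check that $F(y)=yQ(F(y))$ defines $F$ as a formal power series with $F(0)=0$ and $F'(0)=Q(0)=1\ne 0$, which is immediate from $Q(0)=1$. The only other minor care needed is that the sum over $\ell$ is finite (bounded by $n$, the number of vertices of $G$), which is precisely what keeps $(n+1)^2$ from growing with $t$ and ensures the asymptotic gain of the $1/(t+1)$ factor yields the little-$o$ conclusion rather than merely a $O$ bound.
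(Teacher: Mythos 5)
Your proof is correct, but it follows a genuinely different and more elementary route than the paper. The paper first reduces partial Dyck paths to complete ones via the Esperet--Parreau observation $r_t\le b_{t+C}$, then extracts the asymptotics of $b_t$ from the functional equation for $B(y)$ using the Meir--Moon smooth implicit-function schema, which forces a detour through the substitution $A(y)=B(y^{1/d})-1$ with $d=\gcd\{s_j\}$, an aperiodicity check, and a separate treatment of the case where all $s_j=1$; the output is $b_t=O(t^{-3/2}(Q(X)/X)^t)$ at the optimal point $X$ (the positive root of $P$), which then implies the claim for every $x\in\,]0,1]$. You instead recast Equation~(\ref{eqn:B+B}) as $F(y)=yQ(F(y))$ with $F=yB$, express $r_t$ through Equation~(\ref{eqn:R+B}) as a finite sum of coefficients of powers of $F$, and apply Lagrange inversion together with the trivial bound $[z^m]Q(z)^{t+1}\le Q(x)^{t+1}/x^m$ (valid since $Q$ has nonnegative coefficients), getting the explicit estimate $r_t\le (n+1)^2\,Q(x)\,(Q(x)/x)^t/(t+1)$, where the factor $1/(t+1)$ coming from Lagrange inversion and the boundedness of the level ($\ell\le n$, $x\le 1$) deliver the little-$o$. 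Your checks are the right ones: $F$ is a well-defined formal power series because $Q(0)=1\ne 0$ (and $Q$ is effectively a polynomial since $s_j\le n$), and the sum over $\ell$ is finite because $G$ is fixed. What each approach buys: yours is self-contained, fully explicit, valid simultaneously for every $x\in\,]0,1]$, and needs no analytic machinery, no gcd/aperiodicity discussion, and no case split for $s_j=1$; the paper's analytic route yields the sharper polynomial correction $t^{-3/2}$ at the optimum, but that extra precision is not needed for the theorem or its applications, where only the exponential rate $\min_{0<x\le1}Q(x)/x$ matters.
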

In practice, our aim is to minimize the value of
$\frac{Q(x)}x$. Observe that:

\begin{remark}\label{rem:root} In Theorem~\ref{thm:nb_of_rec}, the minimum value of
  $\frac{Q(x)}x$ is as follows:
  \begin{itemize}
  \item If $s_j=1$ for all $j\in\type$, then the minimum is reached for $x=1$
    and $\displaystyle \frac{Q(x)}x =1 + \sum_{j\in\type} C_j$.
  \item Otherwise, the minimum is reached for the unique positive root
    of the polynomial $\displaystyle P(x)= -1 +\sum_{j\in\type}(s_j-1)
    C_j x^{s_j}$.
  \end{itemize}
\end{remark}

\begin{proofof}{Theorem~\ref{thm:nb_of_rec}}
  Let $\displaystyle\lambda=\min_{0< x \le 1}\frac{Q(x)}x$. 
  Let us prove that Algorithm~\gc produces at most $o(\lambda^t)$
  distinct records:
  it suffices to bound $r_t$ (the number of records of size $t$
  of $\mathcal{R}$) by $o(\lambda^t)$. 
  
  If $s_j=1$ for all $j\in\type$, then
  $b_t = \left( \sum_{j\in \type} C_j \right)^t = \left(\lambda -1
  \right)^t$
  by Equation~(\ref{eqn:B+B}).  It follows that
  $r_t = \sum_{0\le \ell \le n} {t \choose \ell} \left( \lambda -1
  \right)^{t-\ell}$
  for sufficiently large $t$ by Equation~(\ref{eqn:R+B}). Finally,  
  $r_t < (n+1)t^{n+1} \left( \lambda -1 \right)^{t}$ and therefore
  $r_t = o(\lambda^t)$.

  From now on, we consider the case where $s_j \ge 2$ for some
  $j\in \type$. As observed by Esperet and
  Parreau~\cite[Lemma~6]{EP12}, there is a constant $C$ (depending
  only on the lengths of the descents) such that $r_t \le b_{t+C}$. It
  suffices hence to show that $b_t = o(\lambda^t)$. For that purpose we
  make use of the \emph{smooth implicit-function schema}\footnote{The
    smooth implicit-function schema is given in \ref{sec:meir_moon}.}
  (SIFS for short) of Meir and Moon~\cite{MM89} (see also Flajolet and
  Sedgewick's book~\cite[Section VII.4.1]{FS08}). 
  Function $B(y)$ does not satisfy the SIFS and we thus
  introduce the function $A(y)$ defined by $A(y) = B(y^\frac1d)-1$
  where $d = \gcd\{s_j\ |\ j\in\type \}$.  We prove in the following
  that $A(y)$ satisfies the SIFS. Note that the size of Dyck
  paths of $\mathcal{B}$ is multiple of $d$. Therefore, we have:
  $$B(y) = \sum_{t\ge 0}b_ty^t = \sum_{t\; \mbox{\scriptsize multiple of} \; d}b_ty^t.$$
  Thus $B(y^\frac1d) = 1+\sum_{t\ge 1}b_{dt}y^t$. Hence
  $A(y)=\sum_{t\ge 0} a_t y^t$ with $a_0=0$ and $a_t=b_{dt}$ for
  $t\ge 1$. Thus $A(y)$ is analytic at 0, $a_0=0$, and $a_t\geq 0$
  for all $t\geq 0$. Furthermore, note that for any sufficiently large
  $t$, the integer $dt$ can be written as a sum which summands belong
  to $\{s_j \ | \ j\in\type\}$. Hence $a_t=b_{dt} >0$ for any
  sufficiently large $t>0$. It follows that $A(y)$ is
  aperiodic\footnote{\emph{Aperiodic} is used in the usual sense of
    Definition~IV.5 of Flajolet-Sedgewick's
    book~\cite{FS08}. Equivalently, there exist three indices
    $i<j<k$ such that $a_ia_ja_k\neq 0$ and $\gcd(j-i,k-i)=1$.}.  By
  Equation~(\ref{eqn:B+B}), we have $A(y) = G(y,A(y))$ for the
  bivariate function $G$ defined by
  $$G(y,z) = \sum_{j\in\type} C_j y^{s_j/d} \left(z+1\right)^{s_j}.$$ 
  Observe that
  $$G(y,z) = \sum_{j\in\type} \sum_{0\le i\le s_j} {s_j \choose i} C_j
  y^{s_j/d} z^{i},$$
  and hence $G(y,z)$ is a bivariate power series satisfying the
  following conditions: 
  \begin{enumerate}
  \item[$(a)$] $G(y,z)$ is analytic in the domain $|y| < +\infty$ and
    $|z| < +\infty$.
  \item[$(b)$] Setting $G(y,z) = \sum_{m,n \ge 0} g_{m,n} y^{m} z^{n}$,
    the coefficients of $G$ satisfy $g_{m,n} \geq 0$, ${g_{0,0}=0}$,
    $g_{0,1}=0$, and $g_{\frac{s_j}{d},s_j}>0$ for the $j\in\type$ such
    that $s_j \geq 2$.
  \item[$(c)$] There exist two positive numbers $r$ and $s$ satisfying the
    system of equations\footnote{ $G_z$ denotes the derivative of $G$ with respect to its second
      variable.}
    $$G(r, s)=s~ \textrm{ and } ~G_z(r, s)=1.$$ 

    Indeed, by setting $X=r^{1/d}(s+1)$, these two equations
    respectively become
    $$\sum_{j\in \type} C_j X^{s_j} =s~ \textrm{ and } ~\sum_{j\in \type} s_j C_j X^{s_j} =s+1.$$ 
    By substracting the first one to the second one, we obtain that $X$ is the
    unique positive root of $P(x)$ (see Remark~\ref{rem:root})  which exists. The first equation hence
    clearly defines $s$. In this first equation adding 1 to both sides,
    and then multiplying them both by $r^{1/d}$, one obtains that $r=
    \left( X / Q(X) \right)^d$.
  \end{enumerate}
  Hence $A(y)=\sum_{t\ge 0} a_t y^t$ satisfies a \emph{smooth
    implicit-function schema} with \emph{characteristic system}
  $(r,s)$, see Definition~\ref{def:SIFS} of \ref{sec:meir_moon}.
  By Theorem~\ref{thm:meir_moon}, we have that $a_t=O\left(
    t^{-\frac32}r^{-t} \right)$. It follows that $a_t=o\left( r^{-t}
  \right)$ and $b_t=o\left( r^{-t/d} \right)=o\left(
    \paren{\frac{Q(X)}{X}}^t \right)$. As $X$ is the unique positive
  root of $P(x)$, this concludes the proof.~\end{proofof}


\section{Some applications of the method to graph coloring problems}
\label{sec:application}

In this section, we apply the framework described in
Section~\ref{sec:general_la_vraie} to different coloring problems. We
improve several known upper bounds by at least an additive constant
and sometimes also by a constant factor. More importantly, this
framework allows simpler proofs with only few
calculations. Indeed, directly using Theorem~\ref{th:main-final}, one avoids the
calculations made in Section~\ref{sec:general}.

\subsection{Non-repetitive coloring}
In a vertex (resp. edge) colored graph, a \emph{$2j$-repetition} is a
path on $2j$ vertices (resp. edges) such that the sequence of colors
of the first half is the same as the sequence of colors of the second
half.  A coloring with no $2j$-repetition, for any $j\ge 1$, is called
\emph{non-repetitive}.  Let $\pi(G)$ be the \emph{non-repetitive
  chromatic number} of $G$, that is the minimum number of colors
needed for any non-repetitive vertex-coloring of $G$. By extension,
let $\pi_l(G)$ be the \emph{non-repetitive choice number} of $G$.
These notions were introduced by Alon {\em et al.} \cite{AGHR02}
inspired by the works on words of Thue \cite{Thu06}.  See~\cite{Gry07}
for a survey on these parameters. Dujmovi\'c {\em et al.}~\cite{DJ+12}
proved that every graph $G$ with maximum degree $\Delta$ satisfies
$\pi_l(G) \le \left\lceil \left( 1 + \frac{1}{\Delta^{\frac13}-1} +
    \frac{1}{\Delta^{\frac13}}\right) \Delta^2 \right \rceil =
\Delta^2 + 2\Delta^{\frac53}+O(\Delta^{\frac43})$
colors. However, their technique could provide tighter bounds from the
second term on~\cite{Jor14}. Here, we provide a simple and
short proof of the following bound.

\begin{theorem}
  Let $G$ be a graph with maximum degree $\Delta\ge 3$. We have:
  $$\pi_l(G) \le
  \ceil{\Delta^2 + \frac{3}{2^{\frac23}}
    \Delta^{\frac53} + \frac{2^{\frac23} \Delta^{\frac53}}{
      \Delta^{\frac13}-2^{\frac13}}} = \Delta^2 +
  \frac{3}{2^{\frac23}} \Delta^{\frac53} + O(\Delta^{\frac43})
  \qquad (\mbox{Note that }\frac3{2^{\frac23}} \approx 1.89)$$
\end{theorem}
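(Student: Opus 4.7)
The plan is to apply Theorem~\ref{th:main-final} together with its list-coloring extension, setting up Algorithm~\gc so that its bad events capture exactly the $2j$-repetitions of a non-repetitive coloring. For every vertex $v$ I would let $\mathbb{F}(v)$ be the set of partial colorings in which $v$ lies on some $2j$-repetition, i.e.\ a path $(u_1,\ldots,u_{2j})$ with $\varphi(u_i)=\varphi(u_{j+i})$ for all $1\le i\le j$. Colouring further vertices never destroys an existing repetition, hence each $\mathbb{F}(v)$ is closed upward, so by Lemma~\ref{lem-USBE} every uncolouring step that contains $v$ produces an allowed coloring. The set of bad event types is $\type=\{1,2,\ldots,\lfloor n/2\rfloor\}$, with type $j$ standing for the $2j$-repetitions through $v$.

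For each type $j$ I would exploit the reversal symmetry of repetitions: since the path $(u_{2j},\ldots,u_1)$ is also a $2j$-repetition, I may list only those orderings in which $v$ appears among $u_{j+1},\ldots,u_{2j}$. When such a bad event fires, I would uncolour exactly the second half $u_{j+1},\ldots,u_{2j}$, so $s_j=j$. A class $k\in \class_j(v)$ is then the pair (ordered path, position of $v$ in the second half). The corresponding $\RBE_j$ is well defined because, knowing the ordered path, we read $\varphi(u_{j+i})$ off $\varphi(u_i)$ from the surviving first half. To bound $C_j$, for each of the $j$ possible positions of $v$ in the second half there are at most $\Delta^{2j-1}$ walks of length $2j-1$ through a fixed vertex, yielding $C_j\le j\Delta^{2j-1}$; this is consistent with $C_1\le\Delta$, $s_1=1$ for a monochromatic edge.

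Theorem~\ref{th:main-final} then asks me to upper bound $Q(x)/x$ where
$$Q(x)=1+\sum_{j=1}^{\lfloor n/2\rfloor} j\,\Delta^{2j-1}x^{j}\;\le\; 1+\frac{\Delta x}{(1-\Delta^{2}x)^{2}}$$
for every $x$ with $\Delta^{2}x<1$. Substituting $y=\Delta^{2}x$ gives the compact form
$$\frac{Q(x)}{x}\;\le\;\frac{\Delta^{2}}{y}+\frac{\Delta}{(1-y)^{2}}.$$
Setting the derivative to zero leads to $2y^{2}=\Delta(1-y)^{3}$, which motivates the explicit choice $y=1-2^{1/3}\Delta^{-1/3}$; this lies in $(0,1)$ precisely when $\Delta\ge 3$, and the associated $x=y/\Delta^{2}$ lies in $(0,1]$. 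Plugging this $y$ into $\Delta^{2}/y+\Delta/(1-y)^{2}$ and using the identity $2^{1/3}+2^{-2/3}=3\cdot 2^{-2/3}$ collapses the expression to exactly
$$\Delta^{2}+\frac{3}{2^{2/3}}\Delta^{5/3}+\frac{2^{2/3}\Delta^{5/3}}{\Delta^{1/3}-2^{1/3}},$$
which is the announced bound, valid for list colorings by the extension at the end of Section~\ref{sec:general_la_vraie}.

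The real content, rather than any single hard step, is choosing the uncolouring rule with the right $s_j$. Keeping only $v$ uncolored makes all $s_j=1$ and forces $\kappa\ge 1+\sum_{j}C_{j}$, which diverges; uncolouring the whole path makes $s_j=2j$ and wastes the investment of earlier colouring steps. Uncolouring the second half is exactly what the reversal symmetry makes legal, and it is what telescopes the generating function to the clean $1+\Delta x/(1-\Delta^{2}x)^{2}$. Once this is in hand the remaining work is the algebraic verification that $y=1-2^{1/3}/\Delta^{1/3}$ witnesses the stated closed form, which is a routine but somewhat delicate expansion of $\Delta^{2}/(1-\beta)+\Delta/\beta^{2}$ at $\beta=2^{1/3}/\Delta^{1/3}$.
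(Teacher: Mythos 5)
Your proposal is correct and follows essentially the same route as the paper: the same bad events ($2j$-repetitions through $v$) with $C_j=j\Delta^{2j-1}$ and $s_j=j$ (uncolouring the half containing $v$), the same bound $Q(x)<1+\Delta x/(1-\Delta^2x)^2$, and the very same evaluation point, since your $y=1-2^{1/3}\Delta^{-1/3}$ is exactly the paper's $X=\frac{1}{\Delta^2}-\left(\frac{2}{\Delta^7}\right)^{1/3}$, and your algebraic identity does yield the stated closed form. The only step you leave implicit is the short induction showing that every allowed coloring is non-repetitive, which the paper spells out but is routine given that your sets $\mathbb{F}(v)$ are the upward closures of the repetition events.
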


\begin{proof}
  To do this, let us use the framework as follows. Let $G$ be any
  graph with maximum degree $\Delta$, and let $n$ denote its number of
  vertices. In this application, the sets $\mathbb{F}(v)$ are closed
  upward.  We directly proceed to the description of the bad events
  $\mathbb{B}(v)$ and the description of the
  required functions. Then, from the set $\mathbb{B}(v)$, we define the
  set $\mathbb{F}(v)$ as its upward closure.  

  \begin{itemize}
  \item Let $\prec$ be any total order on the vertices of
    $G$. $\NUV(\ovphi)$ returns the first uncolored vertex according to
    $\prec$.

  \item Let $\mathbb{B}(v)$ be the set of bad events $\varphi$ anchored
    at $v$ such that vertex $v$ belongs to a repetition in
    $\varphi$. The set $\mathbb{B}(v)$ is partitioned into subsets
    $\mathbb{B}_j(v)$, for $1\le j\le n/2$, in such a way that in every
    $\varphi \in \mathbb{B}_j(v)$ the vertex $v$ belongs to a
    $2j$-repetition.
    Let $\class_j(v)$ be the set of $2j$-vertex paths going through
    $v$. Each set $\mathbb{B}_j(v)$ is partitioned into subsets
    ${\mathbb B}_j^P(v)$ according to the path $P\in \class_j(v)$
    supporting the repetition. If in a bad event $\varphi \in
    \mathbb{B}(v)$ the vertex $v$ belongs to several repetitions, then one of
    the repetitions is chosen arbitrarily to set the value $j$ and the
    path $P$ such that $\varphi\in\mathbb{B}_j^P(v)$. Let
    $C_j=j\Delta^{2j-1}$ as this upper bounds $|\class_j(v)|$. Indeed,
    there are $\Delta^{2j-1}$ possible paths on $2j$ vertices where $v$
    has a given position, and $2j$ possible positions for $v$, but in
    that case every path is counted twice.
  \end{itemize}

  Let us prove that any partial allowed coloring $\varphi$ is a non-repetitive
  coloring. We proceed by induction on the number of colored vertices
  of $\varphi$. If there is no colored vertex, then $\varphi$ is
  clearly non-repetitive. Otherwise, there exists a colored vertex $v$
  such that $\varphi\not\in\mathbb{F}(v)$ and uncoloring $v$ leads to
  a partial allowed coloring $\varphi'$. By induction, $\varphi'$ is
  non-repetitive. Thus, if $\varphi$ contains a repetition, then $v$ is
  necessarily involved. In such a case, we would have
  $\varphi\in\mathbb{F}(v)$, a contradiction.

  \begin{itemize}
  \item The function $\USBE_j(v,\ovphi,P)$ outputs the half of $P$
    containing $v$, and thus $s_j=j$. By Lemma~\ref{lem-USBE}, this
    function fulfills all the requirements.

  \item Given $P$ and the sequence of colors of one half of $P$ (which
    is colored in $\varphi'$), it is easy to recover the sequence of
    colors of the other half of $P$, and so
    $\RBE_j(v,X,P,\varphi')$ is well-defined.
  \end{itemize}
  Consider now
  \begin{eqnarray*}
    Q(x) = 1+\sum_{1\le j \le n/2} C_jx^{s_j}& = & 1+\sum_{1\le j \le n/2} j\Delta^{2j-1} x^j\\
                                             & < & 1 + \frac{\Delta x}{(\Delta^2 x - 1)^2} \mbox{\hspace{1cm} if }\ \ x < \frac1{\Delta^2}
  \end{eqnarray*}
  By setting $X=\frac{1}{\Delta^2}-\left(\frac{2}{\Delta^7}\right)^{\frac13}$ ($X>0$ as $\Delta\ge 3$), one obtains that
  \begin{eqnarray*}
    \frac{Q(X)}{X} & < & \Delta^2 +
                         \frac{3}{2^{\frac23}} \Delta^{\frac53}  + \frac{2^{\frac23}
                         \Delta^{\frac53}}{ \Delta^{\frac13}-2^{\frac13}} 
  \end{eqnarray*}
  By Theorem~\ref{th:main-final}, $G$ admits an allowed
  coloring (hence a non-repetitive coloring) with $\ceil{Q(X)/X}$ colors.
  This concludes the proof of the theorem.
\end{proof}

\bigskip
An edge-coloring is called \emph{non-repetitive} if, for every path
with an even number of edges, the sequence of colors of the first half
differs from the sequence of colors of the second half.  The minimim
number of colors needed to have such a coloring on the edges of $G$ is
called the \emph{Thue index} of $G$, and is denoted by $\pi'(G)$. By
extension, let $\pi'_l(G)$ be the \emph{Thue choice index} of $G$. Alon {\em
  et al.}~\cite{AGHR02} proved that every graph $G$ with maximum
degree $\Delta$ satisfies $\pi'(G) \le c \Delta^2$ with
$c=2e^{16}+1$. We can prove:

\begin{theorem}
  Let $G$ be a graph with maximum degree $\Delta\ge 3$. Then
  $$
  \pi'_l(G)\le \Delta^2 + 2^{\frac43} \Delta^{\frac53} + O(\Delta^{\frac43}).
  $$
\end{theorem}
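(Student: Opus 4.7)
The plan is to apply the framework of Section~\ref{sec:general_la_vraie} to edge coloring, following almost verbatim the vertex proof of the preceding Thue choice bound. Since the framework is phrased for vertex colorings, I adapt it so that the algorithm colors edges of $G$: let $\prec$ be any total order on $E(G)$, let $\NUV$ return the smallest uncolored edge, and let partial colorings be maps $E(G)\to\{\bullet,1,\dots,\kappa\}$. The forbidden sets $\mathbb{F}(e)$ are taken to be the upward closures of the bad events $\mathbb{B}(e)$, where $\mathbb{B}(e)$ consists of partial edge colorings in which $e$ belongs to a repetition.

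I stratify $\mathbb{B}(e)$ by half-length of the repetition: $\varphi\in\mathbb{B}_j(e)$ when $e$ lies in a $2j$-edge repetition, for $1\le j\le\lfloor n/2\rfloor$. Inside each $\mathbb{B}_j(e)$, the class labels are $2j$-edge paths $P$ through $e$ (picking one arbitrarily if several repetitions through $e$ coexist). I set $\USBE_j(e,\ovphi,P)$ to be the $j$-edge half of $P$ containing $e$, so $s_j=j$; upward closure of $\mathbb{F}(e)$ combined with Lemma~\ref{lem-USBE} then guarantees that the coloring obtained after uncoloring is allowed. The inverse map $\RBE_j$ is well-defined since, given $P$ and the colored half, the repetition pattern forces the colors of the uncolored half.

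The substantive step is the bound on $C_j$, the number of $2j$-edge paths through $e$. For each of the $2j$ positions of $e$ in an oriented path and each of the $2$ orientations of $e$ matching that position, the two sides of $e$ admit at most $(\Delta-1)$ non-backtracking extensions per edge, yielding $(\Delta-1)^{2j-1}$ completions; summing over positions and orientations and dividing by $2$ to pass from oriented to unordered paths gives
\[
C_j \;\le\; 2j(\Delta-1)^{2j-1}\;\le\; 2j\Delta^{2j-1}.
\]
This is exactly twice the vertex-case bound $j\Delta^{2j-1}$ used for $\pi_l(G)$, the extra factor coming from the two orientations an edge (unlike a vertex) can carry inside a path. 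I expect this doubling to be the only non-routine point in the whole argument.

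Feeding this into the generating polynomial and reusing the evaluation point $X=\tfrac{1}{\Delta^{2}}-(2/\Delta^{7})^{1/3}$ from the vertex proof, one has
\[
Q(x)=1+\sum_{j\ge 1}2j\Delta^{2j-1}x^{j}\;<\;1+\frac{2\Delta x}{(1-\Delta^{2}x)^{2}}\qquad\bigl(0<x<\tfrac{1}{\Delta^{2}}\bigr),
\]
and writing $\Delta^{2}X=1-\epsilon$ with $\epsilon=(2/\Delta)^{1/3}$, the expansion $\tfrac{1}{X}=\Delta^{2}+2^{1/3}\Delta^{5/3}+O(\Delta^{4/3})$ combines with $\tfrac{2\Delta}{\epsilon^{2}}=2^{1/3}\Delta^{5/3}$ to give $Q(X)/X=\Delta^{2}+2^{4/3}\Delta^{5/3}+O(\Delta^{4/3})$. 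Theorem~\ref{th:main-final} then delivers the stated bound on $\pi'_{l}(G)$.
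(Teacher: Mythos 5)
Your proposal is correct and follows essentially the same route as the paper: the paper's proof of this theorem is precisely the vertex argument with $C_j=2j\Delta^{2j-1}$ (your orientation-doubling count), the same class/uncoloring/recovery structure, and the same evaluation point, and your asymptotic expansion of $Q(X)/X$ correctly yields $\Delta^2+2^{4/3}\Delta^{5/3}+O(\Delta^{4/3})$. You merely spell out the computation the paper leaves implicit.
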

The only difference with the vertex case is that
$C_j=2j\Delta^{2j-1}$.

\subsection{Facial Thue vertex-coloring}

We consider in this subsection a slight variation of non-repetitive
coloring which applies to plane graphs (i.e. embedded planar
graphs). Here the restriction on repetitions only applies on facial
paths. More formally, consider a plane graph $G$. A {\em facial path}
of $G$ is a path on consecutive vertices on the boundary walk of some
face of $G$. A vertex-coloring of $G$ is said to be {\em facially
  non-repetitive} if none of the facial paths is a repetition. The
notion can be extended to list coloring. Let $\pi_{f}(G)$
(resp. $\pi_{fl}(G)$) denote the {\em facial Thue chromatic number}
(resp. {\em facial Thue choice number}) that is the minimum integer
$k$ such that $G$ is facially non-repetitively $k$-colorable
(resp. $k$-choosable).  Bar\'at and Czap~\cite{BC13} proved that for
any plane graph $G$, $\pi_f(G)\le 24$.  Whether the facial Thue choice
number of plane graphs could be bounded from above by a constant is
still an open question. Recently Przyby\l o {\em et al.}  \cite{PSS13}
proved that, if $G$ is a plane graph of maximum degree $\Delta$, then
$\pi_{fl}(G) \le 5\Delta$, and asymptotically, $\pi_{fl}(G)\le
(2+o(1)) \Delta$. We improve these upper bounds as follows:

\begin{theorem}
  Let $G$ be a plane graph with maximum degree $\Delta\ge 2$. Then,
  $$
  \pi_{fl}(G) \le \ceil{ \Delta + 4\sqrt{\Delta} +3 }
  $$
\end{theorem}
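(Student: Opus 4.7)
The plan is to instantiate the general framework of Theorem~\ref{th:main-final} in direct analogy with the non-repetitive vertex-coloring subsection, restricting all ``bad'' paths to facial ones. Fix an arbitrary total order $\prec$ on $V(G)$ and let $\NUV(\ovphi)$ return the smallest uncolored vertex. For each vertex $v$ and each $j\ge 1$, declare $\mathbb{B}_j(v)$ to be the set of partial colorings in which $v$ belongs to a facial $2j$-repetition, refined into classes $\mathbb{B}_j^P(v)$ indexed by the specific facial $2j$-vertex path $P = (u_1,\ldots,u_{2j})$ carrying the repetition (ties broken arbitrarily when $v$ lies on several). The function $\USBE_j(v,\ovphi,P)$ outputs the $j$ vertices of the half of $P$ containing $v$, so $s_j = j$. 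The equation $\varphi(u_i)=\varphi(u_{i+j})$ determines the colors of the erased half from the still-colored opposite half, so $\RBE$ is well defined. Taking $\mathbb{F}(v)$ as the upward closure of $\mathbb{B}(v)$ makes it closed upward, so Lemma~\ref{lem-USBE} guarantees that each uncoloring step returns to an allowed coloring.

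The key estimate is to bound $C_j = |\class_j(v)|$. For $j=1$, a facial $2$-repetition at $v$ is simply a monochromatic edge incident to $v$, so $C_1 \le \Delta$. For $j\ge 2$, I would parametrize a facial $2j$-vertex path through $v$ by (i) the position $i\in\{1,\ldots,2j\}$ of $v$ in the path, and (ii) for $2\le i\le 2j-1$, a corner of $v$ (a pair of consecutive edges at $v$ in the rotational order around $v$), or, for $i\in\{1,2j\}$, an edge at $v$ together with one of the two faces on its sides; the path is then forced by the face walk. Summing appearances of $v$ across the boundary walks of all incident faces gives exactly $\deg(v)\le\Delta$ corners at $v$, and a short bookkeeping yields $C_j \le (2j-2)\Delta + 2\Delta = 2j\Delta$.

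Plugging these values into Theorem~\ref{th:main-final} produces
\[
Q(x) \;=\; 1 + \Delta x + \sum_{j\ge 2}2j\Delta\, x^{j} \;=\; 1 - \Delta x + \frac{2\Delta x}{(1-x)^{2}}.
\]
The choice $X = \tfrac{1}{2\sqrt{\Delta}}$, which lies in $(0,1]$ for $\Delta\ge 2$, is intended to balance the $1/x$ term against the pole of $1/(1-x)^{2}$. A Taylor expansion $1/(1-X)^{2} = 1 + 1/\sqrt{\Delta} + 3/(4\Delta) + O(\Delta^{-3/2})$ then gives
\[
\frac{Q(X)}{X} \;=\; \Delta \;+\; 4\sqrt{\Delta} \;+\; \tfrac{3}{2} \;+\; O(\Delta^{-1/2}),
\]
which is strictly less than $\Delta + 4\sqrt{\Delta} + 3$ for every $\Delta\ge 2$ (the few borderline small values are checked by direct substitution). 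Applying Theorem~\ref{th:main-final} together with its list-coloring extension then yields $\pi_{fl}(G) \le \lceil \Delta + 4\sqrt{\Delta} + 3 \rceil$, as required.

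The main obstacle I expect is the combinatorial count $C_j$: face boundaries of a plane graph need not be simple cycles (bridges, cut vertices, and multi-edges may cause $v$ to appear several times on the same face), and one must verify that the ``corner + position'' parametrization really covers every facial $2j$-path through $v$ exactly once. The point of counting corners (rather than faces) is precisely that appearances of $v$ sum to $\deg(v)$, which absorbs those multiplicities. Once this count is clean, everything else (the setup of $\mathbb{B}_j^P(v)$ and $\RBE$, the minimization of $Q(x)/x$, and the verification $0<X\le 1$) is a direct reuse of the non-repetitive vertex-coloring proof.
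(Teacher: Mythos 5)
Your proposal is correct and follows essentially the same route as the paper: identical bad events (facial $2j$-repetitions classed by the supporting facial path), the same parameters $s_j=j$, $C_1=\Delta$, $C_j=2j\Delta$, the same evaluation point $X=\tfrac{1}{2\sqrt{\Delta}}$, and the same appeal to Theorem~\ref{th:main-final} with its list extension; your closed form $1-\Delta x+\tfrac{2\Delta x}{(1-x)^2}$ is algebraically the same bound the paper uses, and the inequality $Q(X)/X<\Delta+4\sqrt{\Delta}+3$ indeed holds exactly for all $\Delta\ge 2$. The only cosmetic difference is your corner-based bookkeeping for $C_j\le 2j\Delta$, which is a slightly more careful justification of the count the paper states via "$\Delta$ faces times $2j$ positions".
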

\begin{proof} Let $G$ be a plane graph with maximum degree $\Delta$.
  In this application, the sets $\mathbb{F}(v)$ are closed upward.  We
  directly proceed to the description of the bad events $\mathbb{B}(v)$
  and the description of the required functions. Then, from the set
  $\mathbb{B}(v)$, we define the set $\mathbb{F}(v)$ as its upward
  closure.
  \begin{itemize}
  \item As previously, let $\prec$ be any total order on the vertices of
    $G$. $\NUV(\ovphi)$ returns the first uncolored vertex according to
    $\prec$.
  \item For $1\le j \le \floor{n/2}=p$, let $\mathbb{B}_j(v)$ be the set
    of bad events $\varphi$ such that vertex $v$ belongs to a repetition
    on a facial $2j$-vertex path $P$. Let $\class_j(v)$ be the set of
    facial $2j$-vertex paths going through $v$. Each set
    $\mathbb{B}_j(v)$ is partitioned into sets $\mathbb{B}^P_j(v)$, for
    every $P\in \class_j(v)$, according to the path $P$ supporting the
    repetition. The number of obtained classes is such
    that we set $C_1= \Delta$ and $C_j= 2j\Delta$ for $j\ge 2$.
    Indeed, there are at most $\Delta$ possible faces for containing $P$, and
    $2j$ positions for $v$ in $P$.
  \end{itemize}
  Let us prove that any partial allowed coloring $\varphi$ is a facial
  non-repetitive coloring.  Proceed by induction on the number of
  colored vertices of $\varphi$. Either $\varphi$ has no colored
  vertex and it is facially non-repetitive, or there exists a colored
  vertex $v$ such that $\varphi\not\in\mathbb{F}(v)$ and uncoloring
  $v$ leads to a partial allowed coloring $\varphi'$, that is hence
  facial non-repetitive. Thus, if $\varphi$ contains a facial
  repetition, then $v$ is necessarily involved. In such a case, we would
  have $\varphi\in\mathbb{F}(v)$, a contradiction.

  \begin{itemize}
  \item The function $\USBE_j(v,\ovphi,P)$ outputs the half of the path
    $P$ containing $v$, and thus $s_j=j$. By Lemma~\ref{lem-USBE}, this
    function fulfills all the requirements.
  \item Given $P$ and the sequence of colors of the colored half of $P$,
    it is easy to recover the sequence of colors of the uncolored half
    of $P$, and so $\RBE_j(v,X,P,\varphi')$ is well-defined.
  \end{itemize}
  Consider now
  \begin{eqnarray*}
    Q(x) = 1+\sum_{1\le j \le n/2} C_jx^{s_j}& = & 1+\Delta x+ \sum_{2\le j \le p} 2j\Delta x^j\\
                                             & < & 1 + \Delta x + 2 \Delta x^2 \frac{2-x}{(x-1)^2} \quad \mbox{\hspace{1cm} for } x<1\\
  \end{eqnarray*}
  By setting $X=\frac{1}{2\sqrt{\Delta}}$, and as $\Delta \ge 2$ one obtains that
  \begin{eqnarray*}
    \frac{Q(X)}{X} & < & \Delta + 4\sqrt{\Delta} +3\\
  \end{eqnarray*}
  By Theorem~\ref{th:main-final}, $G$ admits an allowed
  coloring (hence a facial non-repetitive coloring) with $\ceil{Q(X)/X}$ colors.
  This concludes the proof of the theorem.
\end{proof}

Piotr Micek recently announced that this theorem can be improved
asymptotically as for any plane graph $G$,
$\pi_{fl}(G) \le O(\log \Delta)$~\cite{Jor14}.

\subsection{Facial Thue edge-coloring}\label{subsec:thue-edge}

Consider the {\em facial Thue choice index} $\pi'_{fl}(G)$ of a plane
graph $G$, that is the minimum integer $k$ such that $G$ is facially
non-repetitively edge $k$-choosable. Schreyer and
\v{S}krabul'\'akov\'a~\cite{SS12} proved that plane graphs have bounded facial
Thue choice index, more precisely $\pi'_{fl}(G) \le 291$. Recently
Przyby\l o~\cite{P13} improved that bound to 12. To obtain that upper
bound with our framework, it is sufficient to consider as bad events
the partial colorings having a facial $2j$-repetition (for any $j\ge
1$) with costs $C_j=4j$ since an edge belongs to at most $4j$ facial
$2j$-edge paths.

Let us explain a way to improve that upper bound. The idea is that at
each step the algorithm chooses the edge $e$ to be colored in such a way
that $e$ is facially adjacent to an uncolored edge $e'$. Therefore, if
at some step the algorithm colors such an edge $e$, then this edge belongs
to at most $1+2j$ facial $2j$-edge paths going through colored edges
(one path in the face incident to $e$ and $e'$ and $2j$ paths on the
other face incident to $e$).  However, such an edge $e$ does not
always exist. For example if the algorithm has colored all the graph
$G$ but one edge, then this edge may belong to $4j$ colored facial
$2j$-edge paths. We manage to use this trick to obtain the improved
bound of 10.

We will need the following definition. Given a plane graph $G$, its
{\em medial graph} $M(G)$ is defined as follows:
\begin{itemize}
\item its vertex set is the set of edges of $G$;
\item there is an edge $uv$ between the vertices $u$ and $v$ of $M(G)$
  if and only if the corresponding edges in $G$ are facially adjacent
  (i.e. adjacent and both incident to the same face).
\end{itemize}

\begin{theorem}\label{thm:thue-edge-sauf-une}
  For any plane graph $G$, any edge $e^*$ of $G$, and any assignment of
  lists of size $9$, there exists a partial facial Thue edge-coloring of
  $G$ where all the edges except $e^*$ are colored. 
\end{theorem}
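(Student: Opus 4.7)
The plan is to apply the list-coloring version of Theorem~\ref{th:main-final} with $\kappa=9$, using a tailored instance of Algorithm~\gc in which $e^*$ is guaranteed to remain uncolored throughout, allowing us to reduce $C_j$ from $4j$ to $1+2j$. I will assume $G$ is connected, so that $M(G)$ is connected; the disconnected case reduces to this.

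The bad events will be as recalled just before the theorem: $\mathbb{B}_j^P(v)$ is the set of partial colorings where the newly colored edge $v$ lies on a facial $2j$-repetition supported by $P$, with $\USBE_j(v,\ovphi,P)$ uncoloring the half of $P$ containing $v$ (so $s_j=j$), and $\RBE_j$ reconstructing the original coloring since the colors of the uncolored half are forced by the repetition. Each $\mathbb{F}(v)$ will be the upward closure of $\mathbb{B}(v)$, so Lemma~\ref{lem-USBE} applies. The distinctive ingredient is $\NUV$: given $\ovphi$, it will return the smallest non-cut vertex (under a fixed reference order) of $M(G)[V(M(G))\setminus\ovphi]$ that is distinct from $e^*$; in particular $e^*$ is never colored.

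The crux of the argument is the invariant that $U:=V(M(G))\setminus\ovphi$ always induces a connected subgraph of $M(G)$. Initially $U=V(M(G))$. A color step removes a non-cut vertex, so $M(G)[U]$ stays connected. An uncolor step adjoins a half-path $S$ to $U\setminus\{v\}$; $S$ is itself $M(G)$-connected because consecutive edges of a facial path are $M(G)$-adjacent, and moreover $v\in S$ was, just before being colored, $M(G)$-adjacent to some $w\in U\setminus\{v\}$ (by connectivity of $M(G)[U]$ together with $|U|\ge 2$, which is guaranteed by $e^*\in U$), so the union stays connected. Since any connected graph on at least two vertices has at least two non-cut vertices, $\NUV$ is always well defined while $|U|\ge 2$. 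The invariant gives, at each color step, an uncolored $M(G)$-neighbor $e'$ of $v$; as explained before the theorem, this bounds by $1+2j$ the number of facial $2j$-edge paths through $v$ with every other edge colored, hence $C_j\le 1+2j$.

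With these parameters,
\[Q(x)=1+\sum_{j\ge 1}(1+2j)x^j=\frac{1+x}{(1-x)^2},\]
and a short calculus exercise will show that $\min_{0<x\le 1} Q(x)/x<9$, the minimizer being the positive root $x_0=(\sqrt{17}-3)/4$ of $2x^2+3x-1=0$. The list-coloring extension of Theorem~\ref{th:main-final} will then produce the desired partial facial Thue edge-coloring of $G$ leaving only $e^*$ uncolored. The hardest part will be designing $\NUV$ and checking that the connectivity invariant persists through both color and uncolor steps; once that is in hand, the counting of $C_j$ and the application of Theorem~\ref{th:main-final} are routine.
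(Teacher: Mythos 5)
Your overall architecture is the paper's: the same bad events (halves of facial $2j$-edge paths, $s_j=j$), the same connectivity invariant on the uncolored vertices of $M(G)$ maintained through $\NUV$ (your ``smallest non-cut vertex of $M(G)[U]$ distinct from $e^*$'' is essentially the paper's ``leaf of a spanning tree of $M(G)[X]$ rooted at $e^*$''), and the same polynomial $Q(x)=\frac{1+x}{(1-x)^2}$ with the root $\frac{\sqrt{17}-3}{4}$. However, there is a genuine gap at the step ``hence $C_j\le 1+2j$''. In the framework, $C_j$ must bound the number of classes into which the \emph{whole} set $\mathbb{B}_j(v)$ is partitioned, i.e.\ the number of distinct labels $k$ that an Uncolor line of type $j$ may carry in the record, and you index the classes directly by the supporting path $P$. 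The count $1+2j$ is valid only for a \emph{fixed} uncolored facially adjacent edge $e'$ (one path in the face shared by $v$ and $e'$, at most $2j$ in the other face). Different bad events anchored at the same edge $v$ — occurring at different steps or for different inputs — can have different uncolored neighbours $e'$, so the set of paths $P$ realizable over all of $\mathbb{B}_j(v)$ can have size up to roughly $4j$; with classes indexed by $P$ your $C_j$ is therefore about $4j$, and the computation then only recovers the known bound $12$, not $9$. Closing this gap is exactly the paper's non-obvious step: since the uncolored facially adjacent edge $e'$ (say the one of smallest index) is determined by $\ovphi$, and since $\RBE_j(v,X,k,\varphi')$ receives $X=\ovphi$ as input, one lets the label $k\in\{1,\ldots,1+2j\}$ mean ``the $k$-th facial $2j$-edge path through $v$ avoiding $e'$'', thereby merging into one class the up-to-four path-classes corresponding to the different possible $e'$. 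Only after this re-indexing is $C_j=1+2j$ legitimate; note also that the paper bakes the connectivity condition into $\mathbb{F}(e)$ (giving non-upward-closed forbidden sets) precisely so that \emph{every} bad event of $\mathbb{B}_j(e)$ has such an $e'$, whereas with your upward-closed $\mathbb{F}(v)$ you would additionally have to restrict the class structure to configurations actually reachable by the algorithm (your invariant).

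A secondary point: the claim that ``the disconnected case reduces to this'' is unjustified and, as far as the paper knows, false in spirit. Handling a component not containing $e^*$ would require a full facial Thue edge-colouring from lists of size $9$, which is precisely the bound the paper says it cannot achieve (Corollary~\ref{cor:thue-edge} only gives $10$); indeed, the theorem for disconnected plane graphs would immediately yield $\pi'_{fl}\le 9$ by adding an isolated edge $e^*$ inside a face. The paper's proof implicitly assumes $M(G)$ connected as well, so the correct move is simply to state the connectivity hypothesis, not to claim a reduction. Apart from these two points, your connectivity argument for both the colour and uncolour steps, the reconstruction via $\RBE_j$, the evaluation of $Q(x)/x$ at $x_0=\frac{\sqrt{17}-3}{4}$, and the appeal to the list version of Theorem~\ref{th:main-final} are all sound and match the paper.
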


\begin{proof}
  Let $G$ be a plane graph with maximum degree $\Delta$, and let $e^*$
  be any edge of $G$. In this application, we want to ensure that at
  each iteration of the main loop the current edge to color is
  facially adjacent to (at least) one uncolored edge. This leads us to
  sets $\mathbb{F}(e)$ that are not closed upward. Hence they need to
  be described with care. For a given edge $e$, the set
  $\mathbb{F}(e)$ contains the partial colorings with a facial
  repetition involving $e$, and the partial colorings where the set of
  uncolored edges (i.e. vertices of $M(G)$), including $e^*$, induces
  a disconnected graph in $M(G)$. Hence the set of allowed colorings
  is the set of partial colorings with no facial repetition, and where
  uncolored edges, including $e^*$, induce a connected graph in
  $M(G)$.

  We conveniently define $\NUV$ in order to avoid bad events dealing
  with the case where uncolored edges induce a disconnected graph in
  $M(G)$. 
  \begin{itemize}
  \item For any set $X\subseteq E(G)$ such that $e^*\in X$, and such
    that $M(G)[X]$ is connected, the edge \linebreak $e=\NUV(E(G)\setminus
    X)$ must be such that $M(G)[X-e]$ is connected. Hence, $e$ may
    be chosen among leaves of a spanning tree of $M(G)[X]$
    rooted at $e^*$.
  \end{itemize}
  Hence with that definition of $\NUV$ we have that for a given edge
  $e$, the set of bad events $\mathbb{B}(e)$ contains the partial
  colorings with a facial repetition involving $e$, where $e$ is
  facially adjacent to an uncolored edge $e'$ (its parent in the
  spanning tree described above, which might be $e^*$), and where the
  set of uncolored edges induces a connected graph in $M(G)$. Let us
  introduce the bad event types and classes:
  \begin{itemize}
  \item For $1\le j \le p=\floor{n/2}$, let $\mathbb{B}_j(e)$ be the set
    of bad events anchored at $e$ such that $e$ has an uncolored
    facially adjacent edge $e'$, and $e$ belongs to a repetition on a
    (colored) facial $2j$-edge path $P$. 
    
    The partition into classes is not obvious. Let $e_1, e_2, e_3$ and
    $e_4$ be the (at most four) edges of $G$ facially adjacent to $e$,
    and let $e'\in\{e_1,e_2,e_3,e_4\}$ be the uncolored one with
    smallest index. Let us now partition $\mathbb{B}_j(e)$ into sets
    $\mathbb{B}_j^{e',P}(e)$ according to the uncolored edge $e'$ and
    the path $P$ supporting the repetition. We have seen earlier that
    given an edge $e'$ there are at most $1+2j$ possible paths $P$. As
    there are up to four possibilities for $e'$ this partition has
    $4+8j$ parts, but the cases where $e'$ has distinct values are 
    independent. Let us hence merge these parts as follow. Let
    $\mathbb{B}_j^k(e)$, for $1\le k\le 1+2j$, be the union of
    $\mathbb{B}_j^{e_1,P_1}(e)$, $\mathbb{B}_j^{e_2,P_2}(e)$,
    $\mathbb{B}_j^{e_3,P_3}(e)$ and $\mathbb{B}_j^{e_4,P_4}(e)$, for
    some choice of paths $P_1$, $P_2$, $P_3$ and $P_4$. The obtained
    partition has $C_j=1+2j$ classes.

  \item Given the set of colored edges $\ovphi$ of some bad event
    $\varphi \in \mathbb{B}_j(e)$, one can determine the facially
    adjacent uncolored edge $e'$. Hence given (also) the class $k$ such
    that $\varphi \in \mathbb{B}_j^k(e)$, one can determine the path $P$
    supporting the repetition.  The function $\USBE_j(e,\ovphi,k)$
    outputs the half of the path $P$ containing $e$, and thus
    $s_j=j$. Note that as the edges of $P$ are incident to the same
    face, and as $e$ and $e'$ are facially adjacent, uncoloring this set
    of edges leads to a partial coloring that has no repetition and such
    that the uncolored edges induce a connected graph in $M(G)$, hence an
    allowed coloring (as required).
  \item Using again the fact that $P$ can be retrived from $\ovphi$
    ($=X$ here) and $k$, one can easily design a function
    $\RBE_j(v,X,k,\varphi_{i+1})$.
  \end{itemize}
  Consider now
  \begin{eqnarray*}
    Q(x) = 1+\sum_{1\le j \le n/2} C_jx^{s_j}& = & 1+\sum_{1\le j \le n/2} (1+2j) x^j\\
                                             & < & \frac{1}{1-x}  + \frac{2x}{(1-x)^2} \mbox{\hspace{1cm} if }\ \ x < 1\\
  \end{eqnarray*}
  By setting $X=\frac{\sqrt{17}-3}{4}$, one obtains that $Q(X)/X < 9$.
  Hence by Theorem~\ref{th:main-final}, $G$ admits a partial allowed
  9-coloring (hence a partial facial Thue edge-coloring) where $e^*$
  is the onlyuncolored edge.  This concludes the proof of the theorem.
\end{proof}

Given Theorem~\ref{thm:thue-edge-sauf-une}, it seems likely that
$\pi'_{fl}(G) \le 9$ for any plane graph $G$. Actually one can show
that it is the case if $G$ has an edge $e^*$ incident to two faces of
small sizes.  Unfortunately we do not achieve this bound here, but we
prove:

\begin{corollary}\label{cor:thue-edge}
  For any plane graph $G$, $\pi'_{fl}(G) \le 10$.
\end{corollary}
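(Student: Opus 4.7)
The plan is to reduce to Theorem~\ref{thm:thue-edge-sauf-une} by coloring all but one edge of $G$ from lists of size $9$, reserving a tenth color exclusively for the leftover edge so that it cannot participate in any facial repetition. Given lists $L(e)$ of size $10$ for every edge of $G$, I would pick an arbitrary edge $e^*$ and an arbitrary color $c^* \in L(e^*)$. For every other edge $e$, I would define $L'(e)$ to be any $9$-element subset of $L(e)\setminus\{c^*\}$, which is possible since $|L(e)|\ge 10$. I would also take $L'(e^*)$ to be an arbitrary $9$-element subset of $L(e^*)$; its content is unimportant because $e^*$ will not be colored by the algorithm of Theorem~\ref{thm:thue-edge-sauf-une}.

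Next I would apply Theorem~\ref{thm:thue-edge-sauf-une} to the plane graph $G$, the distinguished edge $e^*$, and the list assignment $L'$. This yields a partial facial Thue edge-coloring $\varphi$ in which every edge $e\ne e^*$ receives a color from $L'(e)\subseteq L(e)\setminus\{c^*\}$, while $e^*$ alone remains uncolored. Finally, I would extend $\varphi$ by setting $\varphi(e^*)=c^*$, producing a complete edge-coloring in which each edge is colored from its original list $L$.

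It remains to verify that the extended $\varphi$ is facially non-repetitive. Any hypothetical facial $2j$-repetition must involve $e^*$, otherwise it would already be a repetition in the partial coloring produced by Theorem~\ref{thm:thue-edge-sauf-une}. If $e^*$ sits at position $k$ of such a facial $2j$-edge path, then the edge at the mirror position (either $k+j$ or $k-j$, whichever lies in $\{1,\dots,2j\}$) would have to carry the same color as $e^*$, namely $c^*$. But by construction no edge other than $e^*$ is colored $c^*$, and a facial path has pairwise distinct edges, so this is impossible. The only subtle point is making sure the reservation trick costs only one color per list, and this is immediate from the choice $c^* \in L(e^*)$ combined with $|L(e)|\ge 10$ for every $e$.
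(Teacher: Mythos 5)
Your proposal is correct and follows essentially the same route as the paper: reserve one color $c^*$ of $L(e^*)$, delete it from all other lists (leaving lists of size at least $9$), apply Theorem~\ref{thm:thue-edge-sauf-une} to color everything except $e^*$, and finally color $e^*$ with $c^*$, which cannot create a facial repetition since $c^*$ appears nowhere else. Your explicit mirror-position check is just a spelled-out version of the paper's one-line observation, so there is no substantive difference.
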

\begin{proof}
  For a given $G$, pick an arbitrary edge $e^*\in E(G)$ and an arbitrary
  color $c\in L(e^*)$.  For all the other edges of $G$, remove color $c$
  from their list. Now all these lists have size at least 9.  By
  Theorem~\ref{thm:thue-edge-sauf-une}, it is possible to color all the
  edge of $G$ except $e^*$, avoiding facial repetitions.  Then
  coloring $e^*$ with $c$ cannot create any repetition, as $c$ does
  not appear elsewhere in $G$.
\end{proof}

\begin{remark}
  Note that in the proof of Theorem~\ref{thm:thue-edge-sauf-une} we
  only use the fact that edges are adjacent to at most two faces, and
  thus it extends to any graph embedded on any surface. Hence,
  Theorem~\ref{thm:thue-edge-sauf-une} and
  Corollary~\ref{cor:thue-edge}, both extend to arbitrary surface.
\end{remark}

\subsection{Generalised acyclic coloring}

Let $r\ge 3$ be an integer. An $r$-acyclic vertex-coloring is a proper
vertex-coloring such that every cycle $C$ uses at least $\min(|C|,r)$
colors. This generalisation of the notion of acyclic coloring (the
$r=3$ case) was introduced by Gerke {\em et al.} in the context of
edge-coloring \cite{GGW06} and then by Greenhill and Pikhurko in the
context of vertex-coloring \cite{GP05}. Let $A_r(G)$ be the minimum
number of colors in any $r$-acyclic vertex-coloring of $G$. By
extension, let $A_r^l(G)$ be the $r$-acyclic choice number of
$G$. Greenhill and Pikhurko \cite{GP05} proved in particular that, for
$r\ge 4$ and $\Delta\ge 3$, every graph $G$ with maximum degree
$\Delta$ satisfies $A_r(G)\le c \Delta^{\lfloor r/2 \rfloor}$ where
$c=2^{(r+2)/3}r(r+2)$.  We reduce this constant factor as follows.

\begin{theorem}
  Let $G$ be a graph with maximum degree $\Delta\ge 3$. For any $r\ge 4$,
  we have that $A_r^l(G)\le \Delta^{\floor{r/2}} + O\left(
    \Delta^{(r+1)/3} \right)$.
\end{theorem}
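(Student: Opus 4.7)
The plan is to invoke the general framework of Theorem~\ref{th:main-final}, setting up bad events that capture violations of the $r$-acyclic property. Let $G$ be a graph with maximum degree $\Delta$, equip $V(G)$ with an arbitrary total order $\prec$, and let $\NUV$ return the $\prec$-smallest uncolored vertex. Since a proper coloring fails to be $r$-acyclic precisely when some cycle $C$ uses fewer than $\min(|C|,r)$ colors, the bad events are designed to detect these configurations, with $\mathbb{F}(v)$ taken as the upward closure of the corresponding $\mathbb{B}(v)$ so that Lemma~\ref{lem-USBE} applies.

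First I would introduce event $N$ (monochromatic edge through $v$, with $C_N=\Delta$, $s_N=1$), together with one event type per cycle length $3\le\ell\le r$: the event Cy${}_\ell$ asserts that $v$ lies on a cycle of length $\ell$ using fewer than $\ell$ colors. For each such event I would partition $\mathbb{B}_\ell(v)$ into classes indexed by the cycle together with a \emph{repetition certificate} (the pattern of equal-color classes on the $\ell$ cycle vertices), and define $\USBE$ to uncolor a contiguous arc through $v$ of length roughly $\lceil\ell/2\rceil$. Together, the recorded class and the remaining colored arc let $\RBE$ reconstruct the uncolored colors. Up to $r$-dependent constants, the bookkeeping yields $C_\ell = O(\Delta^{\ell-1})$ (dominated by the $O(\Delta^{\ell-1})$ cycles of length $\ell$ through $v$, the pattern contributing an $r$-bounded factor) and $s_\ell\approx\lceil\ell/2\rceil$.

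Cycles of length greater than $r$ using fewer than $r$ colors must be handled separately: by a pigeonhole argument on any window of $r$ consecutive vertices of such a cycle, two of them share a color, producing a short-path witness of length less than $r$ that $v$ lies on. I would fold these long-cycle violations into the families Cy${}_\ell$ for $\ell\le r$ by enlarging the class data with a short ``extension witness''; this reduction preserves the asymptotic class counts while ensuring that every $r$-acyclic violation anchored at $v$ is detected. Substituting into Theorem~\ref{th:main-final} gives
\[
Q(x) \;\le\; 1 + \Delta x + \sum_{\ell=3}^{r} c_\ell\, \Delta^{\ell-1} x^{\lceil \ell/2\rceil},
\]
with each $c_\ell$ a constant depending only on $r$. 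Choosing $x = \Theta(\Delta^{-\lfloor r/2\rfloor})$ makes $1/x = \Theta(\Delta^{\lfloor r/2\rfloor})$ dominate while every cycle term (and $\Delta x$) remains $O(1)$; a small perturbation of $x$ to absorb the intermediate $\ell$-terms yields $Q(x)/x \le \Delta^{\lfloor r/2\rfloor} + O(\Delta^{(r+1)/3})$, and Theorem~\ref{th:main-final} then bounds $A_r^l(G)$ as claimed.

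The main obstacle is the calibration of $s_\ell$ against $C_\ell$: uncoloring too little forces each class to record an entire proper coloring of the cycle (blowing up $C_\ell$ by a factor of $\kappa^{\Theta(\ell)}$ and killing the argument), while uncoloring too much increases the exponent of $x$ beyond what the polynomial can absorb. A second technical point is the long-cycle reduction: the pigeonhole witness must be recorded finely enough that $\RBE$ remains a well-defined injection, so that the injectivity underlying Corollary~\ref{cor:injective} is preserved when the events Cy${}_\ell$ are overloaded to cover long-cycle violations.
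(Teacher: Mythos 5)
The central gap is that your short-cycle events do not satisfy the reconstruction requirement of the framework. For the event Cy$_\ell$ you propose to record only the cycle plus the ``repetition certificate'' (the pattern of color coincidences) and to uncolor an arc of $\lceil \ell/2\rceil$ vertices through $v$. But a cycle of length $\ell$ using fewer than $\ell$ colors may have exactly one repeated color; then all but one of the vertices you uncolor carry colors that appear nowhere else on the cycle, and $\RBE$ cannot recover them from the class and the remaining colored half, so Corollary~\ref{cor:injective} breaks. The only way to keep a well-defined $\RBE$ with this kind of certificate is to uncolor a single vertex ($s_\ell=1$, recording the vertex it collides with), or else to record the missing colors in the class, which multiplies $C_\ell$ by $\kappa^{\Theta(\ell)}$ and kills the bound. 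With $s_\ell=1$ your whole family of short-cycle events collapses to what the paper actually uses: one bad event saying that some vertex at distance at most $\lfloor r/2\rfloor$ from $v$ has the color of $v$ (a distance-$\lfloor r/2\rfloor$ proper coloring), with $C_1=\Delta^{\lfloor r/2\rfloor}$ and $s_1=1$; that is the source of the main term. So the issue you describe as a calibration to be tuned is in fact an impossibility: the pairs $(C_\ell,s_\ell)=(O(\Delta^{\ell-1}),\lceil\ell/2\rceil)$ are not realizable, and your polynomial $Q$ is not legitimate.

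The long-cycle reduction also fails quantitatively. Pigeonhole on a window of $r$ consecutive vertices of a long cycle with fewer than $r$ colors yields only one pair of equal-colored vertices joined by a path of length at most $r-1$; the corresponding bad event allows only $s=1$ and has $\Theta(\Delta^{r-1})$ classes (equivalently, it forbids repeated colors at distance up to $r-1$), contributing $\Theta(\Delta^{r-1})\gg\Delta^{\lfloor r/2\rfloor}$ to $Q(x)/x$. The paper's key idea is to take windows of $r+2$ vertices: a cycle of length at least $r+2$ with fewer than $r$ colors forces at least three vertices of the window whose colors repeat inside the window, so one can uncolor $s=3$ vertices while preserving reconstruction, with $C=O(\Delta^{r+1})$ classes; the cube root $O\left(\Delta^{(r+1)/3}\right)$ of this count is exactly the error term in the statement. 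Finally, your sketch ignores the parity of $r$: for odd $r$ the distance-$\lfloor r/2\rfloor$ event only handles cycles of length at most $r$, and cycles of length exactly $r+1$ require the additional ``special pair'' and antipodal-pair bad events that the paper introduces. Without these ingredients the proposed argument does not reach the claimed bound.
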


In the following, all the defined events are strongly inspired by
those defined by Greenhill and Pikhurko \cite{GP05}. Let $G$ be any
graph with maximum degree $\Delta$, and let $n$ denote its number of
vertices. Let $\prec$ be any total order on the vertices of
$G$. $\NUV(\ovphi)$ returns the first uncolored vertex according to
$\prec$. In this application, the sets $\mathbb{F}(v)$ are closed
upward. We hence use Lemma~\ref{lem-USBE}, to ensure that each function
$\USBE$ fulfills all the requirements. We proceed now to the
description of the bad events (the sets $\mathbb{F}(v)$ being deduced
from $\mathbb{B}(v)$), and the description of the required
functions. We distinguish two cases according to $r$'s parity.

\subsubsection{Case $r$ even}

Set $r=2\ell$ with $\ell\ge 2$.  We consider the following sets of bad
events anchored at vertex $v$:

\begin{itemize}
\item Let $\mathbb{B}_1(v)$ be the set of bad events $\varphi$ where
  ``there exists a vertex $u$ at distance at most $\ell$ (from $v$)
  having the same color as $v$''. Let $\class_1(v)$ be the set of
  vertices $u$ at distance at most $\ell$ from $v$. As $|\class_1(v)|
  \le \sum_{i=1}^{\ell}\Delta(\Delta-1)^{i-1} =
  \frac{\Delta((\Delta-1)^\ell - 1)}{\Delta-2} \le \Delta^{\ell}$ we
  set $C_1=\Delta^{\ell}$. Each set $\mathbb{B}_1(v)$ is partitioned
  into classes $\mathbb{B}_1^u(v)$, for every vertex $u \in
  \class_1(v)$, according to the vertex $u$ that is colored like $v$.
  $\USBE_1(v,\ovphi,u)$ outputs the vertex $v$, and thus $s_1=1$. In
  addition, $\RBE_1(v,X,u,\varphi')$ outputs the partial coloring
  $\varphi$ obtained from $\varphi'$ by coloring $v$ with color
  $\varphi'(u)$.
\end{itemize}
Here it is clear that an allowed coloring is a distance $\ell$ proper
coloring. Furthermore, as $r=2\ell$, cycles $C$ of length at most
$r+1$ will receive $|C|$ distinct colors.

\begin{itemize}
\item Let $\mathbb{B}_2(v)$ be the set of bad events $\varphi$ where
  ``$v$ belongs to a path $P$ on $r+2$ vertices such that $v$ and two
  other colored vertices, say $a$, $b$, have colors that already
  appear on $P \setminus \{v,a,b\}$''. Let us define a partition of
  $\mathbb{B}_2(v)$. Consider the set $\class_2(v)$ formed by all
  tuples $(P,a,b,v',a',b')$ such that $P$ is a path on $r+2$ vertices
  containing vertices $v,a,b,v',a',b'$ where $|\{v,a,b\}|=3$, $1\le
  |\{v',a',b'\}|\le 3$ and $\{v,a,b\}\cap \{v',a',b'\}=\emptyset$. Let
  $\mathbb{B}_2^{(P,a,b,v',a',b')}(v) \subset \mathbb{B}_2(v)$ be the
  class of bad events $\varphi$ where ``both $v$ and $v'$ have the
  same color, both $a$ and $a'$ have the same color, and both $b$ and
  $b'$ have the same color''. Let us count the number of such
  classes. First observe that $v$ belongs to at most
  $\frac{r+2}{2}\Delta(\Delta-1)^{r}$ paths on $r+2$ vertices. Now
  observe that there are at most $r+2$ possible choices for each
  vertex $a,b,v',a',b'$. Hence let us set
  $C_2=\frac{1}{2}(r+2)^6\Delta^{r+1}$. $\USBE_2(v,\ovphi,(P,a,b,v',a',b'))$
  outputs the set $\{v,a,b\}$, and thus $s_2=3$. In addition,
  $\RBE_2(v,X,(P,a,b,v',a',b'),\varphi')$ outputs the partial coloring
  $\varphi$ obtained from $\varphi'$ by coloring
  vertices $v$, $a$ and $b$ respectively with colors
  $\varphi'(v')$, $\varphi'(a')$ and $\varphi'(b')$.
\end{itemize}
These bad events imply that in an allowed coloring, cycles of length
at least $r+2$ contain at least $r$ colors. Hence an allowed coloring
is also a generalised $r$-acyclic coloring. Consider now
\begin{eqnarray*}
  Q(x) = 1+\sum_{1\le j \le n/2} C_jx^{s_j}& = & 1+C_1 x+ C_2 x^3\\
\end{eqnarray*}
By setting $X=\left(\frac{1}{2 C_2}\right)^{\frac13}$ one obtains that
\begin{eqnarray*}
  \frac{Q(X)}{X} & = & C_1 + \frac{3}{2^{\frac23}}C_2^{\frac13}\\
                 & = & \Delta^{\ell} + \frac32 (r+2)^2\Delta^{(r + 1)/3} 
\end{eqnarray*}
By Theorem~\ref{th:main-final}, $G$ admits an allowed
coloring (hence  a generalised $r$-acyclic coloring) with $\ceil{Q(X)/X}$ colors.
This concludes the proof of the theorem for $r$ even.

\subsubsection{Case $r$ odd}

The odd case is similar to the even case. Let $r=2\ell+1$ with $\ell
\ge 2$.  Let us use again the two types of bad events defined
above. Now, type 1 bad events are sufficient to deal with cycles of
length at most $r$. Type $2$ bad events are still sufficient to deal
with cycles of length at least $r+2$. It remains to deal with cycles
of length $r+1$.  Type 1 bad events forbid some kinds of length $r+1$
cycles. As $r+1=2\ell+2$, the cycles of length $r+1$ that are not
forbidden by type 1 bad events are those where each color appears only
once, or where colors appearing several times, do it on antipodal
vertices.  We thus add two other bad event types to deal with this
kind of cycles of length $r+1$.

A pair of vertices $\{u,u'\}$ is said to be \emph{special} if $u$ and
$u'$ are at distance exactly $\ell +1$ and if there exist at least
$\Delta^{\frac{\ell+1}{3}}$ paths of length $\ell+1$ linking $u$ and
$u'$. Consider the two following new sets of bad events:
\begin{itemize}
\item Let $\mathbb{B}_3(v)$ be the set of bad events $\varphi$ where
  ``there exists a special pair $\{v,u\}$ such that $v$ and $u$ have
  the same color''. Let $\class_3(v)$ be the set of vertices $u$ such
  that $\{v,u\}$ is a special pair. Each set $\mathbb{B}_3(v)$ is partitioned
  into classes $\mathbb{B}_3^{u}(v)$ according to the vertex $u$ colored like $v$. As there
  are at most $\Delta^{\ell+1}$ paths of length $\ell+1$ starting from
  $v$, there exist at most $\Delta^{\frac23 (\ell+1)}
  =\Delta^{(r+1)/3} = C_3$ such classes. Functions $\USBE_3$ and
  $\RBE_3$ are defined similarly to the first type of bad events, with
  $s_3=1$.
\item Let $\mathbb{B}_4(v)$ be the set of bad events $\varphi$ where
  ``$v$ belongs to a cycle $C$ of length $r+1 = 2\ell +2$ such that
  $v$ and its antipodal vertex $v'$ (on $C$) have the same color, are
  at distance $\ell+1$ from each other but do not form a special pair,
  and such that $C$ contains another pair of antipodal vertices
  $\{u,u'\}$ having the same color''.  Let $\class_4(v)$ be the set of
  couples $(C,u)$ such that $C$ is a $(r+1)$-cycle containing $v$ and
  $u$ as non-antipodal vertices.  Each set $\mathbb{B}_4(v)$ is
  partitioned into classes ${\mathbb B}_4^{(C,u)}(v)$, for every
  $(C,u)\in\class_4(v)$. There exist at most $\ell \Delta^{\frac43
    (\ell +1)} = \ell \Delta^{\frac23 (r+1)} = C_4$ such classes.
  Indeed, there are $\Delta^{\ell +1}$ choices for vertex $v'$ and the
  path from $v$ to $v'$; as $v$ and $v'$ do not form a special pair,
  there are $\Delta^{\frac13 (\ell +1)}$ choices for the path from
  $v'$ back to $v$; and finally there are $\ell$ possibilities for the
  pair $\{u,u'\}$ of antipodal vertices. The function
  $\USBE_4(v,\ovphi,(C,u))$ outputs $\{v,u\}$, so $s_4=2$, and
  $\RBE_4$ clearly exists.
\end{itemize}
One can check that these two new types of bad events handle the
remaining cycles of length $r+1$ colored with less than $r$ colors.
This ensures us that allowed colorings are generalised $r$-acyclic
colorings. Consider now
\begin{eqnarray*}
  Q(x) = 1+\sum_{1\le j \le n/2} C_jx^{s_j}& = & 1 + C_1 x + C_2 x^3 + C_3 x + C_4 x^2\\
\end{eqnarray*}
By setting $X=\frac{1}{\Delta^{(r+1)/3}}$ one obtains that
\begin{eqnarray*}
  \frac{Q(X)}{X} & = & \Delta^\ell + \Delta^{(r+1)/3} + \Delta^{(r+1)/3} \left(1+ \ell + \frac12 (r+2)^6 \right)\\
                 & = & \Delta^\ell +  \Delta^{(r+1)/3} \left( 2+\ell + \frac12 (r+2)^6 \right)\\
\end{eqnarray*}
By Theorem~\ref{th:main-final}, $G$ admits an allowed coloring (hence
a generalised $r$-acyclic coloring) with $\ceil{Q(X)/X}$ colors.  This
concludes the proof of the theorem for $r$ odd.

\subsection{Colorings with restrictions on pairs of color classes}
\label{ssec:chi2F}

For many graph colorings, the color classes are asked to induce
independent sets while another property is asked to each pair of color
classes. Aravind and Subramanian~\cite{AS11} introduced a general
definition that captures many known colorings. In their definition,
restrictions may apply to any $\ell$ color classes, for any $\ell\ge
2$. Let us restrict ourselves to the case $\ell=2$.

\medskip

Given a family $\mathcal{F}$ of connected bipartite graphs, a {\em
  $(2,\mathcal{F})$-subgraph coloring} of $G$ is a proper coloring of
$V(G)$ such that the subgraph of $G$ induced by any two color classes
does not contain any isomorphic copy of $H$ as a subgraph, for each
$H\in \mathcal{F}$. Denote by $\chi_{2,\mathcal{F}}(G)$ the minimum
number of colors used by any $(2,\mathcal{F})$-subgraph coloring of
$G$. Denote by $\chi_{2,\mathcal{F}}(\Delta)$ the maximum value of
$\chi_{2,\mathcal{F}}(G)$ for any graph $G$ having maximum degree at
most $\Delta$. For example, when $\mathcal{F}$ is the family of even
cycles, $(2,\mathcal{F})$-subgraph coloring is the usual acyclic
vertex-coloring.

Using random graphs, Aravind and
Subramanian~\cite{AS11} showed the following lower bound on
$\chi_{2,\mathcal{F}}(\Delta)$.
\begin{theorem}[Aravind and Subramanian~\cite{AS11}]\label{thm:AS13}
  Given a connected bipartite graph $H$ with $m$ edges ($m\ge 2$), we have
  $$ \chi_{2,\{H\}}(\Delta) = \Omega\left(
    \frac{\Delta^\frac{m}{m-1}}{(\log \Delta)^{1/(m-1)}}\right) $$
  Hence,
  the same bound applies to $\chi_{2,\mathcal{F}}(\Delta)$ for any
  family $\mathcal{F}$ containing a graph $H$ with $m$ edges.
\end{theorem}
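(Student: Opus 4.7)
The plan is to prove this lower bound via the probabilistic method on Erdős--Rényi random graphs. I would consider $G \sim G(n,p)$ for a carefully chosen $p = p(n)$, and exhibit $G$ as a graph with large $\chi_{2,\{H\}}$ relative to its maximum degree.

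The deterministic starting point is the following observation: if $\chi_{2,\{H\}}(G) \le k$, then $V(G)$ admits a partition into $k$ independent sets whose pairwise bipartite induced subgraphs are all $H$-free; by pigeonhole, two of these parts $A$ and $B$ must have sizes at least $n/k$. On the random-graph side, I would choose $p$ so that, with high probability, (i) $\Delta(G) = \Theta(np)$, which follows from Chernoff, and (ii) for every pair of disjoint vertex subsets $A, B$ of size at least $s = s(n)$, the induced bipartite graph $G[A,B]$ contains a copy of $H$. Together these imply $\chi_{2,\{H\}}(G) \ge n/s$, and tuning $p$ and $s$ should yield the claimed bound in terms of $\Delta \asymp np$.

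The heart of the argument is (ii). I would apply Janson's inequality to the indicator random variables of labeled embeddings of $H$ into $G[A,B]$ for each fixed pair $(A,B)$. Janson gives an exponentially small upper bound on the probability that $G[A,B]$ contains no copy of $H$, in terms of the expected count $\mu \asymp s^{|V(H)|} p^m$ and the dependency parameter $\Delta^*$ coming from pairs of embeddings that share an edge. A union bound over the $\binom{n}{s}^2$ choices of $(A,B)$ must then be absorbed by this exponential decay; requiring Janson's bound to dominate the $\exp(\Theta(s \log(n/s)))$ union-bound factor is precisely what produces the $(\log \Delta)^{1/(m-1)}$ correction in the final exponent.

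The main obstacle will be bounding the Janson dependency parameter $\Delta^*$ in terms of the combinatorics of $H$, so as to extract the clean exponent $m/(m-1)$. This amounts to a careful subgraph-counting argument over subgraphs $H' \subseteq H$, controlled by a $2$-density type quantity of $H$. Once that bound is in hand, the parameter balance --- setting $\mu$ of order $s \log(n/s)$ and $np$ equal to the target $\Delta$ --- is a routine optimization that delivers the stated asymptotics.
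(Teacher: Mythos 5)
Your plan has a structural gap that no amount of care with Janson's inequality can repair. Everything rests on establishing property (ii): for every pair of disjoint sets $A,B$ of size at least $s$, the bipartite graph $G[A,B]$ contains a copy of $H$, and then concluding $\chi_{2,\{H\}}(G)\ge n/s$. But in \emph{any} graph of maximum degree $\Delta$ there is an independent set of size at least $n/(\Delta+1)$ (greedily), and splitting it into two halves produces disjoint sets $A,B$ of size about $n/(2(\Delta+1))$ with no edge at all between them, so $G[A,B]$ contains no copy of $H$ (recall $m\ge 2\ge 1$). Hence (ii) can only hold for $s>n/(2(\Delta+1))$, and the bound it yields is $n/s<2(\Delta+1)=O(\Delta)$. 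Concretely in $G(n,p)$: w.h.p. $\alpha(G)\approx 2\ln(np)/p$, and two disjoint halves of a maximum independent set give an edgeless pair of sets of size about $\ln(np)/p$, forcing $s\gtrsim \log(np)/p$ and capping the conclusion at $n/s\lesssim np/\log(np)=O(\Delta/\log\Delta)$; restricting (ii) to pairs of independent sets (the only pairs that can arise as colour classes) does not help, for the same reason. Since the theorem claims a bound of order $\Delta^{m/(m-1)}/(\log\Delta)^{1/(m-1)}=\Delta\cdot(\Delta/\log\Delta)^{1/(m-1)}$, which is $\omega(\Delta)$ for every fixed $m\ge 2$, no tuning of $p$ and $s$ can make your reduction deliver the statement: the ``routine optimization'' you defer to the end is exactly where the argument collapses, not the bounding of the dependency parameter $\Delta^*$. (A smaller slip: pigeonhole gives \emph{one} class of size $\ge n/k$, not two; that is repairable via the independence number, but the obstruction above is not.)

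Note that the paper does not prove this statement; it is quoted from Aravind and Subramanian, whose argument does use random graphs, but the mechanism must be finer than ``two large classes span $H$''. The copies of $H$ that force many colours use only a bounded number of vertices from each of the two classes, and properness of the colouring has to be exploited. The case $m=2$, $H=P_3$, shows the template: taking $p\approx\sqrt{\log n/n}$, w.h.p. every two vertices of $G(n,p)$ have a common neighbour, and that neighbour lies in a different class than the pair by properness; so every colour class is a singleton and $\chi_{2,\{P_3\}}(G)=n=\Theta(\Delta^2/\log\Delta)$. For general $H$ one similarly forces the classes to be far smaller than $\alpha(G)$ by counting small monochromatic pieces of potential copies of $H$ inside neighbourhoods, in the spirit of the Alon--McDiarmid--Reed lower bound for acyclic colouring ($H=C_4$) and the Fertin--Raspaud--Reed bound for star colouring; the number of colours enters that counting directly. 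Your all-pairs-of-large-sets property cannot capture this regime, so the proposal as written does not prove the theorem.
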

The same authors later showed that this lower bound is almost tight.
Let $m\ge 2$ be an integer and let $\mathcal{F}$ be a family of
connected bipartite graphs such that all the graphs have at least $m$
edges.

\begin{theorem}[Aravind and Subramanian~\cite{AS13}]
  For some constant $C$ depending only on $\mathcal{F}$, we have
  $$\chi_{2,\mathcal{F}}(\Delta) \le C \Delta^\frac{m}{m-1}$$
\end{theorem}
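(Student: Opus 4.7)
The plan is to apply the entropy compression framework of Theorem~\ref{th:main-final}. Let $G$ have maximum degree $\Delta$, fix any total order $\prec$ on $V(G)$, and let $\NUV(\ovphi)$ return the first uncolored vertex with respect to $\prec$. I would use bad events whose sets $\mathbb{F}(v)$ are closed upward, so that Lemma~\ref{lem-USBE} applies, and define two families of bad events: a type $N$ event capturing a monochromatic edge incident to $v$ (with $C_N=\Delta$ and $s_N=1$), and, for each $H\in\mathcal{F}$ with $h=|V(H)|$, a type $H$ event where $v$ belongs to a bicolored copy of $H$. Each class of type $H$ is a particular copy of $H$ through $v$, and a BFS embedding argument that fixes one vertex of $H$ to $v$ gives $|\class_H(v)|\le h\Delta^{h-1}$. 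The uncoloring strategy $\USBE_H$ keeps one colored vertex in each bipartition class of the copy and uncolors the remaining $s_H=h-2$ vertices; since a bicolored copy of $H$ uses exactly two colors, one per bipartition class, the two retained vertices suffice to define the inverse $\RBE_H$.

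Setting $x=c\,\Delta^{-m/(m-1)}$ for a suitably small constant $c$, the term $1/x$ in $Q(x)/x$ is already of the target order $\Delta^{m/(m-1)}$, and the contribution from a given $H\in\mathcal{F}$ evaluates to $\bigO{c^{h-3}\,\Delta^{(2m+1-h)/(m-1)}}$. When $H$ is a tree with exactly $m$ edges, so $h=m+1$, this exponent collapses to $m/(m-1)$, and the contribution is absorbed in the target bound $\kappa=\bigO{\Delta^{m/(m-1)}}$. A family $\mathcal{F}$ consisting only of trees with at least $m$ edges is therefore handled directly, giving the theorem with a constant depending only on $\mathcal{F}$ through the number and sizes of the trees it contains.

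The main obstacle is the case where some $H\in\mathcal{F}$ contains a cycle, i.e.\ $h\le m$: then $(2m+1-h)/(m-1)$ strictly exceeds $m/(m-1)$ and the naive counting is too loose. I plan to overcome this in the spirit of the acyclic analysis of Section~\ref{ssec:acyclic}, by adding an auxiliary \emph{special} bad event for each such $H$. Concretely, for a suitable subset $W\subsetneq V(H)$, a tuple of vertices of $G$ embedding $W$ and containing $v$ is declared \emph{special} if it extends to a copy of $H$ through $v$ in more than $\alpha\Delta^{\beta}$ ways, for parameters $\alpha,\beta$ to be optimised according to the cyclic structure of $H$. Properly bicolored special tuples form a type-$S$ bad event whose count is controlled by the specialness threshold, while bicolored copies of $H$ that avoid special tuples have a reduced class count by construction. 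Tuning $W$, $\alpha$, and $\beta$ in the manner of Claims~\ref{lem:nb-C4} and~\ref{lem:riberi} should balance these contributions in $Q(x)/x$ and deliver $\kappa=\bigO{\Delta^{m/(m-1)}}$. As the refinement depends only on the structure of the graphs in $\mathcal{F}$, the final constant depends only on $\mathcal{F}$, as required.
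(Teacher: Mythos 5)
Your setup and the easy half of the argument do follow the paper's strategy (the paper in fact proves the sharper Theorem~\ref{thm:chi2F}, of which the cited statement is a corollary): a proper-colouring event with $C=\Delta$ and $s=1$, one event per properly bicoloured copy with $s=h-2$ and two retained vertices, one per part, so that $\RBE$ exists, and the observation that at $x\sim\Delta^{-\gamma}$ with $\gamma=\frac{m}{m-1}$ the copies of forbidden graphs on at least $m+1$ vertices are affordable. (The paper replaces all graphs of $\mathcal{F}$ on more than $m$ vertices by a single event type, ``properly bicoloured connected subgraph on exactly $m+1$ vertices'', with $C=(m+1)4^{m+1}\Delta^m$ and $s=m-1$; besides being cleaner, this also covers infinite families $\mathcal{F}$, which your graph-by-graph summation does not.) The genuine gap is the case you yourself isolate, namely forbidden graphs on $h\le m$ vertices: everything from ``I plan to overcome this'' onward is a programme rather than a proof, and the mechanism you sketch does not balance as stated.

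Two concrete problems. First, you define specialness by the number of extensions to copies of $H$ and let the count of special tuples be ``controlled by the specialness threshold'', i.e.\ by (number of copies of $H$ through $v$) divided by the threshold. That is too weak. Take the prototype $H=C_4$, $m=4$, $\gamma=\frac43$, $W$ the antipodal pair: the monochromatic-special-pair event can uncolour only one vertex, so its class number must be $\bigO{\Delta^{4/3}}$; with a degree threshold $d$ on $\deg(v,u)$ the naive division gives about $\Delta^{3}/d^{2}$ special pairs, forcing $d\gtrsim\Delta^{5/6}$, while the $4$-cycles avoiding special pairs number about $\Delta^{2}d$ and their contribution $C_Cx$ at $x\sim\Delta^{-4/3}$ must be $\bigO{\Delta^{4/3}}$, forcing $d\lesssim\Delta^{2/3}$ --- no threshold satisfies both, and the method then only yields an exponent strictly above $\gamma$. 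The paper avoids this by using an intrinsic notion: a $j$-set $S$ is special when $\left|\bigcap_{u\in S}N(u)\right|>\Delta^{j-\gamma(j-1)}$, and the star-counting argument of Claim~\ref{cl:51-spec} (charging each special $j$-set to stars, not to copies of $H$) gives $C_{\jset}\le\frac{1}{(j-1)!}\Delta^{\gamma(j-1)}$, exactly what $s_{\jset}=j-1$ can pay for. Second, a single subset $W\subsetneq V(H)$ with a single threshold per $H$ cannot suffice: the paper's count of bicoloured copies of an $H_i$ on $n_i\le m$ vertices (Claim~\ref{cl:51-H}) embeds $H_i$ vertex by vertex along an acyclic orientation with sink at the preimage of $v$ and uses, at every vertex, that its monochromatic out-neighbourhood is not a special $d^{+}$-set; the sizes $d^{+}$ that occur depend on where $v$ sits in $H_i$ (for $K_{2,3}$, one anchoring of $v$ needs non-special $2$-sets and the other needs non-special $3$-sets), so the whole hierarchy of special $j$-sets, $2\le j<n$, is needed simultaneously, yielding $C_{Hi}\le n_i\Delta^{\gamma(n_i-2)-\frac{m_i-m}{m-1}}$ with $s_{Hi}=n_i-2$. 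These two counting lemmas are the heart of the proof and are precisely what your sketch leaves to ``tuning''.
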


Partition the graphs in $\mathcal{F}$ according to their number of
vertices.  Let $\mathcal{F}_v^{\le m}$ (resp. $\mathcal{F}_v^{>m}$)
denote the subset of $\mathcal{F}$ with graphs on at most $m$ vertices
(resp. more that $m$ vertices).  Let also
$k_v^{\le m}=|\mathcal{F}_v^{\le m}|$. We consider another parition of
$\mathcal{F}$ according to the number of edges in each graph. Let
$\mathcal{F}_e^m$ (resp. $\mathcal{F}_e^{>m}$) denote the subset of
$\mathcal{F}$ with graphs on exactly $m$ edges (resp. more that $m$
edges); and let $k_e^{m}=|\mathcal{F}_e^m|$.

The constant $C$ mentionned in Theorem~\ref{thm:AS13} is either
$64(m+1)^3k_v^{\le m}$ or $128(m+1)^3$ according to whether $k_v^{\le
  m}>0$ or not. Following the approach of Aravind and Subramanian, we
improve $C$ as follows.

\begin{theorem}\label{thm:chi2F}
  We have
  \begin{eqnarray}
    \chi_{2,\mathcal{F}}(\Delta) &<&
                                     (k_v^{\le m}+71)(m+1)\Delta^{\frac{m}{m-1}} \\
    \chi_{2,\mathcal{F}}(\Delta) &<& \left( k_e^{m} + 1 + o(1) \right)(m+1)\Delta^{\frac{m}{m-1}}
  \end{eqnarray}
\end{theorem}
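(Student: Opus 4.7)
The plan is to apply Theorem~\ref{th:main-final} via Algorithm~\gc, with bad events tailored to the family $\mathcal{F}$. For each vertex $v$ I would introduce two kinds of bad events: a monochromatic-edge type $N$ (ensuring properness, with $s_N=1$ and $C_N=\Delta$), and for each $H\in\mathcal{F}$ on $p$ vertices a type $T_H$ triggered when $v$ lies in a bicolored copy of $H$. Since $H$ is connected and bipartite, every bicolored embedding of $H$ is entirely determined by the embedding together with the pair of colors, one per bipartition class. The natural choice is therefore to uncolor the $p-2$ non-anchor vertices of the offending copy, giving $s_{T_H}=p-2$, while the class $k$ encodes the labeled embedding of $H$ through $v$; accordingly $C_{T_H}\leq p\,\Delta^{p-1}$. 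Because the sets $\mathbb{F}(v)$ are closed upward, Lemma~\ref{lem-USBE} guarantees that the residual coloring is allowed, and the bipartite structure of $H$ makes $\RBE_{T_H}$ well-defined from the two anchor colors.

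The evaluation of $Q(x)/x$ is carried out at $x=c\,\Delta^{-m/(m-1)}$ for a small constant $c>0$. A direct computation shows that each $H$ with $p$ vertices contributes a term of order $p\,c^{p-3}\,\Delta^{(m+1-p)/(m-1)}$ to $Q(x)/x$. For graphs in $\mathcal{F}_v^{>m}$ (which have $p\geq m+1$) this $\Delta$-exponent is at most~$0$, with equality exactly at $p=m+1$, and the higher-$p$ contributions decay geometrically; summing over all such $H$ and optimizing~$c$ yields a joint contribution of at most $71\,(m+1)\,\Delta^{m/(m-1)}$. For bound~(2), graphs $H\in\mathcal{F}_e^{>m}$ with $p\geq m+2$ have a strictly negative $\Delta$-exponent, so the geometric decay starts one step earlier and their combined contribution collapses to $(1+o(1))\,(m+1)\,\Delta^{m/(m-1)}$.

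The main obstacle is handling the small-vertex graphs, $\mathcal{F}_v^{\leq m}$ for bound~(1) (and the dense part of $\mathcal{F}_e^m$ for bound~(2)), where $p\leq m$ forces $(p-1)/(p-2)>m/(m-1)$ and the naive anchor strategy produces a $\Delta$-exponent strictly larger than the target $m/(m-1)$. The remedy is to enlarge, for each such $H$, the uncolored set from the $p-2$ non-anchor vertices up to a canonical witness of total size $m-1$, by appending auxiliary vertices whose identities are certified by the class~$k$; this raises $s_{T_H}$ to $m-1$ at the cost of inflating $C_{T_H}$ by a controlled polynomial factor. The delicate point is to choose this enlargement so that $\RBE$ stays injective while the inflated $C_{T_H}$ remains below the threshold $c^{-(m-2)}\Delta^{m}$; natural candidates are to append neighbors of the copy that the algorithm has just queried, or to use a ``witness-tree'' encoding of the partial coloring history in Moser--Tardos style. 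Once this is set up, each small $H$ contributes exactly $(m+1)\,\Delta^{m/(m-1)}$ to $Q(x)/x$, producing the linear terms $k_v^{\leq m}(m+1)\,\Delta^{m/(m-1)}$ and $k_e^m\,(m+1)\,\Delta^{m/(m-1)}$ in the two bounds respectively. Combining the small- and large-$H$ contributions (and the negligible $1/x$ and $\Delta$ terms from $N$) finishes both inequalities, and the extension to list coloring is inherited from the list-coloring remark at the end of Section~\ref{sec:general_la_vraie}.
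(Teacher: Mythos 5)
You correctly identify the crux---the members of $\mathcal{F}$ with at most $m$ vertices (and, for the second bound, those with exactly $m$ edges), for which the naive ``uncolor the copy minus two anchors'' accounting gives an exponent $(p-1)-\gamma(p-3)$ with $\gamma=\frac{m}{m-1}$ that exceeds the target $\gamma$ when $p\le m$---but your proposed remedy does not close it. If you enlarge the uncolored set from the $p-2$ copy vertices to $m-1$ vertices by appending auxiliary vertices, the record stays decodable only if the colors of \emph{every} uncolored vertex can be reconstructed from the class $k$ and the surviving coloring. The appended vertices are not part of the bicolored copy, so their colors are unconstrained and must effectively be written into the class, inflating $C_{T_H}$ by a factor of order $\kappa^{(m-1)-(p-2)}=\Theta\left(\Delta^{\gamma(m+1-p)}\right)$, not by a ``controlled polynomial factor'': at $x=\Theta(\Delta^{-\gamma})$ this exactly cancels the factor $x^{(m-1)-(p-2)}$ gained by raising $s$, so the offending exponent is unchanged. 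The witness-tree suggestion is not an argument. The paper resolves this with a different idea that your proposal lacks: it introduces \emph{special $j$-sets} (sets of $j$ vertices whose common neighbourhood exceeds $\Delta^{j-\gamma(j-1)}$), adds cheap bad events forbidding monochromatic special sets (with $C_{\jset}=\frac{1}{(j-1)!}\Delta^{\gamma(j-1)}$ and $s_{\jset}=j-1$; decodable since the uncolored vertices all carry the colour of the one vertex left coloured), and then for each dense $H_i$ counts only bicoloured copies containing no monochromatic special set. Claim~\ref{cl:51-H}, an acyclic-orientation count exploiting $m_i\ge m$, bounds these copies by $n_i\Delta^{\gamma(n_i-2)-\frac{m_i-m}{m-1}}$, which is precisely what makes $s_{Hi}=n_i-2$ compatible with $\kappa=\Theta(\Delta^{\gamma})$ and produces the terms $k_v^{\le m}(m+1)\Delta^{\gamma}$ and $k_e^{m}(m+1)\Delta^{\gamma}$. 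Without this (or an equivalent density argument) neither inequality follows from your setup.

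A secondary problem is your per-$H$ treatment of the graphs on more than $m$ vertices. The family $\mathcal{F}$ may be infinite, with superexponentially many members on $p$ vertices, so the sum of $p\,c^{p-3}\Delta^{(m+1-p)/(m-1)}$ over $H\in\mathcal{F}$ need not be bounded by any absolute constant, and the figure $71$ is unsupported. The paper instead uses a single bad-event type keyed to connected properly bicoloured subgraphs on exactly $m+1$ vertices (at most $(m+1)4^{m+1}\Delta^{m}$ classes, independent of $\mathcal{F}$), which dominates every $H$ with more than $m$ vertices at once; for the second bound it replaces the graphs with more than $m$ edges by the finite set of trees on $m+1$ edges. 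Your large-$p$ analysis should be reorganized along one of these lines for the stated constants to be justified.
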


\begin{proof}
  Let us use the framework described in
  Section~\ref{sec:general_la_vraie} as follows. Let
  $\mathcal{F}=\set{H_1,H_2,\ldots}$. Let us also denote by $n_i$
  and $m_i$ the number of vertices and edges in the forbidden graph
  $H_i$ for each $i$ (recall $m_i \ge m$).  For convenience, we
  introduce the value $\gamma = \frac{m}{m-1}$. Let $G$ be any graph
  with maximum degree $\Delta$, and let $n$ denote its number of
  vertices. As in this application, the sets $\mathbb{F}(v)$ are
  closed upward we directly proceed to the description of the bad
  events (as $\mathbb{F}(v)$ is deduced from $\mathbb{B}(v)$), and the
  description of the required functions.

  \begin{itemize}
  \item Let $\prec$ be any total order on the vertices of
    $G$. $\NUV(\ovphi)$ returns the first uncolored vertex according to
    $\prec$.
  \end{itemize}

  \begin{itemize}
  \item Let $\mathbb{B}_{E}(v)$ be the set of bad events $\varphi$
    anchored at $v$ such that vertex $v$ belongs to a monochromatic edge
    $uv$ (in $\varphi$). Let $\class_E(v)=N(v)$. Let us partition
    $\mathbb{B}_{E}(v)$ into classes $\mathbb{B}_{E}^{u}(v)$ according
    to which edge $uv$ is monochromatic in $\varphi$, for
    $u\in\class_E(v)$. Clearly $|\class_E(v)| \le \Delta$, thus let $C_E=\Delta$.
  \end{itemize}
  From here it is clear that an allowed coloring is proper.

  \begin{itemize}
  \item The function $\USBE_E(v,\ovphi,u)$ outputs the singleton $\{v\}$
    and thus $s_E=1$. By Lemma~\ref{lem-USBE}, this
    function fulfills all the requirements.

  \item $\RBE_E(v,X,u,\varphi')$ outputs the partial coloring $\varphi \in
    \mathbb{B}_{E}^{u}(v)$ obtained from $\varphi'$ by coloring $v$ with
    color $\varphi'(u)$.
  \end{itemize}
  Following the approach of Aravind and Subramanian~\cite{AS13}, we
  extend the notion of special pairs introduced by Alon et
  al.~\cite{AMR91} to bigger sets. For any $j\ge 2$, a $j$-set $S$ of
  $G$ (i.e. a set of size $j$) is \emph{special} if the set $X =
  \bigcap_{v\in S} N(v)$ has size greater than $\Delta^{j-\gamma(j-1)}$.
  Let us define the corresponding bad events.
  \begin{itemize}
  \item For $2\le j< n$, let $\mathbb{B}_{\jset}(v)$ be the set of bad
    events $\varphi$ anchored at $v$ such that vertex $v$ belongs to a
    monochromatic special $j$-set $S$. Let $\class_\jset(v)$ be the
    set of special $j$-sets containing $v$.  Let us partition
    $\mathbb{B}_\jset(v)$ into classes $\mathbb{B}_\jset^S(v)$
    according to which special $j$-set $S\in \class_\jset(v)$ is
    monochromatic. By Claim~\ref{cl:51-spec}, the number of classes is
    at most $\frac{1}{(j-1)!}\Delta^{\gamma(j-1)} = C_\jset$.

    \begin{claim}\label{cl:51-spec} 
      Any vertex $v$ of $G$ belongs to less
      than $\frac{1}{(j-1)!}\Delta^{\gamma(j-1)}$ special $j$-sets, for any $j\ge 2$.
    \end{claim}
    \begin{proof} 
      Observe that $v$ belongs to $\Delta {\Delta-1 \choose j-1}$ stars (on
      $j+1$ vertices) centered in $N(v)$ having $j-1$ leaves in $N^2(v)$
      (first choose a center and then $j-1$ of its neighbors). Now the
      $j$ leaves of such a star are contained in at most one special
      $j$-set of $v$. On the other hand, a special $j$-set containing $v$
      covers more than $\Delta^{j-\gamma(j-1)}$ of these stars. Hence $v$
      belongs to less than $\Delta {\Delta-1 \choose j-1} \times
      \Delta^{\gamma(j-1)-j} < \frac{1}{(j-1)!}\Delta^{\gamma(j-1)}$ special
      $j$-sets.
    \end{proof}

  \end{itemize}

  From here it is clear that in an allowed coloring there will be no
  monochromatic special $j$-set.

  \begin{itemize}
  \item For $2\le j< n$, let the function $\USBE_\jset(v,\ovphi,S)$
    outputs a $(j-1)$-subset of $S$ containing $v$ ; 
    thus $s_\jset=j-1$. Again by Lemma~\ref{lem-USBE}, this
    function fulfills all the requirements.
    
  \item If $\RBE_j(v,X,S,\varphi')$ is called, then there is only one
    vertex of $S$ colored in $\varphi'$, say $w$.  Hence
    $\RBE_j(v,X,S,\varphi')$ outputs the partial coloring obtained from
    $\varphi'$ by coloring all the vertices of $S$ with $\varphi'(w)$.
  \end{itemize}
  As proposed in~\cite{AS13}, one bad event type can deal with all the graphs in $\mathcal{F}_v^{>m} \subseteq
  \mathcal{F}$ the set of forbidden graphs having more than $m$ vertices. 
  \begin{itemize}
  \item Let $\mathbb{B}_{\mathcal{F}_v^{>m}}(v)$ be the set of bad
    events $\varphi$ anchored at $v$ such that vertex $v$ belongs to a
    connected properly bicolored subgraph $I$ on $m+1$ vertices. Note
    that such subgraph $I$ of $G$ is not necessarily isomorphic to a
    graph of $\mathcal{F}_v^{>m}$. However this type of bad events
    deal with all the graphs of $\mathcal{F}$ with at least $m+1$
    vertices. Let $\class_{\mathcal{F}_v^{>m}}(v)$ be the set of all
    connected bipartite graphs $I$ on $m+1$ vertices that contain
    vertex $v$. We partition $\mathbb{B}_{\mathcal{F}_v^{>m}}(v)$ into
    classes $\mathbb{B}_{\mathcal{F}_v^{>m}}^I(v)$ according to the
    bicolored subgraph $I$. By the proof of Lemma~2.4 in~\cite{AS11}
    we have that the number of classes,
    $|\class_{\mathcal{F}_v^{>m}}(v)| \le
    (m+1)4^{m+1}\Delta^{m}=C_{\mathcal{F}_v^{>m}}$.
  \end{itemize}
  From here it is clear that in an allowed coloring there will be no properly bicolored
  copy of any $H_i \in \mathcal{F}$ with more than $m$ vertices.

  \begin{itemize}
  \item The function $\USBE_{\mathcal{F}_v^{>m}}(v,\ovphi,I)$ outputs a
    $(m-1)$-subset of $V(I)$ containing $v$ (recall $I$ is a properly
    bicolored subgraph on $m+1$ vertices), such that the two remaining
    vertices $v_1$ and $v_2$ are adjacent (and thus have distinct
    colors). Note that $s_{\mathcal{F}_v^{>m}}=m-1$. Again by Lemma~\ref{lem-USBE},
    this function fulfills all the requirements.

  \item If $\RBE_{\mathcal{F}_v^{>m}}(v,X,I,\varphi')$ is called, then there
    are only two adjacent vertices of $I$, $v_1$ and $v_2$, colored in
    $\varphi'$.  Hence $\RBE_{\mathcal{F}_v^{>m}}(v,X,I,\varphi')$ outputs
    the partial coloring obtained from $\varphi'$ by properly extending
    the 2-coloring of $v_1$ and $v_2$ to the whole $I$.
  \end{itemize}

  We define a new bad event type for each graph $H_i\in
  \mathcal{F}_v^{\le m}$, that is each graph of $\mathcal{F}$ with at
  most $m$ vertices.  Let $V_1$ and $V_2$ be the two independent sets
  partitioning $V(H_i)$.
  \begin{itemize}
  \item Let $\mathbb{B}_{Hi}(v)$ be the set of bad events $\varphi$
    anchored at $v$ such that vertex $v$ belongs to a properly
    2-colored subgraph $S$ isomorphic to $H_i\in \mathcal{F}_v^{\le
      m}$, and such that $S$ does not contain a monochromatic special
    $j$-set. Let $\class_{Hi}(v)$ be the set of all subgraphs $S$
    isomorphic to $H_i$, containing $v$, and without special $j$-set
    contained in one of the two parts of $S$. The set
    $\mathbb{B}_{Hi}(v)$ is partitioned into classes
    $\mathbb{B}_{Hi}^S(v)$ according to the bicolored copy, $S$. By
    Claim~\ref{cl:51-H} (see below), the number of classes is at most
    $ n_i \Delta^{\gamma(n_i-2) -\frac{m_i-m}{m-1}} = C_{Hi}$.
  \end{itemize}
  \begin{claim}\label{cl:51-H} 
    For any vertex $v$ of $G$, $v$ belongs to at most
    $n_i\Delta^{\gamma(n_i - 2) -\frac{m_i-m}{m-1}}$ copies of
    $H_i=(V_1,V_2,E)$ in $G$ that do not contain any special set in the
    images of $V_1$ nor in the image of $V_2$.  (That is $n_i
    \Delta^{(\gamma)(n_i-2)}$ copies for $m_i=m$ and $o(\Delta^{\gamma(n_i
      - 2)})$, otherwise.)
  \end{claim}
  \begin{proof}
    Let us consider only the copies of $H_i$ where $v$ corresponds to a
    given vertex $u$ of $H_i$. Now orient $H_i$ acyclically so that $u$ is
    the unique sink, and let us denote by $u=u_1,\ldots ,u_{n_i}$ the
    vertices of $H_i$ in such a way that for any $1\le j \le n_i$ the
    out-neighborhood of $u_j$ corresponds to its neighbors with index
    lower than $j$. Note that $d^+(u_j)\ge 1$ for all $1<
    j\le n_i$, and that $m_i = \sum_{1< j\le n_i} d^+(u_j)$.
    Observe that once $u_1,\ldots,u_{j-1}$ are set, there are at most
    $\Delta^{d^+(u_j)-\gamma(d^+(u_j)-1)}$ choices for $u_j$. This
    comes from the fact that the out-neighborhood of $u_j$ is
    monochromatic and hence cannot be a special $d^+(u_j)$-set.
    This leads to the following bound on the number of such copies of $H_i$.
    \begin{eqnarray*}
      \prod_{1<j\le n_i}\Delta^{d^+(u_j) - \gamma(d^+(u_j) -1)} & \le &\Delta^{m_i - \gamma(m_i -n_i + 1)} \\
                                                                & \le &\Delta^{(1-\gamma)m_i + \gamma(n_i - 1)} \\
                                                                & \le &\Delta^{\frac{-m_i}{m-1} + \gamma(n_i - 1)} \\
                                                                & \le &\Delta^{\frac{m-m_i}{m-1}-\gamma +\gamma(n_i - 1)}\\
    \end{eqnarray*}
    As there are $n_i$ possible choices for mapping $v$ in $H_i$, this
    concludes the claim.
  \end{proof}

  Now it is clear that an allowed coloring is a $(2,H_i)$-subgraph
  coloring for any $H_i\in \mathcal{F}$. An allowed coloring is thus a
  $(2,\mathcal{F})$-subgraph coloring.
  \begin{itemize}
  \item $\USBE_{Hi}(v,\ovphi,S)$ outputs $n_i-2$ vertices of
    $S$ including $v$ and such that the two remaining vertices, say
    $v_1$ and $v_2$, are such that $v_j\in V_j$ for $j=1,2$. Note that
    $s_{Hi}=n_i-2$. Again by Lemma~\ref{lem-USBE}, this
    function fulfills all the requirements.

  \item $\RBE_{Hi}(v,X,S,\varphi')$ outputs the partial
    coloring obtained from $\varphi'$ by properly extending the
    2-coloring of the two colored vertices of $S$ to the whole $S$.
  \end{itemize}
  Consider now
  \begin{eqnarray*}
    Q(x) & = & 1+ C_E\cdot x^{s_E}+\sum_{2\le j <n} C_{j\cdot {\rm
               Set}} \cdot x^{s_{j\cdot {\rm Set}}} + C_{\mathcal{F}_v^{>m}}
               \cdot x^{s_{\mathcal{F}_v^{>m}}} +
               \sum_{H_i\in\mathcal{F}_v^{\le m}}C_{Hi} \cdot x^{s_{Hi}}\\
         & = & 1 + \Delta x + \sum_{2\le j < n}
               \frac{1}{(j-1)!}(\Delta^{\gamma} x)^{j-1} + (m+1)4^{m+1}\Delta^{m}
               x^{m-1} \\ & & + \sum_{H_i \in \mathcal{F}_v^{\le m}} n_i \Delta^{\gamma(n_i - 2) -\frac{m_i-m}{m-1}} x^{n_i-2}\\
         & < & \Delta x + e^{\Delta^{\gamma} x} + 16(m+1) (4\Delta^\gamma x)^{m-1} + \sum_{H_i \in \mathcal{F}_v^{\le m}} n_i \left( \Delta^\gamma x\right)^{n_i - 2} \Delta^{-\frac{m_i-m}{m-1}}\\
  \end{eqnarray*}
  By setting $X= \frac{1}{4\Delta^{\gamma}}$, as
  $\Delta^\frac{-1}{m-1}<1$ and as for $H_i \in \mathcal{F}_v^{\le m}$ we have $3\le n_i
  \le m$, one obtains that
  \begin{eqnarray*}
    \frac{Q(X)}{X} & < & 4\Delta^{\gamma} \left( \frac14 + e^\frac14 + 16(m+1) + \frac14  k_v^{\le m} \cdot m \right)\\
  \end{eqnarray*}
  By Theorem~\ref{th:main-final}, $G$ admits an allowed coloring
  (hence a $(2,\mathcal{F})$-subgraph coloring) with $\ceil{Q(X)/X} <
  (k_v^{\le m}+71)(m+1)\Delta^{\gamma}$ colors. This concludes the
  proof of the first statement of the theorem.

  For the second statement we proceed similarly but there are two differences.
  \begin{enumerate}[(1)]
  \item Recall the partition of $\mathcal{F}$ into
    $\mathcal{F}_e^m$ and $\mathcal{F}_e^{>m}$ according to the number
    of edges. We replace the set $\mathcal{F}_e^{>m}$ by the set
    $\mathcal{T}_e^{m+1}$ of all trees on exactly $m+1$ edges. As every
    graph in $\mathcal{F}_e^{>m}$ contains a $(m+1)$-edge tree, a
    $(2,\mathcal{F}_e^m\cup \mathcal{T}_e^{m+1})$-subgraph coloring is
    also a $(2,\mathcal{F})$-subgraph coloring.
  \item All the graphs $\mathcal{F}_e^m\cup\mathcal{T}_e^{m+1}$
    are treated similarly by assigning each of them a specific bad
    event. There is no more the bad event type $\mathcal{F}_v^{>m}$.
  \end{enumerate}
  This yields to the following $Q(x)$.
  \begin{eqnarray*}
    Q(x) & = & 1+ C_E\cdot x^{s_E}+\sum_{2\le j <n}
               C_{j\cdot {\rm Set}} \cdot x^{s_{j\cdot {\rm Set}}} +
               \sum_{H_i\in\mathcal{F}_e^{m}\cup\mathcal{T}_e^{m+1}}C_{Hi} \cdot x^{s_{Hi}}\\ 
         & = & 1 + \Delta x + \sum_{2\le j < n}
               \frac{1}{(j-1)!}(\Delta^{\gamma} x)^{j-1} + \sum_{H_i \in \mathcal{F}_e^{m}\cup\mathcal{T}_e^{m+1}}
               n_i \left( x\Delta^\frac{m_i}{m_i-1} \right)^{n_i - 2}\\ 
         & < & \Delta x + e^{\Delta^{\gamma} x} + \sum_{H_i
               \in \mathcal{F}_e^{m}\cup\mathcal{T}_e^{m+1}} n_i \left( x\Delta^\frac{m_i}{m_i-1} \right)^{n_i -
               2}\\ 
         & < & \Delta x + e^{\Delta^{\gamma} x} + \sum_{H_i
               \in \mathcal{F}_e^{m}} n_i \left( \Delta^\gamma x\right)^{n_i - 2}+ \sum_{H_i
               \in \mathcal{T}_e^{m+1}} n_i \left( \Delta^\gamma x\right)^{n_i - 2}
               \Delta^{\frac{-1}{m-1}}\\
  \end{eqnarray*}
  By setting $X= \frac{1}{\Delta^{\gamma}}$ and as $3\le n_i \le m_i +1$, one obtains that
  \begin{eqnarray*}
    \frac{Q(X)}{X} & < & \Delta^{\gamma} \left( \Delta^{\frac{-1}{m-1}} + e + k_e^{m}(m+1) + |\mathcal{T}_e^{m+1}| \cdot (m+2) \Delta^\frac{-1}{m-1} \right)\\
                   & < & \Delta^{\gamma} \left( k_e^{m}(m+1) + e + o(1) \right)\\
  \end{eqnarray*}
  By Theorem~\ref{th:main-final}, $G$ admits an allowed coloring (hence
  a facial non-repetitive coloring) with $\ceil{Q(X)/X} < \left( k_e^{m} + 1 +
    o(1) \right) (m+1)\Delta^{\gamma}$ colors. This concludes the proof of
  the second statement of the theorem.
\end{proof}

\begin{remark}
  For given instances of $\mathcal{F}$, tighter bounds can be inferred
  with the general method. For example for star colorings of graphs,
  which correspond to $(2,\{P_4\})$-subgraph coloring, it is not
  necessary to have bad events for special sets. It suffice to have
  one bad event ensuring that the coloring is proper (with $C_1=\Delta$
  and $s_1=1$), and one bad event to avoid bicolored $P_4$'s (with
  $C_2=2\Delta(\Delta-1)^2$ and $s_2=2$). This yields to the bound
  $2\sqrt{2}\Delta^\frac32 + \Delta -\sqrt{8\Delta}+1$ (by setting
  $X=1/(\sqrt{2\Delta}(\Delta-1))$), similar to the one in~\cite{EP12}.
\end{remark}

\section{Conclusion}\label{sec:ccl}

One should note that the framework presented in
Section~\ref{sec:general_la_vraie} may, in some cases, benefit from
some sophistication. The version we presented here seems to be a good
compromise between efficiency and clarity for the applications we
considered.  We have seen in Subsection~\ref{subsec:thue-edge} how, at
any step $i$, one can get benefit from $\ovphi_{i-1}$ to decrease the
values $C_j$. One could also take into account the order in which the
vertices of $\ovphi_{i-1}$ have been colored. For example, if $(u,v)$
is a special pair (as in Subsection~\ref{ssec:acyclic}) and $u$ has
been colored after $v$ to obtain $\varphi_{i-1}$, then one could be sure
that the colors of $u$ and $v$ are distinct. Thus one would not have
to consider bad events where $u$ and $v$ are colored the same. One
could thus imagine that all the functions presented in
Subsection~\ref{subsec:requ-fun} could depend on the ordering $\pi$ in which
the vertices of $\ovphi_{i-1}$ were colored.

Finally an interesting way of improving this framework would be
handling algorithms where the costs of a given bad event may vary. For
example, one can imagine that, for some vertices, a type $j$ bad event
costs $C_j$, while for some other vertices the cost is $C'_j$. A
simple way to analyze this is to set the cost of each type $j$ bad
event to $\max\{C_j,C'_j\}$. We wonder whether there exists a better
approach.



\addcontentsline{toc}{section}{References}

\newpage
\appendix

\section{The smooth implicit-function schema}\label{sec:meir_moon}

In Section~\ref{sec:general}, we prove Theorem~\ref{thm:nb_of_rec} by
using a machinery provided by a theorem of Meir and
Moon~\cite{MM89} (see Theorem~\ref{thm:meir_moon}) on the
singular behaviour of generating functions defined by a \emph{smooth
  implicit-function schema}.

\begin{definition}[Smooth implicit-function schema {\cite[Definition~VII.4, p.~467]{FS08}}]\label{def:SIFS}
  Let $A(y)$ be a function analytic at~0, $A(y)=\sum_{t\ge 0}a_ty^t$,
  with $a_0=0$ and $a_t\geq 0$. The function is said to belong to the
  \emph{smooth implicit-function schema} if there exists a bivariate
  function $G(y,z)$ such that $A(y) = G(y,A(y))$, where $G(y,z)$ satisfy the following conditions:
  \begin{enumerate}[$(a)$]
  \item $G(y,z)=\sum_{m,n\ge 0}g_{m,n}y^mz^n$ is analytic in a domain $|y| < R$ and $|z| < S$, for some $R,S>0$.
  \item The coefficients of $G$ satisfy 
    \begin{eqnarray*}
      g_{m,n}\ge 0,~g_{0,0}=0,~ g_{0,1}\neq 1,\\
      g_{m,n}>0 \; \mbox{for some $m\geq 0$ and some $n \geq 2$}.      
    \end{eqnarray*}
  \item There exist two numbers $r$ and $s$, such that $0 < r < R$ and $0 < s < S$, satisfying the system of equations\footnote{ $G_y$ (resp. $G_z$) denotes the derivative of $G$ with respect to its first (resp. second) variable.}
    $$G(r, s)=s,~G_z(r, s)=1,\quad \mbox{with} \;\; r<R,~s<S$$
    which is called the \emph{characteristic system}.
  \end{enumerate}
\end{definition}

\begin{theorem}[Meir and Moon~\cite{MM89},{\cite[Theorem~VII.3, p. 468]{FS08}}]\label{thm:meir_moon}
  Let $A(y)$ belong to the smooth implicit-function schema defined by
  $G(y,z)$ with $(r,s)$ the positive solution of the characteristic
  system. Then, $A(y)$ converges at $y=r$, where it has a square-root singularity,
  $$\lim_{y\to r}A(y)=s-\gamma\sqrt{1-\frac y r} + O\paren{1-\frac y r}, \quad \mbox{with} \;\; \gamma=\sqrt{\frac{2rG_y1(r,s)}{G_{zz}(r,s)}},$$
  the expansion being valid in a $\Delta$-domain. In addition, if
  $A(y)$ is aperiodic, then $r$ is the unique dominant singularity of $A$ and the coefficient satisfy
  $$\lim_{t\to \infty}[y^t]A(y) = \frac\gamma{2\sqrt{\pi t^3}}r^{-t}\paren{1+O(t^{-1})}.$$

\end{theorem}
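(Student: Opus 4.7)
The plan is to combine the analytic implicit function theorem with singularity analysis (in the style of Flajolet--Odlyzko), which is the canonical route for asymptotics of implicitly defined generating functions. First I would show that $A(y)$ is analytic in some disk around $0$ and can be continued along the positive real axis up to a well-defined point. By condition $(b)$, $g_{0,0}=0$ and $g_{0,1}\neq 1$, so the bivariate analytic implicit function theorem applied to the equation $z-G(y,z)=0$ at $(0,0)$ produces a unique analytic solution $A(y)$ with $A(0)=0$. Since the coefficients of $A$ are nonnegative, $A$ is increasing on its interval of convergence, and so is $G_z(y,A(y))$; the continuation breaks exactly where $G_z(y,A(y))=1$. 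Using the positivity conditions one checks that this first happens at $y=r$ with $A(r)=s$, which is precisely the characteristic system.

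Next, the local behaviour of $A$ near $y=r$ would be obtained from a Taylor expansion of $F(y,z):=z-G(y,z)$ around $(r,s)$. Since $F(r,s)=0$ and $F_z(r,s)=1-G_z(r,s)=0$, the expansion reads
\[
0 \;=\; F(y,A(y)) \;=\; -rG_y(r,s)\paren{1-\tfrac{y}{r}} + \tfrac{1}{2}G_{zz}(r,s)\paren{A(y)-s}^2 + \text{higher order}.
\]
Solving this quadratic and selecting the branch making $A$ real and approaching $s$ from below as $y\to r^-$ gives
\[
A(y) \;=\; s - \gamma\sqrt{1-\tfrac{y}{r}} + O\paren{1-\tfrac{y}{r}},
\qquad \gamma = \sqrt{\tfrac{2rG_y(r,s)}{G_{zz}(r,s)}}.
\]
The verification that $G_y(r,s)>0$ and $G_{zz}(r,s)>0$, so that $\gamma$ is a positive real, is the delicate point and relies on condition $(b)$ (presence of a nonzero coefficient $g_{m,n}$ with $n\geq 2$) together with positivity of $(r,s)$.

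To convert the local expansion into a coefficient asymptotic, I would invoke the Flajolet--Odlyzko transfer theorem for square-root singularities. For this, one must first check that $A$ extends analytically into a $\Delta$-domain at $r$: on any compact subset of such a domain avoiding $r$ itself, the implicit function theorem continues to apply because $G_z(y,A(y))\neq 1$ there, so the analytic continuation exists by standard monodromy arguments. The transfer theorem then converts the singular term $-\gamma\sqrt{1-y/r}$ into $[y^t]\sim \frac{\gamma}{2\sqrt{\pi t^3}}r^{-t}$, while the $O(1-y/r)$ error transfers to a contribution of smaller order $O(t^{-5/2}r^{-t})$, giving the claimed $(1+O(t^{-1}))$ relative error.

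The final step is to upgrade this to a \emph{unique} dominant singularity using aperiodicity, which is the part I expect to be the main obstacle. If $A$ had another singularity $re^{i\theta}$ with $\theta\neq 0$, then (since singularities of $A$ only come from the vanishing of $1-G_z(y,A(y))$) one could derive an equality of the form $G_z(re^{i\theta},A(re^{i\theta}))=1$; combined with the triangle inequality applied to the nonnegative power series defining $G_z$ evaluated at $A$, equality in the triangle inequality would force the supports of the involved coefficient sequences to be contained in an arithmetic progression, contradicting aperiodicity of $A$. Hence $r$ is the unique singularity on $|y|=r$, only its contribution survives in the asymptotic, and the stated formula follows.
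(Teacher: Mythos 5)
The paper does not prove this statement at all: Theorem~\ref{thm:meir_moon} is imported verbatim from Meir and Moon~\cite{MM89} and from Flajolet--Sedgewick~\cite[Theorem~VII.3]{FS08}, and is only \emph{used} (in the proof of Theorem~\ref{thm:nb_of_rec}), so there is no internal proof to compare yours against. Judged on its own, your outline is the canonical argument from Section~VII.4 of \cite{FS08} and its overall architecture is sound: implicit function theorem at $(0,0)$ using $g_{0,0}=0$, $g_{0,1}\neq 1$; failure of analytic continuation exactly when $G_z(y,A(y))=1$; the quadratic local equation $0=-G_y(r,s)(r-y)+\tfrac12 G_{zz}(r,s)(A(y)-s)^2+\cdots$ giving the square-root branch; transfer; and a Daffodil-lemma-type argument, $|A(re^{i\theta})|<A(r)=s$ hence $|G_z(re^{i\theta},A(re^{i\theta}))|<G_z(r,s)=1$, to rule out other singularities on $|y|=r$ under aperiodicity.

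The places you flag as delicate are indeed where the substance lies, and your sketch leaves them as assertions rather than proofs. First, "the continuation breaks exactly at $y=r$ with $A(r)=s$" needs an actual argument: one must show the radius of convergence $\rho$ of $A$ is finite, that $A(\rho^-)$ is finite (this uses positivity of the coefficients of $G$ to bound $A$ near $\rho$, otherwise $A$ could blow up before $G_z$ reaches $1$), that the pair $(\rho,A(\rho^-))$ satisfies the characteristic system, and that the characteristic system has a \emph{unique} positive solution inside the domain so that $\rho=r$ and $A(\rho^-)=s$; none of this is automatic from monotonicity alone. Second, $G_{zz}(r,s)>0$ follows from $r,s>0$ and the existence of $g_{m,n}>0$ with $n\ge 2$, but $G_y(r,s)>0$ requires the extra (easy, yet not stated) observation that some $g_{m,n}>0$ with $m\ge 1$, e.g.\ because $A$ is nonconstant ($A(0)=0$, $A(r)=s>0$). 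Third, for the uniqueness of the dominant singularity you must prove the \emph{strict} inequality $|A(re^{i\theta})|<A(r)$ for $\theta\neq 0$ from aperiodicity of $A$ (not merely invoke equality cases of the triangle inequality for $G_z$), and then rerun the implicit function theorem at $(re^{i\theta},A(re^{i\theta}))$ to continue $A$ there; and to apply transfer you also need analytic continuation to a full $\Delta$-domain, which is obtained from this boundary analysis plus compactness, not from "standard monodromy" alone. These are exactly the technical steps carried out in \cite{MM89} and \cite{FS08}; as written, your text is a faithful roadmap of that proof rather than a complete one.
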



\end{document}